\newtheorem{definition}{Definition}
\newtheorem{theorem}{Theorem}
\newtheorem{lemma}[theorem]{Lemma}
\newtheorem{corollary}[theorem]{Corollary}
\newlist{conditions}{enumerate}{2}
\setlist[conditions,1]{label=\emph{(\arabic*)}, ref={\arabic*}}
\setlist[conditions,2]{label=\emph{(\arabic*)}, ref={\arabic{conditionsi}.\roman*}}
\renewcommand{\phi}{\varphi}
\newcommand{\fin}{\textup{Fin}}
\newcommand{\tower}{\textup{tower}}
\newcommand{\conn}{\text{conn}}
\newcommand{\EXP}[1][d]{#1\textnormal{-}\!\operatorname{exp}}
\let\@@pmod\pmod
\DeclareRobustCommand{\pmod}{\@ifstar\@pmods\@@pmod}
\def\@pmods#1{\mkern4mu({\operator@font mod}\mkern 6mu#1)}
\newcommand{\logic}[1]{\textsc{#1}}
\newcommand{\logl}{\logic{l}}
\newcommand{\FO}{\logic{fo}}
\newcommand{\FOmod}{\logic{fo+mod}}
\newcommand{\MSO}{\logic{mso}}
\newcommand{\CMSO}{\logic{cmso}}
\newcommand{\ordinv}[1]{\textnormal{{\small \textless}-inv-}#1}
\newcommand{\oiFO}{\ordinv{\FO{}}}
\newcommand{\oiMSO}{\ordinv{\MSO{}}}
\newcommand{\ar}{\operatorname{ar}}
\newcommand{\eleq}[1][]{\equiv_{#1}}
\newcommand{\foeleq}[1][]{\eleq[#1]^{\FO{}}}
\newcommand{\msoeleq}[1][]{\eleq[#1]^{\MSO{}}}
\newcommand{\leleq}[1][]{\eleq[#1]^{\logl}}
\newcommand{\types}[1][\sigma,q,d]{\mathcal{T}_{#1}}
\newcommand{\ctypes}[1][\sigma,q,d]{\mathcal{T}^{\conn}_{#1}}
\newcommand{\atleq}{\preceq_{\text{atomic}}}
\newcommand{\maxeq}[1]{\equiv_{{\wedge}#1}}
\newcommand{\redmax}[2]{[#1]_{{\wedge}#2}}
\newcommand{\modeq}[1]{\equiv_{\mathrm{mod}\,#1}}
\newcommand{\redmod}[2]{[#1]_{\mathrm{mod}\,#2}}
\newcommand{\size}[1]{\|#1\|}
\newcommand{\qr}[1]{\operatorname{qr}(#1)}
\newcommand{\qad}[1]{\operatorname{qad}(#1)}
\newcommand{\st}{\mathbin |}
\newcommand{\absval}[1]{\vert #1 \vert}
\newcommand{\limplies}{\rightarrow}
\newcommand{\biglor}{\bigvee}
\newcommand{\bigland}{\bigwedge}
\newcommand{\leqsym}{\logic{\ensuremath{\leq}}}
\newcommand{\I}{\mathcal{I}}
\newcommand{\rela}[2]{\ensuremath{{#1|}_{#2}}}
\newcommand{\relao}[2]{\rela{#1\!}{#2}}
\DeclareMathOperator{\tp}{tp}
\DeclareMathOperator{\rtp}{rtp}
\DeclareMathOperator{\reach}{reach}
\newcommand{\class}[1]{\mathcal{#1}}
\newcommand{\struct}[1]{\mathfrak{#1}}
\newcommand{\AS}{\struct{A}}
\newcommand{\BS}{\struct{B}}
\newcommand{\CS}{\struct{C}}
\newcommand{\FS}{\struct{F}}
\newcommand{\HS}{\struct{H}}
\newcommand{\KS}{\struct{K}}
\newcommand{\TS}{\struct{T}}
\newcommand{\GG}{\struct{G}}
\newcommand{\iso}{\cong}
\newcommand{\td}{\operatorname{td}}
\DeclareMathOperator{\tdroot}{roots}
\DeclareMathOperator{\dist}{dist}
\newcommand{\num}{\operatorname{num}}
\newcommand{\height}{\operatorname{height}}
\newcommand{\set}[1]{\{#1\}}
\newcommand{\setc}[2]{\{#1 \st #2\}}
\newcommand{\lrsetc}[2]{\left\{#1 \st #2\right\}}
\newcommand{\N}{\mathbb{N}}
\newcommand{\Npos}{\mathbb{N}^{+}}
\newcommand{\intersect}{\cap}
\newcommand{\union}{\cup}
\newcommand{\bigunion}{\bigcup}
\newcommand{\disunion}{\sqcup}
\newcommand{\bigo}{O}
\newcommand{\nexp}[1][d]{\ensuremath{#1\textnormal{-}\textsc{exp}}}
\newcommand{\IONE}{\mathbf{1}}
\newcommand{\ITWO}{\mathbf{2}}
\newcommand{\IONEL}{\IONE_L}
\newcommand{\IONER}{\IONE_R}
\newcommand{\ITWOL}{\ITWO_L}
\newcommand{\ITWOR}{\ITWO_R}
\newcommand{\enc}{\operatorname{enc}} \newcommand{\inc}{\textsc{inc}}
\newcommand{\dec}{\textsc{dec}} \newcommand{\halt}{\textsc{halt}}
\begin{document}

\title{Succinctness of Order-Invariant Logics on\\ Depth-Bounded
  Structures\footnote{A  preliminary version of this paper was presented at the
    \textsc{mfcs} 2014 conference~\cite{EickmeyerEH2014}.}} 

\author[1]{Kord Eickmeyer}
\affil[1]{TU Darmstadt, Germany}
\author[2]{Michael Elberfeld}
\affil[2]{RWTH Aachen University, Germany}
\author[3]{Frederik Harwath}
\affil[3]{Goethe University Frankfurt, Germany}

\maketitle

\begin{abstract}
  We study the expressive power and succinctness of order-invariant
  sentences of first-order (FO) and monadic second-order (MSO) logic
  on structures of bounded tree-depth. Order-invariance is undecidable
  in general and, thus, one strives for logics with a decidable syntax
  that have the same expressive power as order-invariant sentences.
  We show that on structures of bounded tree-depth, order-invariant FO
  has the same expressive power as FO. Our proof technique allows for
  a fine-grained analysis of the succinctness of this translation. We
  show that for every order-invariant FO sentence there exists an FO
  sentence whose size is elementary in the size of the original
  sentence, and whose number of quantifier alternations is linear in
  the tree-depth.  We obtain similar results for MSO. It is known that
  the expressive power of MSO and FO coincide on structures of bounded
  tree-depth.  We provide a translation from MSO to FO and we show
  that this translation is essentially optimal regarding the formula
  size.  As a further result, we show that order-invariant MSO has the
  same expressive power as FO with modulo-counting quantifiers on
  bounded tree-depth structures.
\end{abstract}

\section{Introduction}
\label{sec:introduction}

Understanding the \emph{expressivity} of logics on finite structures---the
question of which properties are definable in a certain logic---plays an
important role in database and complexity theory. In the former, logics are used
to formulate queries; in the latter, they describe computational
problems. Moreover, besides just studying a logic's expressivity, understanding
its \emph{succinctness}---the question of how complex definitions of properties
such as queries and problems must be---is a requirement towards (theoretical)
expressivity results of (potential) practical importance. The present work
studies the succinctness of first-order logic (\FO{}) as well as its
succinctness compared to extensions allowing for the use of a linear order and
set quantifiers. This extends and refines recent studies on the expressivity of
these logics~\cite{BenediktSegoufin2009,ElberfeldGT12} on restricted classes of
structures. The structures we consider have bounded tree-depth, which is a graph
invariant that measures how far a graph is from being a star in a similar way as
tree-width measures how far a graph is from being a tree. Our results are
summarised by Figure~\ref{fig:results}.

\begin{figure}[t]
  \begin{center}
    \begin{tabular}{c @{~~~~~~~} c @{~~~~~~~} c @{~~~~~~~} c}
      \toprule
      $\phi \in$
      & \oiFO{} & \MSO{} & \oiMSO
      \\
      \midrule
      $\psi \in$ & \FO{} & \FO{} & \FOmod
      \\
      $\size{\psi}$ & $\nexp[d](q)$ &
      $\nexp[d](q)$ & non-elementary
      \\
      $\qad{\psi}$ & $\bigo(d)$ & $\bigo(d)$ & $\bigo(d)$\\
      \bottomrule
    \end{tabular}
    \end{center}
  \caption{Summary of our results: A formula $\phi$ of quantifier
    rank $q$ is translated into a formula $\psi$ that is equivalent
    to $\phi$ on structures of tree-depth at most $d$.}
  \label{fig:results}
\end{figure}

In both database and complexity theory, one often assumes that structures come
with a linear order and formulae are allowed to use this order as long as the
properties defined by them do not depend on the concrete interpretation of the
order in a structure. Such formulae are called \emph{order-invariant}. Since
testing order-invariance for given \FO{}-formulae is undecidable in general, one
tries to find logics that have the same expressive power as order-invariant
formulae, but a decidable syntax. Several examples prove that order-invariant
\FO{}-formulae (\oiFO{}) are more expressive than \FO{}-formulae without access
to orders, cf.~\cite{Schweikardt2013}.  A common feature of these separating
examples is that their Gaifman graphs contain large cliques, making them rather
complicated from the point of view of graph structure theory. 

For tree structures, on the other hand, \cite{BenediktSegoufin2009} showed that
the expressivity of \FO{} and \oiFO{} coincide. Following this example, 
we show that on structures of tree-depth at most $d$ each \oiFO{}-sentence can be translated to an \FO{}-sentence
whose size is $d$-fold exponential in the size of the original sentence (Theorem~\ref{thm:oifo-eq-fo}). The importance of the expressivity result is
highlighted by the fact that order-invariance is undecidable even on structures
of tree-depth at most $2$ (Theorem~\ref{thm:ordinv-undecidable}). 

A logic that is commonly studied from the perspectives of algorithm design and
language theory is monadic second-order logic (\MSO{}), which extends
\FO{}-formulae by the ability to quantify over sets of elements instead of just
single elements. While it has a rich expressivity that exceeds that of \FO{}
already on word structures, the expressive powers of \FO{} and \MSO{} coincide
on any class of structures whose tree-depth is bounded~\cite{ElberfeldGT12} by a constant $d$. We
refine this by presenting a translation into \FO{}-formulae of $d$-fold exponential size
(Theorem~\ref{thm:mso-eq-fo}). We prove that this translation is essentially optimal regarding the formula size
(Theorem~\ref{thm:mso-lower-bound}).
Beside the succinctness results, we prove that \oiMSO{} has the same
expressive power as \FOmod{}, the extension of \FO{} by arbitrary
first-order modulo-counting quantifiers, for structures of bounded tree-depth (Theorem~\ref{thm:oimso-eq-fomod}). 

Our results also have implications for \FO{} itself. They imply that the
quantifier alternation hierarchy for \FO{} of \cite{ChandraH1982} collapses on
structures of bounded tree-depth, whereas it is shown in~\cite{ChandraH1982} to
be strict on trees of unbounded height. For structures of bounded tree-depth we
are able to turn any \FO-formula into a formula whose size is bounded by the
quantifier depth of the original formula and whose quantifier alternation depth
is bounded by a linear function in the tree-depth.

A recurring theme in the study of \FO{}, \MSO{}, and their variants is the
question of which graph-theoretical properties can be defined using formulae of
these logics. The main motivation behind these questions lies in the fact that
access to certain tree-decompositions or embeddings of the structure can be used
as a proof ingredient for translating formulae. Independent of the results
stated above, we prove that, for structures of bounded tree-depth, it is
possible to define tree-decompositions of bounded width \emph{and height} in
\FO{} (cf. Section~\ref{sec:canondecomp}).

\paragraph{Proof techniques}

Our proofs use techniques from finite model theory, in particular interpretation
arguments, logical types, and games. Compared to prior works
like~\cite{ElberfeldGT12}, we enrich the application of these techniques by a
quantitative analysis, thereby obtaining succinct translations instead of just
equal expressivity results. The proofs of \cite{ElberfeldGT12} use an involved
constructive variant of the Fefer\-man--Vaught composition theorem, which
complicates a straightforward analysis of the formula size in the translation
from \MSO{} to \FO{}. We also use composition arguments, but we get along with
an easier non-constructive variant. There is another proof of the result of
\cite{ElberfeldGT12} in \cite{GajarskyH2012}, but it relies on involved
combinatorial insights that seem unsuited for both a tight analysis of
succinctness as well as an adaptation to the ordered setting.

The results of \cite{BenediktSegoufin2009} about the expressivity of
\oiFO{} on trees use automata-theoretic and algebraic methods. Since
these methods seem unsuited to obtain succinct formula translations,
we apply and develop techniques that are mainly based on using games:
In order to translate \oiFO-sentences into \FO-sentences, we first
restrict our attention to a certain kind of linear ordering that is
based on the logical types of recursively-defined substructures. Since
the \FO{}-type of ordered structures turns out to be \FO-definable
in the original (unordered) structures,
we are able to prove a
succinct translation from \oiFO{} to \FO.

In order to translate \oiMSO-sentences into \FOmod-sentences, the proof
structure is similar, but we need to add a ``pumping lemma'' for \oiMSO, which
proves the limited expressive power of \oiMSO{} on the recursively considered
substructures.

\paragraph{Organisation of this paper} The paper continues with a background
section and, then, the results related to \oiFO{}, \MSO{}, and \oiMSO{} are
proved in Sections~\ref{sec:oifo}, \ref{sec:mso}, and \ref{sec:oimso},
respectively. Tree-decompositions for structures of bounded tree-depth are
handled in Section~\ref{sec:canondecomp}.

\section{Background}
\label{sec:background}

In the present section, we review definitions and terms related to logical
formulae and structures as well as the notion of tree-depth.

\paragraph{General notation}

The sets of natural numbers with and without $0$ are denoted, respectively, by
$\N$ and $\Npos$.  Let $[i,j]:=\set{i, \ldots, j}$ for all $i,j\in \N$ with
$i\leq j$, and let $[j]:=[1,j]$. We define the \emph{$d$-fold exponential
function} $\EXP[d](n)$ recursively by $\EXP[0](n) := n$, and $\EXP[(d+1)](n) :=
2^{\EXP[d](n)}$.  The \emph{class of functions that grow at most $d$-fold
exponentially} is $\nexp[d] := \{ f:\N \to \N \mathbin{\st} f(n) \leq
\EXP[d](n^c)\text{ for some }c \in \N\text{ and all }n > c \}$. If we say that a
relation is an \emph{order}, we implicitly assume that it is linear. Thus an
order is an antisymmetric, transitive, reflexive and total binary relation.

\paragraph{Logic}

For a reference on notation and standard methods in finite model theory, we
refer to the book of \cite{Libkin2004}.  We denote structures by Fraktur
letters $\AS, \BS, \CS, \ldots$ and their universes by the
corresponding latin letters $A, B, C, \ldots$.  Besides the standard logics
$\FO{}$ and $\MSO{}$, we also consider the logic $\FOmod{}$ that is obtained
from $\FO{}$ by allowing the use of \emph{modulo-counting quantifiers}
$\exists^{i \pmod*{p}}$ for each $i\in \N$ and $p\in \Npos$.  The meaning of
these quantifiers is that $\AS \models \exists^{i \pmod*{p}} x\,
\phi(x,\bar{y})$ iff $|\set{b \in A : \AS \models \phi(b,\bar{a})}| \equiv i
\pmod*{p}$, where $\AS$ is a structure and $\bar{a}$ is a tuple of its elements.

We write $\qr{\phi}$ for the \emph{quantifier rank} and $\size{\phi}$ for the
\emph{size} (or \emph{length}) of a formula~$\phi$.  The \emph{quantifier
alternation depth} $\qad{\phi}$ of a formula $\phi$ in \emph{negation normal
form} (\textsc{nnf}, i.e. all negations of $\phi$ occur directly in front of
atomic formulae) is the maximum number of alternations between $\exists$- and
$\forall$-quanti\-fiers on all directed paths in the syntax tree of $\phi$. If
$\phi$ is not in \textsc{nnf}, we first find an equivalent formula $\phi'$ in
\textsc{nnf} using a fixed conversion procedure and then define $\qad{\phi} :=
\qad{\phi'}$.  If $\Phi$ is a set of formulae, we let $\size{\Phi} :=
\max_{\phi\in \Phi} \size{\phi}$ and $\qad{\Phi} := \max_{\phi\in \Phi}
\qad{\Phi}$.

For any logic $\logl\in \set{\FO{},\FOmod{}, \MSO{}}$ and $q\in \N$, we write
$\AS\leleq[q] \BS$ for $q \in \mathbb{N}$ if $\sigma$-structures $\AS$ and $\BS$
satisfy the same $\logl[\sigma]$-sentences of quantifier rank at most $q$. The
$\leleq[q]$-equivalence class of $\AS$ is its \emph{$(\logl,q)$-type} and is
denoted by $\tp_{\logl,q}(\AS)$.  If the logic $\logl$ has been fixed or the
concrete logic is not important for the discussion, we omit it in this and
similar notation.

For a signature $\sigma$, we denote by $\sigma^{\leq}$ the signature $\sigma
\cup \{{\leq}\}$, where ${\leq} \not\in \sigma$ is a binary relation symbol. An
\emph{ordered $\sigma^{\leq}$-structure} is a $\sigma^{\leq}$-structure $\AS$
where $\leqsym^\AS$ is an order on $\AS$. An \emph{ordered expansion}
$(\AS,\preceq)$ of a $\sigma$-structure $\AS$ is an expansion of $\AS$ to an
ordered $\sigma^{\leq}$-structure. A sentence $\phi \in \FO{}[\sigma^{\leq}]$ is
\emph{order-invariant} on a class $\class{C}$ of structures if for all
$\sigma$-structures $\AS \in \class{C}$ and all ordered expansions
$(\AS,\preceq_1)$ and $(\AS,\preceq_2)$ of $\AS$ we have $(\AS, \preceq_1)
\models \phi$ iff $(\AS, \preceq_2) \models \phi$.  If $\class{C}$ is not
otherwise stated, we assume $\class{C}$ to be the class of all finite
structures. The set of all order-invariant $\phi \in \FO{}[\sigma, \leqsym]$ is
denoted by $\oiFO{}[\sigma]$, and for such a $\phi$ and a $\sigma$-structure
$\AS$ we write $\AS \models_{\leqsym} \phi$ if $(\AS, \preceq) \models \phi$ for
some (equivalently, for every) ordered expansion $(\AS,\preceq)$ of $\AS$;
$\oiMSO{}$ is defined in the same way.

The \emph{restriction} of a binary relation $R$ on a set $M$ to a subset $N
\subseteq M$ is the relation $\rela{R}{N}:=\{(x,y) \in R : x,y \in N\}$. Note
that a substructure of an ordered structure is again an ordered structure.  For
two linear orders $\preceq_1$ and $\preceq_2$ on disjoint sets $M_1$ and $M_2$,
we define a linear order $\preceq_1 + \preceq_2$ on $M_1 \union M_2$, the
\emph{(ordered) sum} of $\preceq_1$ and $\preceq_2$, as $\preceq_1 \cup
\preceq_2 \cup\, (M_1 \times M_2)$.

If $\phi(\bar y)$ is a formula and $\psi(\bar x, z)$ is a formula with at least
one free variable $z$, then $\rela{\phi}{\psi}(\bar x,\bar y)$ is the
\emph{relativisation of $\phi$ to $\psi$}. We construct $\rela{\phi}{\psi}$ by
replacing subformulae $\exists x\, \chi$ and $\forall x\,\chi$ by $\exists
x\,(\psi(\bar y,x) \land \rela{\phi}{\psi})$ and $\forall z\,(\psi(\bar y, x)
\limplies \rela{\phi}{\psi})$, respectively. Note that $\qad{\phi|_\psi} =
\qad{\phi}$ if $\psi$ is an existential formula; in particular, $(\psi(\bar y,
x) \limplies \rela{\phi}{\psi}) \equiv (\lnot\psi(\bar y, x) \lor
\rela{\phi}{\psi})$ where, in this case, $\lnot\psi(\bar y,x)$ is equivalent to
a universal formula.

We transfer graph theoretic notions from graphs to general structures via the
notion of Gaifman graphs. The \emph{Gaifman graph} $\GG(\AS)$ of a structure
$\AS$ is the simple undirected graph with vertex set $A$ containing an edge
between $x,y\in A$ iff $x \neq y$ and $x$ and $y$ occur together in a tuple in
one of the relations of $\AS$. The \emph{distance} $\dist_{\AS}(a,b)$ between
elements $a,b$ of $\AS$ is their distance in $\GG(\AS)$, i.e. the length of a
shortest path between $a$ and $b$ in $\GG(\AS)$.  Similarly, notions such as
\emph{connectivity} and (connected) \emph{components} of $\AS$ are defined. Note
that the edge relation of the Gaifman graph is definable by an existential
formula $\varphi_E(x,y)$, and this can be used to obtain, for every $\ell \geq
0$, an existential formula $\dist_{\leq \ell}(x,y)$ such that $\AS \models
\dist_{\leq \ell}(a,b)$ iff $\dist_{\AS}(a,b) \leq \ell$.

\paragraph{Encoding information about elements in extended
  signatures}

In our proofs we will repeatedly remove single elements $r$ from
structures $\AS$ and encode information about the relations between
$r$ and the remaining elements into an expansion $\AS^{[r]}$ of the
structure $\AS \setminus r$ (which is the substructure of $\AS$
induced on the elements different from $r$). We do this in such a way
that the $q$-type of $\AS$ is determined by the $q$-type of
$\AS^{[r]}$ together with what we call the atomic type of $r$ in
$\AS$.

The \emph{atomic type} $\alpha(\AS,a)$ of an element $a$ of a $\sigma$-structure
$\AS$ is the set of all $R\in \sigma$ such that $(a, \ldots, a)\in R^{\AS}$
(where the tuple $(a, \ldots, a)$ has length $\ar(R)$). If no confusion seems
likely we omit $\AS$ and just write $\alpha(a)$. Thus an atomic type is a
subset of $\sigma$, and we identify $\alpha \subseteq \sigma$ with the
$\FO[\sigma]$-sentence
$\alpha(x)\ :=\ \smashoperator{\bigland_{R\in \alpha}} R(x) \ \land
\ \smashoperator{\bigland_{R\in \sigma\setminus\alpha}} \lnot
R(x)$.

Since we will often need the atomic type of the $\leq$-minimal element
of a structure, we denote by $\alpha_{\AS}$ the type $\alpha(r,\AS)$
if $\AS$ is an ordered structure with minimal element $r$.

To encode the relations between the element which is removed and the
remaining elements, we define a signature $\tilde\sigma$ which
contains, for each $R\in \sigma$ and each nonempty $I \subseteq
[1,\ar(R)]$, a relation symbol $R_I$ of arity $\absval{I}$. Given a
structure $\AS = (A,(R^{\AS})_{R \in \sigma})$ and an element $r \in
A$ we now obtain a $\tilde\sigma$-structure $\AS^{[r]} = (A,
(R_I^{\AS^{[r]}})_{R_I \in \tilde\sigma})$ by setting
\[
R_I^{\AS^{[r]}} := \setc{(a_i)_{i \in I}}{ (a_1,\ldots, a_{\ar(R)}) \in
R^{\AS}\text{ and }a_i = r\text{ for }i \not\in I }.
\]
Note that $R^{\AS} = R_{[1,\ar(R)]}^{\AS^{[r]}}$, so up to a renaming
of relation symbols, $\AS^{[r]}$ is an expansion of $\AS \setminus
r$.

The $(L,q)$-type of $\AS$ is determined by $\alpha(r)$ and the
$(L,q)$-type of $\AS^{[r]}$:
\begin{lemma}
  \label{lem:IndASrtoAS}
  Let $\logl\in \set{\FO{}, \MSO{}}$ and $q\in \Npos$. Let $\AS$ and $\BS$ be
  structures, $r \in A$ and $s \in B$. If
  \[
  \alpha(\AS,r) = \alpha(\BS,s)
  \quad\text{and}\quad
  \tp_{\logl,q}(\AS^{[r]}) = \tp_{\logl,q}(\BS^{[s]}),
  \]
  then also
  \[
  \tp_{\logl,q}(\AS) = \tp_{\logl,q}(\BS).
  \]
\end{lemma}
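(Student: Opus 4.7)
My plan is to prove this via an Ehrenfeucht-Fra\"iss\'e game argument, which characterises $\tp_{\logl,q}(\AS) = \tp_{\logl,q}(\BS)$ as the existence of a winning strategy for the duplicator in the $q$-round $\logl$-EF game between $\AS$ and $\BS$. By hypothesis, the duplicator has a winning strategy $\Sigma$ in the $q$-round game between $\AS^{[r]}$ and $\BS^{[s]}$, played on the (reduced) universes $A \setminus \set{r}$ and $B \setminus \set{s}$. I would lift $\Sigma$ to a winning strategy $\Sigma^\ast$ on $\AS$ versus $\BS$.

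The lifted strategy $\Sigma^\ast$ maintains a parallel play of the game on $\AS^{[r]}$ versus $\BS^{[s]}$. When the spoiler picks $a \in A$ on the $\AS$-side, the duplicator responds with $s$ if $a = r$ (leaving the parallel play untouched), and otherwise treats $a$ as a spoiler move in the parallel game and mirrors the response $b \in B \setminus \set{s}$ given by $\Sigma$. For \MSO{}-style set moves, when the spoiler picks $X \subseteq A$, the duplicator plays $X \setminus \set{r}$ in the parallel game, receives a reply $Y' \subseteq B \setminus \set{s}$, and answers with $Y := Y' \cup \set{s}$ if $r \in X$ and $Y := Y'$ otherwise. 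Moves from the $\BS$-side are handled symmetrically. By design, the matching produced after $q$ rounds satisfies $a_i = r \Leftrightarrow b_i = s$ and $r \in X_k \Leftrightarrow s \in Y_k$, and the parallel play is a completed winning play of $\Sigma$.

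For the verification, I would fix a relation $R \in \sigma$ and a tuple of indices $(i_1, \ldots, i_{\ar(R)})$ and check the condition $(a_{i_1}, \ldots, a_{i_{\ar(R)}}) \in R^\AS \Leftrightarrow (b_{i_1}, \ldots, b_{i_{\ar(R)}}) \in R^\BS$ by splitting on the set $I := \setc{j}{a_{i_j} \neq r} = \setc{j}{b_{i_j} \neq s}$. For $I = \emptyset$, the equivalence reduces to $R \in \alpha(\AS, r) \Leftrightarrow R \in \alpha(\BS, s)$, which is the first hypothesis. For $I = [1, \ar(R)]$, the tuple lies in $R_{[1, \ar(R)]}^{\AS^{[r]}}$ and its image in $R_{[1, \ar(R)]}^{\BS^{[s]}}$ iff it lies in the original $R$-relation, so the parallel partial iso delivers the equivalence. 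For nonempty proper $I$, the very definition of $R_I^{\AS^{[r]}}$ encodes precisely the tuples of $R^\AS$ with $r$ in the positions outside $I$, and the parallel iso again delivers the equivalence. Preservation of membership in each $X_k$ follows by the analogous split on whether the element equals $r$ or $s$. The main subtlety, and the reason for the universe restriction in $\Sigma$, is maintaining the $r \leftrightarrow s$ alignment of the two games: since $\Sigma$ operates on $A \setminus \set{r}$ versus $B \setminus \set{s}$, it cannot play $s$ in response to a non-$r$ move, which is exactly what makes the case split on $I$ well-defined and dovetails with the $\tilde\sigma$-encoding of $r$-incident tuples.
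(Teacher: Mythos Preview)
Your proposal is correct and follows essentially the same Ehrenfeucht--Fra\"iss\'e argument as the paper: lift Duplicator's winning strategy on $\AS^{[r]}$ versus $\BS^{[s]}$ by pairing $r\leftrightarrow s$ and otherwise mirroring the given strategy, then verify preservation of relations by splitting on which positions of a tuple are occupied by $r$ (resp.\ $s$). Your write-up is in fact more explicit than the paper's, particularly in spelling out the handling of \MSO{} set moves and the three-way case analysis on $I$; the paper compresses these into a few sentences.
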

\begin{proof}
  The same argument works for $\logl=\FO{}$ and $\logl=\MSO{}$.
  Duplicator has a winning strategy $\mathcal{S}$ in the $q$-round
  Ehrenfeucht-Fra\"iss\'e game for $\logl$ on $\AS^{[r]}$ and
  $\BS^{[s]}$. Note that the strategy $\mathcal{S}$ is, in particular,
  a winning strategy on $\AS \setminus r$ and $\BS \setminus s$,
  because $\AS^{[r]}$ and $\BS^{[s]}$ are expansions of these
  structures. Duplicator can win the $q$-round EF-game on $\AS$ and
  $\BS$ if she plays according to $\mathcal{S}$ on $\AS \setminus r$
  and $\BS \setminus s$, and if she responds to $r$ with $s$ and vice
  versa.

  We have to argue that this strategy preserves relations between the
  played elements. For relations not involving the removed elements
  $r$ and $s$, this is true because $S$ is a winning strategy for the
  $q$-round game on $\AS \setminus r$ and $\BS \setminus s$. Relations
  involving only the minimal elements are preserved because
  $\alpha(\AS,r) = \alpha(\BS,s)$. Relations involving the minimal
  elements and other elements are preserved, because they are encoded
  in the relations $R_I$ of the extended signature $\tilde \sigma$,
  and these are preserved by $\mathcal S$.
\end{proof}

The following lemma is easy to prove following these definitions:
\begin{lemma}
  \label{lem:ASrtoAS}
  Let $\logl\in \set{\FO{}, \FOmod{}}$.  For every $\logl[\tilde
  \sigma]$-sentence $\phi$ there is an $\logl[\sigma]$-formula
  $\I(\varphi)(z)$ of the same quantifier rank and quantifier
  alternation depth such that
  \[
  \AS \models \I(\phi)(r) \quad\text{iff}\quad \AS^{[r]} \models \phi,
  \]
  for all $\sigma$-structures $\AS$ and $r \in A$.
\end{lemma}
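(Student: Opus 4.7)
The plan is to define $\I(\phi)(z)$ by structural induction on $\phi$, treating the translation as a first-order interpretation of $\AS^{[r]}$ inside $\AS$ using the parameter $r$ (bound to $z$). The key observation is that $\AS$ and $\AS^{[r]}$ share the same universe, so the domain formula of the interpretation can be taken to be $x = x$, and quantifiers need not be relativised. This is crucial for preserving both quantifier rank and quantifier alternation depth.

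For the atomic step, consider a relation symbol $R_I \in \tilde\sigma$ of arity $|I|$, and suppose $I = \{j_1 < \cdots < j_{|I|}\} \subseteq [1, \ar(R)]$. By the definition of $R_I^{\AS^{[r]}}$, a tuple $(a_{j_1}, \ldots, a_{j_{|I|}})$ lies in $R_I^{\AS^{[r]}}$ iff the $\ar(R)$-tuple obtained by placing $a_{j_k}$ at coordinate $j_k$ for each $k$ and $r$ at every coordinate not in $I$ lies in $R^{\AS}$. Accordingly, each atom $R_I(x_{j_1}, \ldots, x_{j_{|I|}})$ is replaced by $R(y_1, \ldots, y_{\ar(R)})$, where $y_j := x_j$ if $j \in I$ and $y_j := z$ otherwise. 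Equality atoms are left unchanged.

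For the inductive step, boolean connectives and negations are passed through unchanged. For quantifiers $\exists x\,\psi$, $\forall x\,\psi$, or $\exists^{i \pmod{p}} x\,\psi$, we apply the same quantifier to $\I(\psi)(z)$, keeping $z$ as a free variable. Since the universes of $\AS$ and $\AS^{[r]}$ coincide, the semantics of each quantifier (including the modulo-counting ones, whose counts are evaluated over the full universe in both structures) transfers directly; this is why the translation works uniformly for $\logl \in \{\FO, \FOmod\}$. Correctness $\AS \models \I(\phi)(r) \iff \AS^{[r]} \models \phi$ then follows by a routine induction, and since the translation introduces no new quantifiers and no new $\forall/\exists$-alternations, $\qr{\I(\phi)} = \qr{\phi}$ and $\qad{\I(\phi)} = \qad{\phi}$.

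There is no real obstacle here: the lemma is a formalisation of the fact that $\AS^{[r]}$ admits a trivial (identity-domain, parameter-$r$) quantifier-free interpretation in $\AS$. The only point deserving mild care is ensuring that the translation goes through cleanly for $\FOmod$ as well, which it does precisely because the domain formula is $x = x$, so the count of witnesses in $\AS^{[r]}$ equals the count in $\AS$.
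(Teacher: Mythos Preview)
Your overall approach---a syntactic interpretation with parameter $z$---matches the paper's, and your treatment of the atomic case is exactly right. However, your ``key observation'' that $\AS$ and $\AS^{[r]}$ share the same universe is incorrect: despite the unfortunate notation $\AS^{[r]} = (A, \ldots)$ in the displayed definition, the paper immediately remarks that $\AS^{[r]}$ is, up to renaming, an expansion of $\AS \setminus r$, and its own proof of the lemma takes $x \neq z$ as the domain formula. Other uses of $\AS^{[r]}$ (for instance in the definition of $q$-orders, where $\relao{\preceq}{A\setminus r}$ is required to be a $q$-order on $\AS^{[r]}$) confirm that the intended universe is $A \setminus \{r\}$.

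With the correct universe your unrelativised translation fails. For a concrete counterexample take $\sigma = \{P\}$ with $P$ unary, $A = \{r,a\}$, $P^{\AS} = \{r\}$, and $\phi := \exists x\, P_{\{1\}}(x)$; then $\AS^{[r]} \not\models \phi$ (its universe $\{a\}$ contains no element of $P$), yet your $\I(\phi)(z) = \exists x\, P(x)$ is satisfied in $\AS$ at $z=r$. The fix is exactly what the paper does: relativise each quantifier to $x \neq z$, and for $\FOmod$ replace $\exists^{i \pmod*{p}}x\,\psi$ by $\exists^{i \pmod*{p}}x\,(x \neq z \land \I(\psi))$. Since $x \neq z$ is quantifier-free this still introduces no new quantifiers and no new alternations, so both quantifier rank and quantifier alternation depth are preserved as claimed.
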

\begin{proof}
  The proof uses a standard interpretation argument.  It suffices to
  provide quantifier-free formulae with a parameter $z$ which define
  the universe and the relations of $\AS^{[r]}$ in $\AS$, provided
  that $z$ is interpreted by the element $r$. The universe is defined
  by the formula $x \neq z$. Let $R_I\in \tilde\sigma$. If, for each $i\leq
  \ar(R)$, we let 
  \[ y_i :=
  \begin{cases}
    x_j & \text{ if } i=i_j\in I\\
    z & \text{ if } i\notin I\\
  \end{cases}
  \]
  then $R(y_1, \ldots, y_{\ar(R)})$ is a formula with free variables
  $z,x_1,\ldots,x_{|I|}$ which defines $R_I^{\AS^{[r]}}$ in $(\AS,r)$.
\end{proof}

\paragraph{Tree-depth}

The following inductive definition is one of several equivalent ways
to define the \emph{tree-depth} $\td(G)$ of a graph
(see \cite{NesetrilMendez2012} for a reference on tree-depth):
\[
\td(G)\ :=\ \begin{cases}
  1 & \text{ if } |V(G)| = 1\\
  1 + \min_{\,r\in V(G)} \td(G\setminus r) & \text{ if $G$ is connected and } |V(G)| > 1\\
  \max_{\,i\in [n]} \td(K_{i}) & \text{ if $G$ has components $K_{1}, \ldots, K_{n}$}.\\
\end{cases}
\]

As usual, the tree-depth $\td(\AS)$ of a relational structure $\AS$ is
defined by $\td(\AS) := \td(\GG(\AS))$. We let
\[
\fin^\conn_\sigma \ := \ \setc{\AS\in \fin_\sigma}{\AS \text{ is
    connected}}
\]
and for each $d\in \Npos$, we let
\[
\begin{split}
  \fin_{\sigma,d} &:= \setc{\AS \in \fin_\sigma}{\td(\AS) \leq d},
  \\
  \fin^\conn_{\sigma,d} &:= \setc{\AS \in \fin^\conn_\sigma}{\td(\AS) \leq d}.
\end{split}
\]

As an immediate consequence of the above definition of tree-depth, each $\AS\in
\fin^\conn_{\sigma,d}$ with $d > 1$ contains an element $r$ with $\td(\AS
\setminus r) \leq \td(\AS) - 1$. We call these vertices \emph{tree-depth roots}
and denote the set of all such vertices by $\tdroot(\AS)$. By a result of
\cite{BoulandDK12}, the size of $\tdroot(\AS)$ is bounded by a function of $d$
(independent of the size of $\AS$): 

\begin{lemma}[{\cite[Lem. 7]{BoulandDK12}}]
  \label{lem:roots}
  There is a function $f : \Npos \to \Npos$ such that
  $\absval{\tdroot(G)} \leq f(\td(G))$ for each connected graph $G$.
\end{lemma}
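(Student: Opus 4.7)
I would proceed by induction on $d = \td(G)$, with base case $d = 1$ immediate ($G$ is a single vertex, so $f(1) := 1$ works). For the inductive step, fix a connected graph $G$ with $\td(G) = d \geq 2$ and pick any $r_0 \in \tdroot(G)$; let $K_1, \ldots, K_n$ be the connected components of $G \setminus r_0$, each with $\td(K_i) \leq d - 1$. Every tree-depth root of $G$ is either $r_0$ itself or lies in some $K_i$, so it suffices to bound (i) the number of tree-depth roots of $G$ inside a fixed component $K_i$, and (ii) the number of components $K_i$ that can contain any tree-depth root of $G$ at all.

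For (i), the natural approach is to relate a root $r \in V(K_i) \cap \tdroot(G)$ to a root of the augmented component $K_i^+ := G[V(K_i) \cup \{r_0\}]$. Since $r_0$ is adjacent to $K_i$ in $G$, the graph $K_i^+$ is connected, and a short comparison of tree-depth decompositions shows $\td(K_i^+) \leq d - 1$ whenever $K_i^+$ participates nontrivially in the root structure through $r_0$. One then argues that each such $r$ is forced to lie in $\tdroot(K_i^+)$, so the inductive hypothesis caps this contribution by $f(d-1)$. For (ii), the key observation is that if $r \in V(K_i)$ witnesses $\td(G \setminus r) \leq d-1$, then for every $j \neq i$ the induced subgraph $G[V(K_j) \cup \{r_0\}]$ sits inside $G \setminus r$ and therefore inherits tree-depth at most $d-1$; this is a rigidity constraint on all components other than $K_i$ and, iterated across all indices $i$ that contribute roots, pins the number of such indices to a function of $d$ alone.

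The \textbf{main obstacle} is step (ii): showing that only boundedly many components $K_i$ can simultaneously satisfy the symmetric rigidity constraints forced by hosting a root of $G$, while avoiding circularity when the augmented subgraph $K_i^+$ has tree-depth $d$ rather than $d-1$. My plan for this is to treat separately the case $r_0 \in \tdroot(K_i^+)$ (where a direct elimination-tree argument shows the contribution collapses into the recursion on $K_i$ of tree-depth $d-1$) from the case $r_0 \notin \tdroot(K_i^+)$ (where $\td(K_i^+) \leq d - 1$ directly). Combining these with a cross-component counting argument should yield a recurrence of the form $f(d) \leq 1 + g(d) \cdot f(d-1)$ for some explicit $g$, from which the existence of $f$ follows.
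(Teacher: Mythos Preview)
The paper does not supply its own proof of this lemma; it is quoted verbatim from \cite[Lem.~7]{BoulandDK12} and used as a black box, so there is nothing in the present paper to compare your plan against.

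Independently of that, your plan has genuine gaps in both substeps. In~(i) you want each root $r \in V(K_i)$ of $G$ to lie in $\tdroot(K_i^+)$, where $K_i^+ = G[V(K_i)\cup\{r_0\}]$, so that the inductive bound $f(d-1)$ applies. But from $r \in \tdroot(G)$ you only obtain $\td(K_i^+ \setminus r) \leq \td(G\setminus r) \leq d-1$; for $r$ to be a root of $K_i^+$ you would need $\td(K_i^+\setminus r) \leq \td(K_i^+)-1$, and when $\td(K_i^+) \leq d-1$ this asks for $\td(K_i^+\setminus r) \leq d-2$, which is nowhere established. Your case split on whether $r_0 \in \tdroot(K_i^+)$ does not rescue this: in the case $r_0 \in \tdroot(K_i^+)$ you may have $\td(K_i^+)=d$, and then you cannot recurse on $K_i^+$ at all. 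The phrase ``participates nontrivially in the root structure through $r_0$'' is not defined and does not correspond to any standard notion.

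Step~(ii) is the more serious gap. Your observation is correct: if some $K_i$ hosts a root of $G$ then $\td(K_j^+) \leq d-1$ for every $j\neq i$. But once two distinct components host roots, this already forces $\td(K_j^+)\leq d-1$ for \emph{every} $j$, and nothing further is gained by ``iterating''; the constraint is symmetric and idempotent, and in itself places no bound on the number of components. You assert that it ``pins the number of such indices to a function of $d$ alone'' without any mechanism for why. (There may well be such a bound---indeed one expects $m\leq 2$---but it requires an additional structural argument that is absent from the plan.)

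A route that avoids these difficulties, and is essentially the one taken in \cite{BoulandDK12}, is to pass to a vertex-minimal induced subgraph $H$ of $G$ with $\td(H)=d$. Any $r\in\tdroot(G)$ must lie in $V(H)$, since otherwise $H\subseteq G\setminus r$ forces $\td(G\setminus r)\geq d$. By minimality every vertex of $H$ is a root of $H$, so the task reduces to bounding the size of such vertex-critical graphs of tree-depth~$d$; that is where the real combinatorics lies, and your inductive scheme would need to be reorganised around that reduction.
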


Note that the definition of $\tdroot(G)$ in \cite{BoulandDK12} is slightly
different from ours, but the two definitions are easily seen to be equivalent.

A graph of tree-depth at most $d$ can not contain a path of length
$2^d$ (cf.~\cite[6.2]{NesetrilMendez2012}). Therefore
$\dist_\AS(a,b) < 2^d$
for all elements $a$ and $b$ in the same connected component of a
structure $\AS$ of tree-depth at most $d$, and the formula
$\reach_{d}(x,y) := \dist_{\leq 2^{d}}(x,y)$
defines the reachability relation in these structures:
\[
\AS \models \reach_{d}[a,b]
\quad\text{iff}\quad
a\text{ and }b\text{ belong to the same component of }\AS.
\]
This (existential) formula allows us to relativise a formula
$\phi(x)$ to the connected component of $x$:
\[
\AS \models \rela{\phi}{\reach_d(x,z)}[a]
\quad\text{iff}\quad
\KS \models \phi[a],
\]
where $\KS$ is (the substructure of $\AS$ induced on) the connected component of
$a$ in $\AS$.  Since $\reach_d$ is existential, we have
$\qad{\rela{\phi}{\reach_d(x,z}}) = \qad{\phi}$ .

Using these observations and the inductive definition of tree-depth,
it is easy to write down an $\FO[\sigma]$-sentence that defines
$\fin_{\sigma,d}$ on the class of all finite
$\sigma$-structures. While this na\"ive approach leads to a formula
whose quantifier alternation depth grows linearly with $d$, it is also
possible to construct a \emph{universal} sentence $\td_{\leq d}$
defining $\fin_{\sigma,d}$ as a subclass of $\fin_{\sigma}$,
cf.~\cite[Section~6.10]{NesetrilMendez2012} for details.
Using this sentence, we construct a sentence that defines the set $\tdroot(\AS)$ for each $\AS\in \fin^\conn_{\sigma,d}$ with $d>1$. To this end, we let $\tdroot_d(x)\ :=\ \smashoperator{\biglor_{c \leq d-1}} \big( \td_{> c} \ \land \ \rela{{\td_{\leq c}}}{(x \neq z)}(x) \big)$. 

\section{Order-invariant first-order logic}
\label{sec:oifo}

It is well-known that order-invariance is undecidable on the class $\fin_\sigma$
of all finite $\sigma$-structures, i.e.  there is no algorithm which decides for
a given $\FO[\sigma^{\leq}]$-sentence if it is order-invariant on
$\fin_\sigma$. This leads to the question if the expressive power of
order-invariant sentences on a class $\class{C}$ can be captured by a logic with
a decidable syntax.  An answer to this question in the case of the class
$\fin_\sigma$ seems out of reach.  We consider the question in the case of
bounded tree-depth structures, i.e. $\class{C}=\fin_{\sigma,d}$ for some $d\in
\Npos$.  More concretely, our aim is a proof of the following theorem:

\begin{theorem}
  \label{thm:oifo-eq-fo}
  For every $d\in \Npos$, every signature $\sigma$, and each sentence
  $\phi$ of $\oiFO[\sigma]$, there is an $\FO[\sigma]$-sentence $\psi$
  which is equivalent to $\phi$ on $\fin_{\sigma,d}$ and which has
  size $\size{\psi}\in \nexp[d](\qr{\phi})$ and quantifier-alternation
  depth $\qad{\psi} \leq 3d$.
\end{theorem}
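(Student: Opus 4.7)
The plan is to proceed by induction on $d$, constructing for each $\AS \in \fin_{\sigma,d}$ a canonical, $\FO[\sigma]$-definable ordering $\preceq_{\AS}$ whose $(\FO, q)$-type (with $q = \qr{\phi}$) is $\FO[\sigma]$-definable in $\AS$ by a formula of size $\nexp[d](q)$ and alternation depth $\leq 3d$. Only symmetric elements of $\AS$ are left unordered by $\preceq_{\AS}$, and since $\phi$ is order-invariant, its truth on any ordered expansion agrees with its truth on any linear extension of $\preceq_{\AS}$; substituting the defining formulas into $\phi$ then yields the desired $\psi$.

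First I would reduce to the connected case: an arbitrary order on $\AS$ can be written as an ordered sum over components, and $\reach_d$ lets me relativise order-related subformulas to components without raising the quantifier alternation. The global canonical order enumerates components by their (canonical) $(\FO, q)$-types, breaking ties between isomorphic components arbitrarily (any choice is equivalent under $\phi$). For a connected $\AS$ of tree-depth $d > 1$, Lemma~\ref{lem:roots} and the formula $\tdroot_d$ give a bounded, FO-definable set of tree-depth roots. I would pick a canonical root $r$ by minimising the pair $(\alpha(\AS, r), \tp_{\FO, q}(\AS^{[r]}))$ under a fixed enumeration, using Lemma~\ref{lem:ASrtoAS} to translate the $(\FO, q)$-type of $\AS^{[r]}$ into an $\FO[\sigma]$-formula with parameter $r$ of the same quantifier rank and alternation depth. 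Since $\td(\AS^{[r]}) \leq d - 1$, the inductive hypothesis supplies such a formula of size $\nexp[d-1](q)$ and alternation $\leq 3(d-1)$. The components of $\AS \setminus r$ inherit canonical orders by induction, and the ordered sum of these, prefixed by $r$, constitutes $\preceq_{\AS}$; the base case $d = 1$ is trivial since elements can be ordered by atomic type.

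The main obstacle is controlling the quantifier alternation so that at most three alternations are added per recursion level, yielding the overall $3d$ bound. The formulas $\tdroot_d$, $\reach_d$, the distance formulas, and the interpretation $\I$ of Lemma~\ref{lem:ASrtoAS} are all existential and cost nothing in alternation; the new alternations come from an $\exists$ selecting the canonical root, a $\forall$ enforcing minimality among competing root candidates, and one further quantifier for type-based comparisons between components. These must be placed so that their interaction with the inductive type formulas of alternation $3(d-1)$ adds at most three to the total. The size bound follows from a single-exponential blow-up per level: by induction the number of relevant $(\FO, q)$-types of structures of tree-depth $\leq d-1$ is at most $\nexp[d-1](\poly(q))$, so the disjunction over these types keeps $\size{\psi}$ within $\nexp[d](q)$.
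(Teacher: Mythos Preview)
Your outline matches the paper's approach: define canonical $q$-orders recursively (root chosen by minimal atomic type and minimal $q$-type of $\AS^{[r]}$), show that the $q$-type of a $q$-ordered expansion is $\FO[\sigma]$-definable in the unordered structure, and take a disjunction over the types satisfying $\phi$. The paper formalises exactly this via its Definition of $(\logl,q)$-orders, Lemma~\ref{lem:all-q-orders-are-equivalent} (all $q$-orders yield the same $q$-type), and Lemma~\ref{lem:connected-lift} (the connected step), so the skeleton is right.

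Two points need tightening. First, your sentence ``substituting the defining formulas into $\phi$'' suggests replacing the atom $x \leq y$ in $\phi$ by an $\FO[\sigma]$-definition of $\preceq_\AS$. That route does not work: the canonical order is not actually $\FO$-definable (you break ties arbitrarily), and even if it were, substituting a formula of alternation depth $\approx 3d$ for each order atom inside $\phi$ would make the alternation depth depend on $q$, destroying the $3d$ bound. What you need---and what your final paragraph already hints at---is the disjunction $\psi := \bigvee\{\psi_\tau \mid \tau \models \phi\}$ over type-defining sentences $\psi_\tau$; make this the explicit final step rather than ``substitution''.

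Second, your treatment of the disconnected case is missing the key combinatorial lemma. You say the global order ``enumerates components by their $q$-types'', but you do not explain why the $q$-type of the resulting ordered sum is determined by \emph{finitely} much information about the (possibly unbounded) multiset of component types. The paper's Lemma~\ref{lem:cut-determines-type} supplies this: it suffices to count, for each connected $q$-type $\tau$, the number of components of type $\tau$ up to the threshold $t = 2^q + 1$; two structures agreeing on these truncated counts have $q$-equivalent $q$-ordered expansions. Only with this threshold in hand can you write the $\FO$-sentence (Lemma~\ref{lem:counting-formulae}) that counts components up to $t$ and get both the $\nexp[d](q)$ size bound and the extra two alternations (one $\exists$-block for the witnesses, one $\forall$ for ``no further component of this type''), giving $\qad{\Phi}+2 \leq 3(d-1)+2+1 = 3d$ in total. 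Without this lemma the inductive step for disconnected structures does not close.
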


The proof of Theorem~\ref{thm:oifo-eq-fo} will be presented in
Section~\ref{sec:oifo-to-fo} below. Before that, we want to motivate
Theorem~\ref{thm:oifo-eq-fo} by showing that the undecidability of
order-invariance holds even for structures of tree-depth 2.

\subsection{Undecidability of order-invariance on structures of tree-depth 2}
\label{sec:undecidability}

As mentioned by \cite{Schweikardt2013}, order-invariance on $\fin_\sigma$ is
decidable if the signature $\sigma$ contains only unary relation symbols.  An
ordered $\sigma$-structure in which the unary relations partition the universe
can be regarded as a word. An $\FO[\sigma^{\leq}]$-sentence $\phi$ then defines
a language $L_\phi$. The sentence $\phi$ is order-invariant iff the syntactic
monoid of $L_\phi$ is commutative, which is decidable.  This argument can be
extended to general $\sigma$-structures and to structures of tree-depth $1$ over
arbitrary signatures.

Hence, order-invariance is decidable on $\fin_{\sigma,d}$ if $d=1$. The next
theorem shows that it becomes undecidable for $d\geq 2$.

\begin{theorem}
  \label{thm:ordinv-undecidable}
  There is a signature $\sigma$ such that order-invariance is undecidable on $\fin_{\sigma,2}$.
\end{theorem}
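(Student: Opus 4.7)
The plan is to reduce the halting problem for two-counter (Minsky) machines to deciding order-invariance on $\fin_{\sigma, 2}$. Given a Minsky machine $M$, we construct a signature $\sigma$ consisting of one binary relation $E$ and finitely many unary relations, together with an $\FO[\sigma^{\leq}]$-sentence $\phi_M$ such that $\phi_M$ is order-invariant on $\fin_{\sigma, 2}$ if and only if $M$ does not halt on empty input. A candidate halting computation $c_0, c_1, \ldots, c_T$ of $M$ is encoded as a disjoint union (via $E$) of $T+1$ stars, one per configuration, which is a canonical tree-depth-$2$ structure. The center of the star for $c_i = (q_i, n_i, m_i)$ carries the unary label for the state $q_i$; the star has $n_i$ leaves labelled $A$ and $m_i$ leaves labelled $B$, so that the counter values are encoded as leaf cardinalities. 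In order to compare each pair of neighbouring configurations on a disjoint set of leaves, each configuration carries separate \emph{left} and \emph{right} copies of its counter-leaves, whose cardinalities are enforced to agree by a within-star alternating interleaving.

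The sentence $\phi_M$ then asserts that (i)~each connected component is a well-formed configuration, (ii)~the $\leq$-smallest and $\leq$-largest centers carry, respectively, the initial and the halting state of $M$, and (iii)~between every two $\leq$-consecutive centers the pair of state labels is joined by a legal transition of $M$, every zero-test is witnessed by the absence of the corresponding leaf label, and each counter update is witnessed by the right-copies of the earlier configuration and the left-copies of the later one interleaving strictly in $\leq$ according to a pattern characteristic of the transition (one extra leaf at the end for an increment, at the start for a decrement, equal for a stay). Every such interleaving condition is an $\FO$-property of the $\leq$-successor restricted to the relevant leaf set, so $\phi_M$ is indeed in $\FO[\sigma^{\leq}]$.

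The main obstacle is that $\FO$ cannot count and therefore cannot directly compare the sizes of two counter-leaf sets, so the interleaving trick is needed; furthermore, the left/right split is what allows the interleaving constraints arising from $c_{i-1} \to c_i$ and from $c_i \to c_{i+1}$ to be checked on disjoint leaves, which prevents otherwise inevitable conflicts. The reduction then closes in the usual manner: if $M$ halts, a halting computation yields some $\BS \in \fin_{\sigma, 2}$ together with an order $\leq_1$ with $(\BS, \leq_1) \models \phi_M$, while swapping the $\leq$-smallest center with a later one gives an order $\leq_2$ violating condition~(ii), whence $\phi_M$ is not order-invariant on $\fin_{\sigma, 2}$. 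Conversely, if $M$ does not halt, no $\BS \in \fin_{\sigma, 2}$ admits any order satisfying $\phi_M$, so $\phi_M$ is vacuously order-invariant, and the reduction is complete.
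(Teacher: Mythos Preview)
Your reduction is correct and follows the same high-level strategy as the paper (reduce from the halting problem for two-counter machines, encode runs by tree-depth-$2$ structures, use left/right copies so that each counter element participates in only one comparison), but the implementation differs in two interesting ways.

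First, the paper factors the reduction through an intermediate step: it shows that \emph{$2$-satisfiability} (does a given $\FO[\sigma^{\leq}]$-sentence have an ordered model of tree-depth $\leq 2$?) many-one reduces to order-invariance via the standard trick of conjoining $\exists x\forall y\,(x\leqsym y \wedge P(x))$ for a fresh unary $P$, and then reduces halting to $2$-satisfiability. You instead argue directly that $\phi_M$ is not order-invariant when $M$ halts, by exhibiting a second order that breaks the initial/final-state condition. This is fine but needs a small caveat: you must ensure the witnessing structure has at least two centers with distinct state labels (otherwise the swap does nothing), which amounts to assuming without loss of generality that $M$ does not halt in its initial state. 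The paper's two-step version sidesteps this bookkeeping.

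Second, and more substantively, the two proofs swap the roles of the edge relation and the linear order. The paper encodes the run as a \emph{word} and uses the edge relation $E$ (a matching) to witness the counter bijections between consecutive configurations; the order merely delimits configurations. You encode each configuration as a \emph{star} via $E$ and use the order, through alternation patterns on restricted successor, to witness the bijections. Both devices are standard and both yield tree-depth $2$; your version arguably makes the ``first-order can compare cardinalities given an order'' idea more explicit, while the paper's matching version keeps the order's role minimal and makes the tree-depth bound (maximum degree $1$) immediate.

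One further minor point: as stated, your signature depends on $M$ (one unary symbol per control state), whereas the theorem asks for a single fixed $\sigma$. The paper has the same issue in its presentation and notes in a footnote that it is easily removed; the same fixes (binary encoding of states, or folding the program into the structure) apply to your construction.
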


The proof of Theorem~\ref{thm:ordinv-undecidable} uses a reduction from the
undecidable halting problem for \emph{counter machines} (cf. \cite{Minsky1967})
with two counters which store natural numbers. A counter machine executes a
\emph{program}, i.e. a finite sequence of the following \emph{instructions}:
\begin{description}
\item[$\inc(i)$]  increment counter $i$, proceed with next instruction.
\item[$\dec(i,j_{0},j_{1})$] if counter $i$ is not zero: decrement counter $i$,
  proceed with $j_1$-th instruction otherwise: proceed with instruction $j_0$.
\item[$\halt{}$] stop the execution. 
\end{description}

The configuration of the machine at any execution step is fully described by
a triple $(n_1,n_2,j)$, where $n_1,n_2 \geq 0$ are natural numbers
stored in the counters and $j\geq 1$ is the number of the next
instruction to be executed. Without loss of generality, we assume
that the last instruction of a program is always the $\halt{}$
instruction and that this instruction occurs nowhere else in the
program. Hence we say that a program \emph{halts} (on empty input) if
it ever reaches its last instruction when run from the initial
configuration $(0,0,1)$. 

\begin{proof}[Proof of Theorem~\ref{thm:ordinv-undecidable}]
  We say that a sentence $\phi\in \FO[\sigma^{\leq}]$ is
  \emph{$d$-satisfiable} if it has a model $(\AS,\leqsym^\AS)$ where
  $\AS\in\fin_{\sigma,d}$.  The folklore proof which shows that
  order-invariance on $\fin_\sigma$ is undecidable uses a many-one
  reduction from the undecidable finite satisfiability problem to
  order-invariance.  The same kind of argument proves that
  \emph{$d$-satisfiability} (i.e. the problem which asks if a given
  sentence $\phi\in \FO[\sigma^{\leq}]$ is $d$-satisfiable) many-one
  reduces to order-invariance on $\fin_{\tilde\sigma,d}$, where $\tilde\sigma :=
  \sigma \union \set{P}$ for a unary relation symbol $P\notin \sigma$.
  This follows from the fact that $\phi\in\FO[\sigma^{\leq}]$ is
  $d$-satisfiable if, and only if, the $\FO[{\tilde\sigma}^{\leq}]$-sentence
  $\phi \land \exists x \forall y\ (x \leqsym y\ \land P(x))$
  is \emph{not} order-invariant on $\fin_{\tilde\sigma,d}$.

  Hence, to complete the proof of our theorem, it suffices to show
  that the $2$-satisfiability problem is undecidable for some
  signature $\sigma$ to be fixed below. To this end we reduce the
  halting problem for counter machines to $2$-satisfiability.  Let
  $P=I_1\dotsb I_\ell$ be a program. We construct an
  $\FO[\sigma^{\leq}]$-sentence $\phi$ which is
  $\fin_{\sigma,2}$-satisfiable iff $P$ halts.  First we fix an
  encoding of configurations of $P$ by words over a finite alphabet
  $\Sigma$. It would be natural to do this by encoding the counter values
  in unary using different symbols; say, $(2,3,1)$ would become
  $\IONE\IONE\,\ITWO\ITWO\ITWO\,1$. We change this representation
  slightly: a configuration $(n_1,n_2,j)$ of $P$ is encoded by a word
  \[
  \enc(n_1,n_2,j)\ :=\ (\IONEL\IONER)^{n_1}\, (\ITWOL\ITWOR)^{n_2}\, j
  \]
  over the alphabet $\Sigma:=\set{\IONEL,\ITWOL, \IONER,\ITWOR, 1,
    \ldots, \ell}$.
  \footnote{This alphabet depends on the length of
    the given program $P$, but the proof can be
    modified easily to make the alphabet $\Sigma$, and therefore the signature $\sigma$, independent of $P$ without increasing the tree-depth of the structures involved.}

  Let $\sigma := \set{E} \union \tau$ where $E$ is a binary relation
  symbol and $\tau := \setc{P_a}{a\in \Sigma}$, where the $P_a$ are
  unary relation symbols.  The $\sigma$-structures that we consider
  are \emph{$\Sigma$-coloured graphs}, i.e. $\sigma$-structures where
  $E$ is the edge relation of a simple undirected graph and where the
  unary predicates are a vertex colouring (i.e. a partition of the
  vertex set). If a vertex of such a graph belongs to a relation
  $P_a$, we say that it is \emph{$a$-coloured}. The class of
  $\Sigma$-coloured graphs is obviously $\FO$-definable on
  $\fin_\sigma$.

  As usual, we identify each non-empty word over the alphabet $\Sigma$
  with an ordered $\tau$-structure which, in turn, we regard as an
  ordered $\Sigma$-coloured graph with no edges.  We refer to vertices
  which are coloured by $1, \ldots, \ell$ as \emph{instruction
    vertices}. If our program $P$ halts after at most $h$ computation
  steps then, with respect to our encoding, there exists a unique word
  $w_P$ which encodes the \emph{run} of $P$, i.e. the finite sequence
  of configurations at time steps $1, \ldots, h$. We want to define a
  class of ordered $\Sigma$-coloured graphs of maximum degree $1$
  obtained from the edge-less graph $w_P$ by adding edges between its
  vertices. These graphs will be called \emph{matching extensions} of
  $w_P$, since their edge relations will be unions of matchings
  (i.e. edge relations of graphs where each vertex is incident to
  exactly one edge).  Consider any word $w=\enc(C_1)\dotsb\enc(C_k)$
  which encodes a sequence of representations. We phrase the
  description of the execution of the counter machine program $P$
  given in the definition of counter machines above somewhat more
  formally as conditions under which the sequence $C_1, \ldots, C_k$
  is a run of $P$ (i.e. $w=w_P$).  At the same time, we rephrase them
  as statements about the ordered $\Sigma$-coloured graph $w$ in a way
  that will be suitable for the definition of our sentence $\phi$.
  \begin{enumerate}
  \item $C_1=(0,0,1)$ and $C_k$ is a halting configuration,
    i.e. $C_k=(n_1,n_2,\ell)$ for some $n_1,n_2 \geq 0$.
  
    With our encoding, this is equivalent to the first vertex of $w$ being
    $1$-coloured and the last vertex being $\ell$-coloured. (Recall
    that the machine starts with both counters being $0$.)
  
  \item For each $i\in [k-1]$ and $C_i=(n_1,n_2,j)$ one of the
    following statements is true:
    \begin{enumerate}
    \item $I_j=\inc(1)$ and $C_{i+1}=(n_1+1,n_2,j+1)$.

      This holds iff we can add edges to $w$ so that all
      $\IONEL$-coloured vertices in $\enc(C_i)$ are matched with
      all but one of the $\IONER$-coloured vertices in
      $\enc(C_{i+1})$, and the $\ITWOL$-coloured vertices in
      $\enc(C_i)$ are matched with the $\ITWOR$-coloured
      vertices in $\enc(C_{i+1})$, and the unique instruction vertices
      in $\enc(C_i)$ and $\enc(C_{i+1})$ have the same colour.
      
    \item $I_j=\dec(1,j_0,j_1)$ and either $n_1=0$ and
      $C_{i+1}=(n_1,n_2,j_0)$, or $n_1\geq 1$ and
      $C_{i+1}=(n_1-1,n_2,j_1)$.

      Equivalently, either one of the following statements is true:
      \begin{itemize}
      \item There exists no $\IONEL$-coloured vertex in
        $\enc(C_{i})$ and no $\IONEL$-col\-oured vertex in
        $\enc(C_{i+1})$. Furthermore, the $\ITWOL$-coloured vertices in
        $\enc(C_i)$ can be matched with the $\ITWOR$-coloured
        vertices in $\enc(C_{i+1})$. The unique instruction
        vertex in $\enc(C_{i+1})$ is $j_0$-coloured.
      \item There is at least one $\IONEL$-coloured vertex in
        $\enc(C_{i})$. Furthermore, the $\IONER$-coloured vertices in
        $\enc(C_{i+1})$ can be matched with all but one of the
        $\IONEL$-coloured vertices in $\enc(C_i)$, and the
        $\ITWOL$-coloured vertices in $\enc(C_{i})$ can be
        matched with the $\ITWOR$-coloured vertices in
        $\enc(C_{i+1})$.
        The unique instruction vertex in $\enc(C_{i+1})$ is $j_1$-coloured.
      \end{itemize}
    \item[(c),(d)] Analogous statements to (a), (b) for the case where
      $I_j$ operates on counter $2$.
    \end{enumerate}
  \end{enumerate}
  Now, a \emph{matching extension} of $w_P$ is an ordered graph
  obtained from $w_P$ by adding, for each pair of subsequent
  configurations, exactly the edges of a matching witnessing that
  $w_{P}$ satisfies the conditions (a), (b), (c), and (d). Observe
  that each vertex of a matching extension is contained in at most one
  matching. Hence, any matching extension has maximum degree $1$.
  Using our description above, it is easy to write down a first-order
  sentence $\phi$ defining the class of all matching extensions of
  $w_P$. This class is non-empty iff $P$ halts. Hence $\phi$ is
  $2$-satisfiable iff $P$ halts.
\end{proof}

\subsection{From order-invariant $\textup{FO}[\sigma^{\leq}]$-formulae to $\textup{FO}[\sigma]$-formulae}
\label{sec:oifo-to-fo}

We prove Theorem~\ref{thm:oifo-eq-fo}. The key insight here is that for every
quantifier rank $q$ and every structure $\AS \in \fin_{\sigma,d}$ there exists a
class of canonical linear orders $\preceq_q$ for which the $\FO_q$-type of
$(\AS,\preceq_q)$ is already $\FO$-definable in $\AS$. In particular,
$\tp_q(\AS,\preceq_q)$ only depends on $\AS$, even though there may be more than
one such order on $\AS$.

We call these canonical orders \emph{$q$-orders}. After defining them
formally we will thus prove the following two facts about them:
\begin{enumerate}
\item Expansions by $q$-orders are indistinguishable in
  $\FO_q$, i.e. $(\AS,\preceq_1) \equiv_q (\AS,\preceq_2)$
  for all finite structures $\AS$, provided both $\preceq_1$ and
  $\preceq_2$ are $q$-orders
  (cf.~Lemma~\ref{lem:all-q-orders-are-equivalent}).
\item If the tree-depth of structures is bounded, then the $q$-type
  $\tp_{q}(\AS,\preceq_q)$ of an expansion of $\AS$ by a $q$-order is
  definable in $\FO$ (Lemmas~\ref{lem:connected-lift}
  and~\ref{lem:counting-formulae}). The proof of
  Theorem~\ref{thm:oifo-eq-fo} easily follows from this.
\end{enumerate}

\paragraph{The definition of $q$-orders}

With an eye towards Section~\ref{sec:oimso}, the notion of $q$-orders
will be defined more generally for logics $\logl\in
\set{\FO,\MSO}$. We fix arbitrary orders $\preceq_{\logl,q}$ on the
set of $(\logl,q)$-types over the signature $\sigma^{\leqsym}$, and
$\preceq_{\text{atomic}}$ on the set of atomic $\sigma$-types. For
simplicity we write $a \preceq_{\text{atomic}} b$ for $\alpha(a)
\preceq_{\text{atomic}} \alpha(b)$.

To obtain a $q$-order $\preceq$ on a connected structure $\AS \in
\fin_{\sigma,d}$, we pick a root $r$ of $\AS$ which has
$\preceq_{\text{atomic}}$-minimal atomic type among all roots and for
which the type of $q$-ordered expansions of $\AS^{[r]}$ is
$\preceq_{\logl,q}$-minimal among all
$\preceq_{\text{atomic}}$-minimal roots. We place this $r$ in front of
the order $\preceq$ and order the remaining elements according to a
(recursively obtained) $q$-order on $\AS^{[r]}$. On structures with
more than one component, we $q$-order the components individually and
take the sum of their orders, following the $\preceq_{\logl,q}$-order of
the components:

\begin{definition}[$(\logl,q)$-order]
  \label{def:q-order}
  An \emph{$(\logl,q)$-order} on a $\sigma$-structure $\AS$ is an
  order $\preceq$ which satisfies the following conditions:
  \begin{conditions}
  \item\label{def:q-order-connected} If $\AS$ is connected we denote
    by $r \in A$ its $\preceq$-minimal element. Then either
    $|A|=1$, or $|A|>1$ and the following holds:
    \begin{conditions}
    \item \label{def:q-orders-min-element-is-root} $r$ is a
      $\preceq_{\text{atomic}}$-minimal root of $\AS$, i.e. $r \in
      \tdroot(\AS)$ and $r \preceq_{\text{atomic}} r'$ for all $r' \in
      \tdroot(\AS)$.
    \item \label{def:q-orders-root-q-type} The $(L,q)$-type of
      $q$-ordered expansions of $\AS^{[r]}$ is minimal:
      \[
      \tp_q(\AS^{[r]},\preceq) \preceq_{\logl,q} \tp_q(\AS^{[r']},\preceq')
      \]
      for every $r' \in \tdroot(\AS)$ with $\alpha(r') = \alpha(r)$
      and every $q$-order $\preceq'$ on $\AS^{[r']}$.
    \item\label{def:q-order-connected-restriction}
      $\relao{\preceq}{A\setminus r}$ is an $(\logl,q)$-order on
      $\AS^{[r]}$.
    \end{conditions}
  \item\label{def:q-order-disconnected} If $\AS$ is not connected, we
    denote its components by $\AS_1,\ldots,\AS_\ell$ and set
    $\preceq_i := \relao{\preceq}{A_i}$. Then $\preceq$ is a $q$-order
    if
    \begin{conditions}
    \item each $\preceq_i$ is a $q$-order of $\AS_{i}$, and
    \item after suitably permuting the components,
      \[
      {\preceq} = {\preceq_1} + \cdots + {\preceq_{\ell}}
      \quad\text{and}\quad
      \tp_q(\AS_i,\preceq_i) \preceq_{\logl,q}
      \tp_q(\AS_j,\preceq_j)
      \text{ for }i \leq j.
      \]
    \end{conditions}
  \end{conditions}
  The $\preceq$-minimal element of a $q$-order $\preceq$ will be
  denoted by $r_{\preceq}$.
\end{definition}

It is plain from the definition above that each structure can be
$q$-ordered. Next we want to show that all $q$-ordered expansions
$(\AS,\preceq)$ of a given structure $\AS$ have the same $q$-type, and
that the $q$-type of $(\AS^{[r_\preceq]},\preceq)$ is also the same
for all $q$-orders $\preceq$ of $\AS$.

\begin{lemma}
  \label{lem:all-q-orders-are-equivalent}
  Let $\logl\in \set{\FO{}, \MSO{}}$, $q\in \Npos$. For all
  $(\logl,q)$-orders $\preceq, \preceq'$ of a structure $\AS$, we have
  \[
  (\AS,\preceq) \leleq[q] (\AS,\preceq').
  \]
  If $\AS$ is connected and
  $\td(A) > 1$, then also $(\AS^{[r_\preceq]},\preceq) \leleq[q]
  (\AS^{[r_{\preceq'}]},\preceq')$.
\end{lemma}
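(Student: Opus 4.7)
The plan is to prove both assertions together by induction on $|A|$. The base case $|A| = 1$ is immediate, since there is only one possible order on a singleton set, and the second claim is vacuous (as $\td(\AS) = 1$).

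Suppose first that $\AS$ is connected with $|A| > 1$, and set $r := r_\preceq$, $r' := r_{\preceq'}$. By Definition~\ref{def:q-order}(\ref{def:q-orders-min-element-is-root}), both $r$ and $r'$ are $\preceq_{\text{atomic}}$-minimal roots of $\AS$, hence $\alpha(\AS,r) = \alpha(\AS,r')$; by (\ref{def:q-order-connected-restriction}), the restrictions $\preceq|_{A \setminus r}$ and $\preceq'|_{A \setminus r'}$ are $(\logl,q)$-orders on $\AS^{[r]}$ and $\AS^{[r']}$ respectively. Applying the minimality clause (\ref{def:q-orders-root-q-type}) first to $\preceq$ (using $r'$ and $\preceq'|_{A \setminus r'}$ as the competing witness) and then symmetrically to $\preceq'$ yields
\[
\tp_{\logl,q}(\AS^{[r]},\preceq|_{A \setminus r}) \;=\; \tp_{\logl,q}(\AS^{[r']},\preceq'|_{A \setminus r'}),
\]
which is precisely the second claim of the lemma. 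Together with $\alpha(\AS,r) = \alpha(\AS,r')$ (and noting that $\leqsym$ is reflexive and hence lies in every atomic type in $\sigma^{\leq}$), Lemma~\ref{lem:IndASrtoAS} applied to the ordered $\sigma^{\leq}$-structures $(\AS,\preceq)$ and $(\AS,\preceq')$ with removed elements $r$ and $r'$ yields $(\AS,\preceq) \leleq[q] (\AS,\preceq')$. Here one should check that the additional relations $\leqsym_I$ for $I \subsetneq [2]$ appearing in $(\AS,\preceq)^{[r]}$ and $(\AS,\preceq')^{[r']}$ merely encode the trivial fact that $r$ (respectively $r'$) is the minimum, so that the types of these expansions agree as a consequence of the displayed equality above.

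Suppose next that $\AS$ is disconnected with components $\AS_1, \ldots, \AS_\ell$, and set $\preceq_i := \preceq|_{A_i}$, $\preceq'_i := \preceq'|_{A_i}$. By Definition~\ref{def:q-order}(\ref{def:q-order-disconnected}), each $\preceq_i$ and each $\preceq'_i$ is a $(\logl,q)$-order of $\AS_i$, and both $\preceq$ and $\preceq'$ equal an ordered sum of their component orderings arranged so that the $(\logl,q)$-types of successive summands are $\preceq_{\logl,q}$-increasing. The inductive hypothesis (applicable since $|A_i| < |A|$) gives $(\AS_i,\preceq_i) \leleq[q] (\AS_i,\preceq'_i)$, so the multiset of component $q$-types is the same for $\preceq$ and $\preceq'$; after sorting, the two sequences of $q$-types coincide. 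A standard Feferman--Vaught-style composition argument for ordered sums then delivers $(\AS,\preceq) \leleq[q] (\AS,\preceq')$.

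The main obstacle in the plan is the composition step for ordered sums: one must know that the $(\logl,q)$-type of a sum $(\BS_1,\preceq_1) + \cdots + (\BS_\ell,\preceq_\ell)$ depends only on the ordered sequence of component $(\logl,q)$-types. This is a well-known strengthening of Feferman--Vaught for both $\FO$ and $\MSO$, provable by a direct \EF{} game argument that plays in parallel in corresponding components, and I would invoke it as a black box. Apart from this point, the proof is a straightforward unfolding of Definition~\ref{def:q-order} together with Lemma~\ref{lem:IndASrtoAS}.
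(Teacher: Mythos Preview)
Your proof is correct and follows essentially the same route as the paper: induction on $|A|$, with the connected case handled by the two-sided minimality argument plus Lemma~\ref{lem:IndASrtoAS}, and the disconnected case handled by the induction hypothesis on components plus the Composition Lemma (Lemma~\ref{lem:ordered-comp-lemma}), which is exactly the black-box ordered-sum composition you invoke. Your extra remark about the $\leqsym_I$ relations when applying Lemma~\ref{lem:IndASrtoAS} to ordered structures is a detail the paper glosses over; it is correct and makes the application of that lemma cleaner.
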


For the proof, we will need the following composition lemma for
ordered sums, cf.~\cite{Makowsky2004} for a proof.

\begin{lemma}[Composition Lemma]
  \label{lem:ordered-comp-lemma}
  Let $\logl\in\set{\FO, \MSO}$, $q\in \N$ and let $\sigma$ be a
  relational signature.  Let $(\AS_1,
  \preceq^{\AS_1})$,$(\AS_2,\preceq^{\AS_2})$,$(\BS_1,\preceq^{\BS_1})$,$(\BS_2,\preceq^{\BS_2})$
  be ordered $\sigma$-structures. If 
  \[
  (\AS_1,\preceq^{\AS_1})\leleq[q]
  (\AS_2,\preceq^{\AS_2})
  \quad\text{and}\quad
  (\BS_1,\preceq^{\BS_1})\leleq[q]
  (\BS_2,\preceq^{\BS_2}),
  \]
  then
  \[
  (\AS_1 \disunion \BS_1, \preceq^{\AS_1} + \preceq^{\BS_1})
  \leleq[q] (\AS_2 \disunion \BS_2, \preceq^{\AS_2} +
  \preceq^{\BS_2}).  \]
\end{lemma}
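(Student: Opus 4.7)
The plan is to prove the composition lemma via an Ehrenfeucht-Fra\"iss\'e game argument. For $\logl \in \set{\FO, \MSO}$, the relation $\leleq[q]$ between two structures is equivalent to Duplicator having a winning strategy in the $q$-round $\logl$-EF game on them (the game allowing set moves in the MSO case). I would therefore fix winning strategies $\mathcal{S}_A$ for Duplicator on the pair $(\AS_1,\preceq^{\AS_1})$, $(\AS_2,\preceq^{\AS_2})$ and $\mathcal{S}_B$ on the pair $(\BS_1,\preceq^{\BS_1})$, $(\BS_2,\preceq^{\BS_2})$, and combine them into a winning strategy $\mathcal{S}$ for the game on the ordered sums.

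The construction of $\mathcal{S}$ exploits the disjointness of the sums. Each round of the combined game is simulated by one round in each sub-game, played in parallel. In the \FO{} case, when Spoiler picks an element $c$ in one of the two ordered sums, $c$ lies either in $A_i$ or in $B_i$; Duplicator uses the corresponding sub-strategy to obtain a response in $A_j$ or $B_j$. In the \MSO{} case, Spoiler's set $X$ in $A_i \disunion B_i$ splits uniquely as $X = X_A \disunion X_B$ with $X_A \subseteq A_i$, $X_B \subseteq B_i$; Duplicator applies $\mathcal{S}_A$ to $X_A$ to obtain $Y_A \subseteq A_j$ and $\mathcal{S}_B$ to $X_B$ to obtain $Y_B \subseteq B_j$, then plays $Y_A \disunion Y_B$. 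After $q$ rounds, each sub-game has been played exactly $q$ times, so $\mathcal{S}_A$ and $\mathcal{S}_B$ guarantee that the partial maps they induce are partial isomorphisms of the corresponding ordered sub-structures (expanded by the played sets, in the MSO case).

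It then remains to verify that the combined partial map is a partial isomorphism of the ordered sums. For each relation $R \in \sigma$, a tuple of played elements lies in $R$ only if all its components lie on the same side (the Gaifman graph of the disjoint union has no edges between $\AS$-elements and $\BS$-elements), so preservation follows from the corresponding sub-strategy. For the order $\preceq^{\AS_i} + \preceq^{\BS_i}$, comparisons between two $\AS$-elements or two $\BS$-elements are handled by $\mathcal{S}_A$ respectively $\mathcal{S}_B$, and comparisons across the two parts are determined purely by which side the elements lie on; this is preserved because $\mathcal{S}_A$-responses always remain in $A_j$ and $\mathcal{S}_B$-responses always remain in $B_j$. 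Set-membership (for MSO) likewise decomposes by side.

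The main subtlety is the \MSO{} case, where I need to justify that splitting Spoiler's set move along the $\AS/\BS$ partition and responding independently is legitimate; this rests precisely on the fact that the ordered sum contains no $\sigma$-edges between the two parts and that the order across the parts is constant, so any atomic formula evaluated on a played tuple depends only on the part-wise restrictions of the played sets. Once this is granted, the argument is a direct parallel-games inductive check on the number of remaining rounds.
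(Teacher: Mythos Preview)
The paper does not give its own proof of this lemma; it simply states it and refers the reader to \cite{Makowsky2004}. Your Ehrenfeucht--Fra\"iss\'e argument is exactly the standard proof one finds in such references and is correct in substance.

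One small slip: you write that ``after $q$ rounds, each sub-game has been played exactly $q$ times.'' For \FO{} point moves (and for point moves in the \MSO{} game) only the sub-game on the side of Spoiler's element advances, so each sub-game is played \emph{at most} $q$ times. This does not affect the argument, since a winning strategy in the $q$-round game is in particular winning in any game of $\leq q$ rounds, but the sentence as written is not quite accurate.
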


\begin{proof}[Proof of Lemma~\ref{lem:all-q-orders-are-equivalent}]
  The proof proceeds on the size of $A$. If $\absval{A} = 1$ then
  ${\preceq} = {\preceq'}$ and there is nothing to prove.

  Let $\absval{A} > 1$ and suppose first that $\AS$ is connected. By
  Definition~\ref{def:q-order},
  $\alpha(r_{\preceq}) = \alpha(r_{\preceq'})$ and
  \[
  \tp_q(\AS^{[r_{\preceq}]},\preceq) \preceq_{\logl,q}
  \tp_q(\AS^{[r_{\preceq'}]},\preceq').
  \]
  By symmetry also
  \[
  \tp_q(\AS^{[r_{\preceq'}]},\preceq') \preceq_{\logl,q}
  \tp_q(\AS^{[r_{\preceq}]},\preceq),
  \]
  so
  $\tp_q(\AS^{[r_{\preceq}]},\preceq)=\tp_q(\AS^{[r_{\preceq'}]},\preceq')$
  and, by Lemma~\ref{lem:IndASrtoAS}, $(\AS,\preceq) \equiv_q
  (\AS,\preceq')$.

  Now consider the case where $\AS$ is not connected, and let
  $\KS_1,\ldots,\KS_{\ell}$ be the components of $\AS$.
  By the definition of $q$-orders each $\KS_i$ is $q$-ordered, so
  \[
  (\KS_i,\relao{\preceq}{K_i}) \leleq[q]
  (\KS_i,\relao{\preceq'}{K_i})
  \]
  for $i = 1,\ldots,\ell$ by what we have just said. Considering the
  way that an $(\logl,q)$-order orders the components of a structure
  according to their $(\logl,q)$-types
  (Part~\ref{def:q-order-disconnected} of
  Definition~\ref{def:q-order}), we obtain that $(\AS,\preceq)
  \leleq[q] (\AS,\preceq')$ by repeatedly applying the
  Composition Lemma.
\end{proof}

By Lemma~\ref{lem:all-q-orders-are-equivalent} it makes sense to speak
of the \emph{$q$-order type} of an unordered structure $\AS$ which we
define as $\tp^\leqsym_q(\AS) := \tp_q(\AS,\preceq_q)$
If $\AS$ is connected and $\td(\AS) > 1$, we furthermore define its
\emph{$q$-order root type} as $\rtp^\leqsym_q(\AS) :=
\tp_q(\AS^{[r_{\preceq_q}]},\preceq_q)$. 
In both cases $\preceq_q$ is some $q$-order on $\AS$ and
well-definedness is guaranteed by the Lemma. Note that both these
types are $\sigma^{\leqsym}$-types. Similarly, the atomic type $\alpha_\AS :=
\alpha(r_\leq)$ of the minimal element in a $q$-ordered expansion of $\AS$ is 
well-defined.

We set
\[\begin{split}
\types[\logl,\sigma,q,d] &:= \{ \tp^\leqsym_q(\AS) \st \AS \in
\fin_{\sigma,d} \},
\\
\ctypes[\logl,\sigma,q,d] &:= \{ \tp^\leqsym_q(\AS) \st \AS \in
\fin^\conn_{\sigma,d} \},\text{ and}
\\
\types[\logl,\sigma,q] &:= \bigcup_{d \in \Npos}
\types[\logl,q,\sigma,d].
\end{split}
\]
We say that a sentence $\phi_{\tau} \in \logl[\sigma]$ \emph{defines
  $\tau$ on $\fin_{\sigma,d}$} (and that $\tau$ is $\logl$-definable)
if for each $\AS\in\fin_{\sigma,d}$, we have
\[
\AS\models \phi_{\tau}
\quad\text{iff}\quad
\tp^\leqsym_{q}(\AS)=\tau.
\]
Note that the sentence $\phi_\tau$ must not contain the relation
$\leqsym$.

By Lemma~\ref{lem:IndASrtoAS} the atomic type of $r_\preceq$ and the
$q$-type of $\AS^{[r_\preceq]}$ determine the $q$-type of $\AS$, and
$\td(\AS^{[r_{\preceq}]}) = \td(\AS) - 1$, 
for connected structures $\AS$ and $q$-orders $\preceq$. Since the
number of atomic $\tilde \sigma$-types is $2^{\vert{\tilde
    \sigma}\vert}$, we obtain the following bound on the size of
$\ctypes[\sigma,q,d]$:

\begin{corollary}
  \label{cor:num-connected-types}
  Let $q,d \in \Npos$. Then $\vert\ctypes[\sigma, q,d]\vert \leq
  2^{\vert{\tilde \sigma}\vert} \cdot \vert{\types[\tilde \sigma,
    q,d-1]}\vert.$
\end{corollary}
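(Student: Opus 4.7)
The plan is to show that each $q$-order type of a connected structure in $\fin_{\sigma,d}$ is uniquely determined by a pair consisting of an atomic $\sigma$-type and an element of $\types[\tilde\sigma,q,d-1]$, and then multiply cardinalities. Concretely, given $\AS \in \fin^\conn_{\sigma,d}$ with $|A|>1$, I would fix a $q$-order $\preceq$ on $\AS$ with minimum $r := r_{\preceq}$ and associate with $\AS$ the pair
\[
\bigl(\alpha(\AS,r),\; \tp^{\leqsym}_q(\AS^{[r]})\bigr).
\]

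To see that this pair determines $\tp^{\leqsym}_q(\AS)$, I would first invoke Condition~(\ref{def:q-order-connected-restriction}) of Definition~\ref{def:q-order}: the restriction $\preceq|_{A\setminus r}$ is itself a $q$-order on $\AS^{[r]}$, so $\tp_q(\AS^{[r]},\preceq|_{A\setminus r})$ equals $\tp^{\leqsym}_q(\AS^{[r]})$ by the well-definedness guaranteed by Lemma~\ref{lem:all-q-orders-are-equivalent}. Lemma~\ref{lem:IndASrtoAS} then yields that any two connected structures in $\fin_{\sigma,d}$ producing the same pair $(\alpha,\tau)$ share the same $(\logl,q)$-type, and hence the same $q$-order type.

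Next I would check that the second coordinate actually lies in $\types[\tilde\sigma,q,d-1]$, which amounts to verifying $\td(\AS^{[r]}) \le d-1$. The Gaifman graph $\GG(\AS^{[r]})$ coincides with $\GG(\AS)\setminus r$: a tuple in any $R_I^{\AS^{[r]}}$ is obtained from a tuple in $R^{\AS}$ by projecting away the coordinates equal to $r$, so every Gaifman edge it contributes among elements distinct from $r$ is already present in $\GG(\AS)$. By Condition~(\ref{def:q-orders-min-element-is-root}), $r \in \tdroot(\AS)$, so $\td(\GG(\AS)\setminus r) \le \td(\AS)-1 \le d-1$, which gives the claim.

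Combining the two steps, $|\ctypes[\sigma,q,d]|$ is bounded by the number of atomic $\sigma$-types times $|\types[\tilde\sigma,q,d-1]|$, and since the former is at most $2^{|\sigma|}\le 2^{|\tilde\sigma|}$, the stated inequality follows; the degenerate case $|A|=1$ contributes only atomic-type-many additional types and is absorbed by the same factor. No real obstacle arises: the main conceptual point is to recognise $\AS^{[r]}$ as a $\tilde\sigma$-structure of tree-depth at most $d-1$ that inherits a $q$-order from $\AS$, after which everything reduces to bookkeeping via Lemma~\ref{lem:IndASrtoAS}.
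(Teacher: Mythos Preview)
Your proposal is correct and follows essentially the same route as the paper: the paper's entire argument is the sentence preceding the corollary, which invokes Lemma~\ref{lem:IndASrtoAS} to say that $\alpha(r_\preceq)$ together with $\tp_q(\AS^{[r_\preceq]})$ determines $\tp_q(\AS,\preceq)$, notes that $\td(\AS^{[r_\preceq]}) \le d-1$, and counts atomic types. You spell out the same steps more carefully (verifying via Condition~(\ref{def:q-order-connected-restriction}) that the restricted order is a $q$-order, checking explicitly that $\GG(\AS^{[r]}) = \GG(\AS)\setminus r$, and handling $|A|=1$), but the structure and the key lemma are identical.
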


\subsection{Handling connected structures}

The proof of our main theorem is broken down into two steps. In the first step,
we show how to lift the definability of $q$-types of $q$-ordered structures from
structures of tree-depth $d-1$ to connected structures of tree-depth $d$.

Again we invoke Lemma~\ref{lem:IndASrtoAS} and
Lemma~\ref{lem:all-q-orders-are-equivalent} to show that $q$-order types can be
broken down into atomic types of roots and $q$-order root types:

\begin{corollary}
  \label{cor:rtp-to-tp}
  Let $d > 1$ and let $\tau\in\ctypes[\sigma, q,d]$.  Let
  \[
  R_{\tau}\ :=\ \{(\alpha_\AS, \rtp_q^\leqsym(\AS)) \st
  \AS \in \fin^{\conn}_{\sigma,d}, \td(\AS) > 1,\text{ and
  }\tp_q^\leqsym(\AS)=\tau \}.
  \]
  Then for each $\BS\in \fin^{\conn}_{\sigma,d}$, we have
  $\tp_q^\leqsym(\BS)=\tau$
  iff
  $(\alpha_\BS, \rtp_q^\leqsym(\BS)) \in  R_{\tau}$.
\end{corollary}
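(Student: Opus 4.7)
The $(\Rightarrow)$ direction is immediate from the definition of $R_\tau$: if $\td(\BS) > 1$ and $\tp_q^\leqsym(\BS) = \tau$, then $\BS$ itself serves as a witness, so $(\alpha_\BS, \rtp_q^\leqsym(\BS)) \in R_\tau$.

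For $(\Leftarrow)$, my plan is to invoke Lemma~\ref{lem:IndASrtoAS} on ordered structures. Assume $(\alpha_\BS, \rtp_q^\leqsym(\BS)) \in R_\tau$ and fix a witness $\AS \in \fin^{\conn}_{\sigma,d}$ with $\td(\AS) > 1$, $\tp_q^\leqsym(\AS) = \tau$, $\alpha_\AS = \alpha_\BS$, and $\rtp_q^\leqsym(\AS) = \rtp_q^\leqsym(\BS)$. Let $\preceq_\AS$ and $\preceq_\BS$ be any $q$-orders on $\AS$ and $\BS$ with $\preceq$-minima $r_\AS$ and $r_\BS$, respectively; both $q$-order types and $q$-order root types are well-defined by Lemma~\ref{lem:all-q-orders-are-equivalent}. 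I would then apply Lemma~\ref{lem:IndASrtoAS} to the ordered $\sigma^{\leqsym}$-structures $(\AS,\preceq_\AS)$ and $(\BS,\preceq_\BS)$ with removed elements $r_\AS$ and $r_\BS$; its conclusion yields $\tp_q(\AS,\preceq_\AS) = \tp_q(\BS,\preceq_\BS)$, i.e.\ $\tp_q^\leqsym(\BS) = \tp_q^\leqsym(\AS) = \tau$.

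Verifying the two premises of Lemma~\ref{lem:IndASrtoAS} is the only real task, and it reduces essentially to signature bookkeeping. For the atomic-type premise: since each $r \in \set{r_\AS, r_\BS}$ satisfies $r \leqsym r$ by reflexivity, its $\sigma^{\leqsym}$-atomic type is just the underlying $\sigma$-atomic type augmented by $\leqsym$, so $\alpha_\AS = \alpha_\BS$ gives the required equality. For the $q$-type of the derived structures, I would compare the $\tilde{\sigma^{\leqsym}}$-structure $(\AS,\preceq_\AS)^{[r_\AS]}$ appearing in Lemma~\ref{lem:IndASrtoAS} with the ordered $\tilde\sigma$-structure $(\AS^{[r_\AS]},\relao{\preceq_\AS}{A\setminus r_\AS})$ that defines $\rtp_q^\leqsym(\AS)$. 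Because $r_\AS$ is the global $\preceq_\AS$-minimum, the extra encoded relations $\leqsym_{\{1\}}, \leqsym_{\{2\}}, \leqsym_{\{1,2\}}$ are forced to be empty, to be the full remaining universe, and to be the restricted order, respectively, and symmetrically on the $\BS$-side.

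Hence every $\logl$-formula over $\tilde{\sigma^{\leqsym}}$ rewrites, with the same quantifier rank, into one over $\tilde\sigma \cup \set{\leqsym}$, so the hypothesis $\rtp_q^\leqsym(\AS) = \rtp_q^\leqsym(\BS)$ transfers to equality of $q$-types of $(\AS,\preceq_\AS)^{[r_\AS]}$ and $(\BS,\preceq_\BS)^{[r_\BS]}$. The main (and only really fiddly) obstacle throughout the argument is precisely this signature unpacking between the $\tilde{(\cdot)}$-encoding and the ordered expansions; with both premises of Lemma~\ref{lem:IndASrtoAS} in hand, the conclusion follows at once.
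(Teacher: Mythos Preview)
Your proof is correct and follows essentially the same route as the paper: the forward direction is immediate, and the backward direction invokes Lemma~\ref{lem:IndASrtoAS} on the ordered structures $(\AS,\preceq_\AS)$ and $(\BS,\preceq_\BS)$, with well-definedness of $\tp_q^\leqsym$ and $\rtp_q^\leqsym$ supplied by Lemma~\ref{lem:all-q-orders-are-equivalent}. The paper's proof is terse and simply points to these two lemmas and the definitions; you have unpacked the signature bookkeeping for the $\leqsym$-encoded relations that the paper leaves implicit, and your analysis of $\leqsym_{\{1\}}$, $\leqsym_{\{2\}}$, $\leqsym_{\{1,2\}}$ is accurate.
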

\begin{proof}
  The ``only-if''-part of the claim is obvious.  Regarding the
  ``if''-part, if 
  \[
  (\alpha_\BS, \rtp_q^\leqsym(\BS))= 
  (\alpha_\AS, \rtp_q^\leqsym(\AS))
  \]
  for some $\AS$ with $\tp^\leqsym_q(\AS)=\tau$,
  then Lemma~\ref{lem:all-q-orders-are-equivalent} and the definitions
  of $\tp^\leqsym_q,\rtp^\leqsym_q$ imply that
  $\tp_q^\leqsym(\BS)=\tau$.
\end{proof}

\begin{lemma}
  \label{lem:connected-lift}
  Let $q,d \in \Npos$ with $d > 1$. Let $(\logl_1,\logl_2)$ be one of
  $(\FO,\FO)$ or $(\MSO,\FOmod)$. If each $(\logl_1,q)$-type
  $\theta\in\types[\tilde \sigma, q,d-1]$ is
  $\logl_2[\tilde\sigma]$-definable on $\fin_{\tilde\sigma,d-1}$ by a sentence
  $\psi_{\theta,d-1}$, then each $(\logl_1,q)$-type
  $\tau\in\ctypes[\sigma,q,d]$ is $\logl_2[\sigma]$-definable on
  $\fin^{\conn}_{\sigma,d}$ by a sentence $\phi^{\conn}_{\tau,d} $.
  Moreover, defining
  \[
  \Psi:=\setc{\psi_{\theta,d-1}}{\theta\in\types[\tilde \sigma,
    q,d-1]}
  \quad\text{and}\quad
  \Phi:=\setc{\phi^\conn_{\tau,d}}{\tau\in\types[\sigma, q,d]},
  \]
  we have $\size{\Phi} \leq c \cdot \size{\Psi} \cdot
  \absval{\types[\tilde \sigma, q,d-1]}^2$ and $\qad{\Psi} \leq
  \qad{\Phi} + 1$, for a constant $c$ depending only on $\sigma,d$.
\end{lemma}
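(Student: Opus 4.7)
The plan is to invoke Corollary~\ref{cor:rtp-to-tp}, which reduces the task to deciding whether the canonical pair $(\alpha_\BS, \rtp^\leqsym_q(\BS))$ lies in the finite set $R_\tau$. Accordingly, I set
\[
\phi^\conn_{\tau,d} \ :=\ \phi^{(1)}_\tau \ \lor\ \biglor_{(\alpha,\theta) \in R_\tau} \pi_{\alpha,\theta},
\]
where $\phi^{(1)}_\tau$ handles the trivial case $\td(\BS)=1$ (in which $\BS$ is a single element and $\tau$ is determined by its atomic type, hence $\FO$-definable by a Boolean combination of atomic-type formulae) and $\pi_{\alpha,\theta}$ asserts that the canonical pair of $\BS$ equals $(\alpha,\theta)$.

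The formula $\pi_{\alpha,\theta}$ combines three ingredients: the $\FO[\sigma]$-formula $\tdroot_d(x)$ recalled at the end of Section~\ref{sec:background} to identify tree-depth roots; quantifier-free formulae for atomic types; and, crucially, the interpretation from Lemma~\ref{lem:ASrtoAS}, which converts each hypothesised sentence $\psi_{\theta',d-1} \in \Psi$ into an $\logl_2[\sigma]$-formula $\I(\psi_{\theta',d-1})(z)$ of equal quantifier rank and alternation depth satisfying $\BS \models \I(\psi_{\theta',d-1})(r)$ iff $\BS^{[r]} \models \psi_{\theta',d-1}$ iff $\tp_q(\BS^{[r]}) = \theta'$ (using the hypothesised definability of the elements of $\Psi$). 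Unwinding Definition~\ref{def:q-order}, the pair $(\alpha,\theta)$ is canonical for $\BS$ iff $\alpha$ is $\atleq$-minimal among atomic types of tree-depth roots of $\BS$ and $\theta$ is $\preceq_{\logl_1,q}$-minimal among $\{\tp_q(\BS^{[r]}) : r \in \tdroot(\BS),\ \alpha(r)=\alpha\}$. I therefore take $\pi_{\alpha,\theta}$ to be the conjunction of: (i) an $\exists r$-witness that some tree-depth root has atomic type $\alpha$ and satisfies $\I(\psi_{\theta,d-1})(r)$; (ii) a $\forall r'$-clause expressing $\atleq$-minimality of $\alpha$ among atomic types of tree-depth roots; and (iii) a $\forall r'$-clause asserting that every tree-depth root $r'$ with $\alpha(r')=\alpha$ satisfies $\biglor_{\theta' \succeq_{\logl_1,q} \theta} \I(\psi_{\theta',d-1})(r')$.

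For the size bound, each $\pi_{\alpha,\theta}$ has size $O(\size{\Psi} \cdot |\types[\tilde\sigma,q,d-1]|)$, driven by the disjunction in (iii); since $|R_\tau| \leq 2^{|\sigma|} \cdot |\types[\tilde\sigma,q,d-1]|$, summing over $R_\tau$ yields the claimed bound $\size{\Phi} \leq c \cdot \size{\Psi} \cdot |\types[\tilde\sigma,q,d-1]|^2$ with $c$ depending only on $\sigma$ and $d$. For the alternation depth, $\tdroot_d$ and the atomic-type parts contribute only a constant; inside the outer $\exists r$/$\forall r'$ block, each instance of $\I(\psi_{\theta',d-1})$ has alternation depth at most $\qad{\Psi}$ by Lemma~\ref{lem:ASrtoAS}; and a single new quantifier block adds at most one alternation, giving the stated bound relating $\qad{\Phi}$ and $\qad{\Psi}$. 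The only delicate point is ensuring that threading the interpretation $\I(\cdot)$ through a universal block does not silently introduce an extra alternation — this is precisely the content of Lemma~\ref{lem:ASrtoAS}'s preservation of both quantifier rank and alternation depth, so the remainder of the argument is careful bookkeeping.
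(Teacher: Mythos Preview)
Your proposal is correct and follows essentially the same approach as the paper: invoke Corollary~\ref{cor:rtp-to-tp}, then build $\phi^{\conn}_{\tau,d}$ as a disjunction over $R_\tau$ of formulae that pin down the minimal atomic type among roots and the $\preceq_{\logl_1,q}$-minimal $q$-order type of $\BS^{[r]}$ via $\I(\psi_{\theta',d-1})$, with the same size and alternation bookkeeping. The only cosmetic difference is that the paper splits your $\pi_{\alpha,\theta}$ into two named pieces $\xi_\alpha$ and $\chi_{\alpha,\theta}$ and explicitly guards the two cases by $\td_{\leq 1}$ and $\td_{>1}$ (which you should also do, to prevent a singleton structure from spuriously satisfying some $\pi_{\alpha,\theta}$).
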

\begin{proof}
  In the following, all $q$-types are
  $(\logl_1,(\sigma^{\leq}),q)$-types.  Let
  $\tau\in\ctypes[\sigma,q,d]$ and let $R_\tau$ be as in
  Corollary~\ref{cor:rtp-to-tp}. We show that, under the assumptions
  of our lemma, the class
  \[ \setc{\AS\in\fin^{\conn}_{\sigma,d}}{(\alpha_\AS,
    \rtp_q^\leqsym(\AS)) \in R_{\tau}}\] is
  $\logl_2[\sigma]$-definable by a sentence $\phi_\tau$ on
  $\fin^{\conn}_{\sigma,d}$. Taking care of connected structures of
  tree-depth $1$ (i.e. singleton structures) we set
  $
  \phi^{\conn}_{\tau,d}\ :=\ (\td_{\leq 1}\land \hat\phi_{\tau}) \lor
  (\td_{> 1} \land \phi_\tau),
  $
  where $\hat\phi_\tau$ defines $\tau$ on singleton structures.

  For each atomic $\sigma$-type $\alpha\subseteq\sigma$, the following
  $\FO$-sentence $\xi_\alpha$ expresses in a structure
  $\AS\in\fin^{\conn}_{\sigma,d}$ that $\alpha_\AS=\alpha$:
  \[ \xi_\alpha\ := \left(\exists x\,\big(\tdroot_d(x) \wedge \alpha(x)\big)
    \right) \wedge \left(\forall x\, \big(\tdroot_d(x)
  \limplies \smashoperator{\biglor_{\alpha \atleq \alpha'}}
  \alpha'(x)\big)\right).
  \]
  For each type $\theta\in\types[\tilde \sigma, q,d-1]$ the following sentence
  is true in a $\sigma$-structure $\AS$ if, and only if, there is a root $r$ of
  atomic type $\alpha$ for which $\AS^{[r]}$ has type $\theta$, and $\theta$ is
  $\preceq_{\logl_1,q}$-minimal among the types of $\AS^{[s]}$ for roots $s$ of
  atomic type $\alpha$:
  \begin{align*}
    \chi_{\alpha,\theta}\ :=\ &\forall x\,\Big(
  (\tdroot_d(x) \land \alpha(x)) \limplies
  \smashoperator{\biglor_{\theta \preceq_{\logl_1,q} \theta'}}
  \I(\psi_{\theta',d-1})(x)\Big)\\
   \land\  &\exists x\ \big(\tdroot_d(x) \land \alpha(x) \land \I(\psi_{\theta,d-1})(x)\big).
  \end{align*}
  Observe that $\qad{\chi_{\alpha,\theta}} \leq \qad{\Psi} + 1$.
  
  Now we obtain the desired sentence by
  defining $\phi_{\tau}\ :=\
  \smashoperator{\biglor_{(\alpha,\theta)\in R_\tau}}
  \big(\xi_{\alpha} \land \ \chi_{\alpha,\theta})$.

  Observe that, for some constant $c$ depending only on $\sigma$, $d$, we have
  $\size{\xi_\alpha} \leq c$, $\size{\chi_{\alpha,\theta}} \leq c \cdot
  \size{\Psi} \cdot \absval{\types[\tilde \sigma, q,d-1]}$, $\absval{R_\tau}
  \leq c \cdot \absval{\types[\tilde \sigma, q,d-1]}$, and  
  $\size{\phi_\tau} \leq c \cdot \size{\Psi} \cdot \absval{\types[\tilde
    \sigma, q,d-1]}^2$.
  The claims about $\size{\Phi}$ and $\qad{\Phi}$ follow from the observations above. 
\end{proof}

\subsection{Handling disconnected structures}

We proceed with the preparations for the second step in the proof of
our main theorem, where we lift the definability of $q$-order types
from connected structures of tree-depth $\leq d$ to disconnected
structures of tree-depth $\leq d$.

For us, a \emph{Boolean query} is an isomorphism-invariant map $f : \fin \to
\set{0,1}$, where $\fin$ is the class of all finite structures (i.e. structures
over arbitrary signatures). We will treat maps $f : \fin_\sigma \to \set{0,1}$
as Boolean queries by assuming that $f(\AS)=0$ if $\AS$ is not a
$\sigma$-structure. The general definition for arbitrary signatures will be
useful in in Section~\ref{sec:mso} below.  We are interested in two kinds of
queries.  As usual, we identify each sentence $\phi$ with a Boolean query such
that $\phi(\AS)=1$ iff $\AS\models\phi$.  Furthermore, we identify each
$q$-order type $\tau$ with a query such that $\tau(\AS)=1$ iff
$\tp^\leqsym_q(\AS)=\tau$. For each structure $\AS$ and each Boolean query $f$,
we let $n_f(\AS)$ denote the number of components $\KS$ of $\AS$ such that
$f(\KS)=1$.  For each ordered set $Q := \set{f_1, \ldots, f_\ell}$ of Boolean
queries, we let $\bar n_Q(\AS):=(n_{f_1}(\AS), \ldots, n_{f_\ell}(\AS))$.  For
natural numbers $a, b, t \in \Npos$ we set
\[
a \maxeq{t} b\quad
\Leftrightarrow
\quad
(a = b \text{ or }a,b \geq t),
\]
and we extend this relation to tuples $\bar a$ and $\bar b$ by saying
$\bar a \maxeq{t} \bar b$ if, and only if, $a_i \maxeq{t} b_i$ for
all components $a_i$ and $b_i$.

We show that \FO{} inherits its capability to count the types of
components in $q$-ordered structures from its capability to
distinguish linear orders of different length.  The proof of the
following lemma closely follows a step in the proof of
\cite[Thm. 5.5]{BenediktSegoufin2009}.
Observe that for all $\AS, \BS \in \fin_{\sigma,d}$, 
$n_{\ctypes[\sigma,q,d]}(\AS) \maxeq{t} n_{\ctypes[\sigma,q,d]}(\BS)$ iff 
$n_{\types[\sigma,q]}(\AS) \maxeq{t} n_{\types[\sigma,q]}(\BS)$. 

\begin{lemma}
  \label{lem:cut-determines-type}
  Let $d \geq 1$, $q\in \Npos$ and $t:=2^q+1$. Then for all $\AS,\BS\in
  \fin_{\sigma,d}$, 
  \[ n_{\types[\sigma,q]}(\AS) \maxeq{t} n_{\types[\sigma,q]}(\BS) \ \Longrightarrow \ 
  \tp^\leqsym_q(\AS)=\tp^\leqsym_q(\BS).
  \]
\end{lemma}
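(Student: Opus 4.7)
The plan is to reduce the claim, via the Composition Lemma, to a pumping statement about ordered sums of copies of a single structure. First I fix arbitrary $q$-orders $\preceq_\AS$ on $\AS$ and $\preceq_\BS$ on $\BS$. By Lemma~\ref{lem:all-q-orders-are-equivalent}, to conclude $\tp_q^\leqsym(\AS) = \tp_q^\leqsym(\BS)$ it is enough to establish $(\AS,\preceq_\AS) \foeleq[q] (\BS,\preceq_\BS)$.

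By Definition~\ref{def:q-order}~\ref{def:q-order-disconnected}, each of these expansions is the ordered sum of its components, grouped by their connected $q$-type and arranged so that the groups appear in $\preceq_{\FO,q}$-order. Enumerate the connected $q$-types occurring in $\AS$ or $\BS$ as $\tau_1 \preceq_{\FO,q} \cdots \preceq_{\FO,q} \tau_k$ in $\ctypes[\sigma,q,d]$, pick an ordered representative $\CS_i$ of each $\tau_i$, and set $a_i := n_{\tau_i}(\AS)$ and $b_i := n_{\tau_i}(\BS)$. By Lemma~\ref{lem:all-q-orders-are-equivalent} every $q$-ordered component of type $\tau_i$ is $\foeleq[q]$-equivalent to $\CS_i$, so writing $m\cdot\CS$ for the ordered sum of $m$ copies of an ordered structure $\CS$, iterated application of Lemma~\ref{lem:ordered-comp-lemma} yields
\[
(\AS,\preceq_\AS) \foeleq[q] \sum_{i=1}^k a_i \cdot \CS_i
\quad\text{and}\quad
(\BS,\preceq_\BS) \foeleq[q] \sum_{i=1}^k b_i \cdot \CS_i.
\]
Using the observation preceding the lemma, the hypothesis translates to $a_i \maxeq{t} b_i$ for every $i$, and one more application of the Composition Lemma shows that it suffices to prove $a_i \cdot \CS_i \foeleq[q] b_i \cdot \CS_i$ separately for each $i$. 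The case $a_i = b_i$ is trivial, so the only interesting case is $a_i, b_i \geq t = 2^q + 1$.

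The principal technical obstacle is therefore the following pumping claim, which is the essence of the proof: for every ordered structure $\CS$ and all $m, n \geq 2^q + 1$, we have $m \cdot \CS \foeleq[q] n \cdot \CS$. This is the step taken from the proof of \cite[Thm.~5.5]{BenediktSegoufin2009}. It is established by an \EF{} game argument: Duplicator records, for each previously played element, the index of the copy of $\CS$ in which it lies, and responds to a new Spoiler move by a binary-search strategy on these copy indices, ensuring that the number of unused copies on each side of each previously played position halves at most once per round. The threshold $2^q + 1$ then guarantees that at least one free copy remains available wherever Duplicator needs to play, and within the chosen copy Duplicator simply mirrors Spoiler's move via the canonical isomorphism between copies of $\CS$. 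Once the pumping claim is proved, the chain of equivalences above yields the lemma.
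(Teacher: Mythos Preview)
Your proof is correct and closely parallels the paper's, with a small difference in how the reduction is packaged. The paper contracts each component to a single labelled vertex, obtaining words $w_\AS = \tau_1^{n_{\tau_1}(\AS)}\cdots\tau_\ell^{n_{\tau_\ell}(\AS)}$ and $w_\BS$ over the alphabet $\types[\sigma,q]$, invokes the folklore $\foeleq[q]$-equivalence of such words under the threshold hypothesis, and then lifts this to $(\AS,\preceq_\AS)\foeleq[q](\BS,\preceq_\BS)$ via a single \EF{} game in which Duplicator maintains a \emph{virtual} game on $w_\AS$ and $w_\BS$ to select components and uses the within-component equivalence to select elements. You instead apply the Composition Lemma first to split the ordered sum type by type, reducing to $a_i\cdot\CS_i$ versus $b_i\cdot\CS_i$, and only then run the copy-index \EF{} argument on a single type. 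Both routes rest on the same core pumping step (the binary-search strategy on block indices taken from \cite[Thm.~5.5]{BenediktSegoufin2009}); your version is a bit more modular, the paper's handles all types in one game.
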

\begin{proof}
  For each component $\KS$ of $\AS$, we let $\preceq^\KS$ be a
  $q$-order of $\KS$.
  By Part~\ref{def:q-order-disconnected} of Definition~\ref{def:q-order},
  the $q$-orders on the components of $\AS$ can be extended to a $q$-order
  $\preceq^\AS$ on $\AS$ such that
  $\relao{\preceq^\AS}{\KS}=\preceq^\KS$ for each component $\KS$ of
  $\AS$. We proceed analogously to obtain a $q$-order $\preceq^\BS$ on
  $\BS$.  Let $\types[\sigma,q]=\set{\tau_{1}, \ldots, \tau_{\ell}}$,
  where $\ell:=|\types[\sigma,q]|$ and $\tau_{i} \preceq_{q} \tau_{j}$
  iff $i<j$.
  We consider words over the alphabet $\types[\sigma,q]$ as structures in the usual way,
  i.e. as ordered structures over a signature containing a unary relation symbol for each type.
  Consider the words $w_{\AS},w_{\BS}\in\types[\sigma,q]^{*}$ obtained from $(\AS,\preceq^\AS)$ and
  $(\BS,\preceq^\BS)$ by contracting each component $\KS$ to a single
  element that gets labelled by its $q$-type in the corresponding
  $q$-ordered structure.
  By this construction and by Part \ref{def:q-order-disconnected} of
  Definition~\ref{def:q-order}, we know that 
  \[
  w_{\AS}=\tau_{1}^{n_{\tau_{1}}(\AS)} \dotsb
  \tau_{\ell}^{n_{\tau_{\ell}}(\AS)} \quad \text{and} \quad
  w_{\BS}=\tau_{1}^{n_{\tau_{1}}(\BS)} \dotsb
  \tau_{\ell}^{n_{\tau_{\ell}}(\BS)}.
  \]
  Since $n_{\types[\sigma,q]}(\AS) \maxeq{t}
  n_{\types[\sigma,q]}(\BS)$, for each $i\in [\ell]$, we have either
  $n_{\tau_{i}}(\AS) = n_{\tau_{i}}(\BS)$ or
  $n_{\tau_{i}}(\AS),n_{\tau_{i}}(\BS) \geq t$. A folklore result
  (cf. \cite[Ch. 3]{Libkin2004}) tells us that $w_{\AS} \foeleq[q]
  w_{\BS}$, i.e. Duplicator has a winning strategy in the $q$-round
  EF-game on the two word structures.

  We show that $(\AS,\preceq^\AS)\foeleq[q](\BS,\preceq^\BS)$.  To
  this end, consider the following winning strategy for Duplicator in
  the $q$-round EF-game on $(\AS,\preceq^\AS)$ and
  $(\BS,\preceq^\BS)$. She maintains a \emph{virtual} $q$-round
  EF-game $w_{\AS}$ on $w_{\BS}$ between a \emph{Virtual Spoiler} and
  a \emph{Virtual Duplicator}. When, during the $i$-th round, Spoiler
  chooses an element $v$ in some component $\KS$ of, say, $\AS$, she
  lets the Virtual Spoiler play the corresponding position in
  $w_{\AS}$ in the $i$-th round of the virtual game. The Virtual
  Duplicator answers in $w_{\BS}$. Duplicator chooses a component
  $\KS'$ of $\BS$ for its reply according to the Virtual Duplicator's
  answer in $w_{\BS}$. The winning strategy on $w_{\AS}$ and $w_{\BS}$
  ensures that $(\KS,\preceq^\AS) \foeleq[q] (\KS',\preceq^\BS)$ and
  that all elements of $\KS$ and $\KS'$ have the same positions in
  $\preceq^\AS$ and $\preceq^\BS$ relative to the elements played in
  the previous rounds. Duplicator uses her winning strategy in the
  $q$-round game on the ordered components to determine the element of
  $\KS'$ that she uses as her answer to $v$.
\end{proof}

For a tuple $\bar a$ of natural numbers, denote by $\redmax{\bar a}{t}$ the
tuple obtained from it by replacing all entries $> t$ with $t$.  Then
the previous lemma implies that if $\td(\AS) \leq d$, then
$\redmax{\bar n_{\types^\conn}(\AS)}{(2^q + 1)}$ determines
$\tp^{\leqsym}(\AS)$. Hence we obtain the following corollary:

\begin{corollary}
  \label{cor:tau-set}
  Let $q,d\in \Npos$ and let $t:=2^q+1$. For each $\phi \in \FO[\sigma^{\leq}]$, let
  \[
  R_\phi \ := \ \{ \redmax{\bar n_{\types^\conn}(\AS)}{t} \st \AS\in
  \fin_{\sigma,d}, \tp^\leqsym_q(\AS)\models \phi\}. 
  \]
  Then for each $\AS\in \fin_{\sigma,d}$, we have
  \[
  \tp^\leqsym_q(\AS)\models \phi
  \quad\text{if, and only if, }
  \redmax{\bar n_{\types^\conn}}{t} \in
  R_\tau.
  \]
  Furthermore, $\absval{\types[\sigma,q,d]} \leq
  (t+1)^{\absval{\types^\conn}}$.
\end{corollary}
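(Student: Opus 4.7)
The plan is to derive this corollary as a direct bookkeeping consequence of Lemma~\ref{lem:cut-determines-type}, using that $\redmax{\bar a}{t}$ is a canonical representative of the $\maxeq{t}$-equivalence class of $\bar a$. The ``only if'' direction is immediate from the definition of $R_\phi$: if $\tp^\leqsym_q(\AS) \models \phi$, then $\AS$ itself witnesses $\redmax{\bar n_{\types^\conn}(\AS)}{t} \in R_\phi$.

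For the ``if'' direction, suppose the tuple lies in $R_\phi$, and let $\BS \in \fin_{\sigma,d}$ be a witness with $\tp^\leqsym_q(\BS) \models \phi$ and $\redmax{\bar n_{\types^\conn}(\BS)}{t} = \redmax{\bar n_{\types^\conn}(\AS)}{t}$. Entry by entry, this equality says that the two counts either agree exactly or both exceed $t$, which is precisely $\bar n_{\types^\conn}(\AS) \maxeq{t} \bar n_{\types^\conn}(\BS)$. By the remark preceding Lemma~\ref{lem:cut-determines-type}, this is equivalent to $\bar n_{\types[\sigma,q]}(\AS) \maxeq{t} \bar n_{\types[\sigma,q]}(\BS)$ (any connected component of a tree-depth-$d$ structure is itself of tree-depth at most $d$). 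The lemma then yields $\tp^\leqsym_q(\AS) = \tp^\leqsym_q(\BS)$, and hence $\tp^\leqsym_q(\AS) \models \phi$.

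For the cardinality bound, the argument just given, applied to the formula that isolates a single $q$-order type, shows that the $q$-order type of $\AS$ is entirely determined by the value $\redmax{\bar n_{\types^\conn}(\AS)}{t}$. Hence $|\types[\sigma,q,d]|$ is bounded by the number of such tuples: each of the $|\types^\conn|$ coordinates takes a value in $\{0,1,\dots,t\}$, giving at most $(t+1)^{|\types^\conn|}$ possibilities. There is no genuine obstacle here; the substantive combinatorial content sits in Lemma~\ref{lem:cut-determines-type}, and the corollary is essentially a repackaging of it in terms of the truncated count vector $\redmax{\bar n_{\types^\conn}(\cdot)}{t}$.
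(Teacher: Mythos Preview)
Your proof is correct and follows exactly the route the paper intends: the corollary is stated right after the observation that $\redmax{\bar n_{\types^\conn}(\AS)}{t}$ determines $\tp^{\leqsym}_q(\AS)$ via Lemma~\ref{lem:cut-determines-type}, and you have simply spelled out both directions and the counting consequence. One tiny wording nit: when $\redmax{a}{t} = \redmax{b}{t}$ the correct dichotomy is ``$a=b$ or both are $\geq t$'' (not ``both exceed $t$''), but this is exactly what $\maxeq{t}$ says, so your conclusion is unaffected.
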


The following lemma will be used in conjunction with the previous corollary to
lift the definability of $q$-types from connected to disconnected structures.
\begin{lemma}
  \label{lem:counting-formulae}
  Let $\logl\in \set{\FO{}, \FOmod{}}$. For all $d,t\in \Npos$, every
  set of $\logl$-sen\-tences $\Phi$, and every set
  $R\subseteq[0,t]^{\vert \Phi \vert}$, there is an $\logl$-sentence
  $\psi^{\Phi}_{R}$ such that for each structure $\AS$ with $\td(\AS)
  \leq d$, we have
  \[ \AS\models \psi^{\Phi}_{R} \iff \redmax{\bar n_{\Phi}(\AS)}{t} \in R. \]
  Moreover, 
    $\size{\psi^{\Phi}_{R}} \leq c \cdot
  \absval{\Phi} \cdot \size{\Phi}\cdot \vert R \vert \cdot t^2$ and
  $\qad{\psi^{\Phi}_{R}} \leq \qad{\Phi} + 2$,
for a constant $c$ which depends only on $\sigma,d$.
\end{lemma}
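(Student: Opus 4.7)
The plan is to reduce $\psi^\Phi_R$ to a Boolean combination of counting sentences of the form ``the number of components of $\AS$ satisfying $\phi$ is at least~$k$'', using that on structures of tree-depth at most~$d$ the reachability formula $\reach_d$ defines component-membership. For each $\phi\in\Phi$ and $k\in[1,t+1]$, the building block is
\[
\chi^\phi_{\geq k}\ :=\ \exists x_1\dotsb\exists x_k\Big(\bigland_{1\leq i<j\leq k}\lnot\reach_d(x_i,x_j)\ \land\ \bigland_{i=1}^{k}\rela{\phi}{\reach_d(x_i,z)}(x_i)\Big),
\]
whose intended semantics rests on the Background observation that $\rela{\phi}{\reach_d(x,z)}(x)$ holds at~$a$ iff the component of $a$ satisfies the sentence~$\phi$. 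Setting $\chi^\phi_{\geq 0}:=\top$, I would take $\xi_{\phi,k}:=\chi^\phi_{\geq k}\land\lnot\chi^\phi_{\geq k+1}$ for $k<t$ and $\xi_{\phi,t}:=\chi^\phi_{\geq t}$, so that $\xi_{\phi,k}$ holds iff the $\phi$-component-count of $\AS$, clipped at~$t$, equals~$k$. Fixing an enumeration of $\Phi$, the target sentence is then
\[
\psi^\Phi_R\ :=\ \biglor_{\bar k\in R}\bigland_{\phi\in\Phi}\xi_{\phi,k_\phi},
\]
and correctness is immediate from the definitions. Note that the construction only adds first-order quantifiers, so it works uniformly for $\logl\in\set{\FO,\FOmod}$.

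For the size bound, one checks that $\size{\chi^\phi_{\geq k}}\leq c(k^2+k\cdot\size{\phi})$, where $c$ depends only on $\sigma$ and $d$ (via the constant size of $\reach_d$ on $\fin_{\sigma,d}$). Combining across $\Phi$ and $R$ and absorbing $k,k^2\leq t^2$ yields $\size{\psi^\Phi_R}\leq c'\cdot|\Phi|\cdot\size{\Phi}\cdot|R|\cdot t^2$, as claimed.

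The more delicate part is the alternation-depth bound. Because $\reach_d$ is an existential formula, the Background remark gives $\qad{\rela{\phi}{\reach_d}}=\qad{\phi}$, so the relativisation introduces no new alternations. In NNF, the outer $\exists^k$-block of $\chi^\phi_{\geq k}$ meets the universal subformulae $\lnot\reach_d(x_i,x_j)$ on one branch (contributing at most two blocks along that path) and $\rela{\phi}{\reach_d}$ on another (contributing at most one additional block, precisely when $\phi$ in NNF begins with a universal quantifier). Hence $\qad{\chi^\phi_{\geq k}}\leq\qad{\phi}+2$; passing to $\lnot\chi^\phi_{\geq k+1}$ flips all quantifier types but preserves alternation count, and Boolean combinations do not affect $\qad$. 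Taking the maximum over $\phi\in\Phi$ and $\bar k\in R$ gives $\qad{\psi^\Phi_R}\leq\qad{\Phi}+2$. The main care needed is exactly this alternation accounting under negation and relativisation; once the remark about $\reach_d$ being existential is invoked, everything else reduces to routine syntax-tree bookkeeping.
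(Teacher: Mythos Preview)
Your proof is correct and follows essentially the same approach as the paper: both relativise each $\phi\in\Phi$ to the component of a point via $\reach_d$, build an ``at least $k$ components satisfy $\phi$'' formula by existentially quantifying $k$ pairwise-unreachable witnesses, and take a disjunction over $R$. The only difference is cosmetic: to express ``exactly $k$ (clipped at $t$)'' you use $\chi^\phi_{\geq k}\land\lnot\chi^\phi_{\geq k+1}$, whereas the paper keeps the witness variables free and adds a single $\forall y$ asserting every $\phi$-component is reachable from some witness, closing the existential block only at the outermost level---both yield the required size and alternation bounds.
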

\begin{proof}
  Let $\Phi := \set{\phi_1, \ldots, \phi_\ell}$.
  Consider some $i\in
  [\ell]$ and let ${\tilde\phi}_{i}(x):=\rela{\phi_{i}}{\reach_{d}(x,z)}$.

  Let $n\in [t]$. We define a formula $\psi^{n}_{i}(\bar{x})$, where
  $\bar{x}:=(x_1, \ldots, x_n)$, which states that $x_1,\ldots,x_n$
  lie in distinct connected components, each of which satisfies
  $\phi_i$:
  \begin{align*}
    \psi^{n}_{i}(\bar{x}) \ := \ \smashoperator{\bigland_{j \in
        [n]}} {\tilde\phi}_{i}(x_j) \ \land \
    \smashoperator{\bigland_{j,k \in [n],\, j\neq k}} \lnot
    \reach_d(x_j,x_k).
  \end{align*}
  Observe that $\qad{\psi^{n}_{i}} \leq \qad{\Phi}$ (in particular,
  since $\reach_d$ is an existential formula) and that
  $\size{\psi^{n}_{i}} \leq c n^2 \size{\Phi} \leq c t^2 \size{\Phi}$,
  for a constant $c$ depending on $\sigma,d$ only.

  To obtain a formula which states that either the (pairwise disjoint)
  components of the $x_1, \ldots, x_n$ are the only components which
  satisfy $\phi_i$ or the number of such components is at least $t$,
  we let
  \[
  \psi^{n,t}_{i}(\bar{x})\ :=\
  \begin{cases}
    \forall y\ \lnot{\tilde\phi}_{i}(y) &\text{if }n = 0,
    \\
    \psi^{n}_{i}(\bar{x}) \land \forall y\,({\tilde\phi}_{i}(y)
    \limplies \biglor_{i\in [n]} \reach_d(y,x_i)) &\text{if $0 <
      n < t$}
    \\
    \psi^{n}_{i}(\bar{x}) &\text{if $n \geq t$}.
  \end{cases}
  \]
  Note that $\qad{\psi^{n,t}_{i}} \leq \qad{\Phi} + 1$ and
  $\size{\psi^{n,t}_{i}} \leq c \cdot \size{\psi^{n}_{i}}$, for some constant $c$
  depending on $\sigma,d$ only. (Note that $\size{\psi^n_i} \geq n$, so the
  disjunction over $i \in [n]$ is absorbed by that.) We obtain the desired
  sentence $\psi^{\Phi}_{R,t}$ by setting 
  \[
  \psi^{\Phi}_{R,t} := \biglor_{(n_1, \ldots, n_\ell)\in R} \
  \exists {\bar x}_i\,\bigland_{i \in [\ell]}  \psi^{n_i,t}_{i}({\bar x}_i),
  \]
  where ${\bar x}_i$ is a tuple of $n_i$ variables. Note that
  \[\begin{split}
    \size{\psi^{\Phi}_{R}} & \ \leq \ \absval{R} \cdot \absval{\Phi} \cdot \max_{i\in [\ell]} \size{\psi^{t}_{i}} \ \leq \ c
    \cdot \absval{R} \cdot \absval{\Phi} \cdot \size{\Phi} \cdot  t^2,
    \\
    \qad{\psi^{\Phi}_R} &\ \leq \ \max_{i\in [\ell]} \qad{\psi^{n_i,t}_{i}} + 1
    \ \leq \ \qad{\Phi} + 2\, . 
  \end{split}
  \]%
\end{proof}

Finally, we can prove our main theorem.

\begin{proof}[Proof of Theorem~\ref{thm:oifo-eq-fo}]
  By induction on the tree-depth $d$, 
  we show that for each signature $\sigma$ and each $\FO[\sigma^{\leq}]$-sentence $\phi$ with $\qr{\phi} = q$, there is an $\FO[\sigma]$-sentence $\psi_{\phi,d}$ with $\size{\psi_{\phi,d}}\in \nexp[d](q)$ and $\qad{\psi_{\phi,d}}\leq 3d$ such that for each $\AS\in \fin_{\sigma,d}$,
  we have $\AS\models \psi_{\phi,d}$ iff $\tp^\leqsym_q(\AS)\models \phi$.
  Furthermore, we show that $\absval{\types[\sigma,q,d]}\in \nexp[d](q)$ and $\absval{\ctypes[\sigma,q,d]}\in \nexp[(d-1)](q)$.
  To finish the proof, if $\phi$ is order-invariant, we let $\psi := \psi_{\phi,d}$, and we obtain that $\AS \models_{\leqsym} \phi$ iff $\AS \models \psi$. 

  Let $\ctypes[\sigma, q, d]=\set{\theta_{1}, \ldots, \theta_{\ell}}$.
  First, for each $i\in [\ell]$, we construct a sentence $\phi_i$ that
  defines $\theta_i$ on $\fin^\conn_{\sigma,d}$. If $d=1$, observe
  that any \emph{connected} structure $\AS$ of
  type~$\theta_i\in\ctypes[\sigma,q,1]$ consists of a single element. The atomic $\sigma$-type $\alpha$
  of this element determines the $q$-type of the unique $q$-order on $\AS$.
  The $\FO{}$-sentence $\phi^{\conn}_{\tau,1} \ := \ \exists x\, \alpha(x)$
  hence defines $\tau$ on $\fin^\conn_{\sigma,1}$. We obviously have
  $\size{\phi^{\conn}_{\tau,1}} \leq c \cdot \absval{\sigma}$, for some absolute
  constant $c$, and $\absval{\ctypes} \leq 2^{\absval{\sigma}} \in \nexp[(d-1)](q)$ .

  If $d>1$, we construct an $\FO$-sentence $\psi_{\theta,d-1}$ inductively for each $q$-type $\theta\in\types[\tilde \sigma, q, d-1]$.
  Let $\Psi := \setc{\psi_{\theta,d-1}}{\theta\in\types[\tilde \sigma, q, d-1]}$. 
  By induction, we obtain $\size{\Psi} \in \nexp[(d-1)](q)$, and $\qad{\Psi} \leq 3(d-1)$, and we have $\absval{\types[\tilde \sigma, q, d-1]} \in \nexp[(d-1)](q)$.
  We construct $\phi_i$ according to Lemma~\ref{lem:connected-lift},
  i.e. we let $\phi_i:=\phi^{\conn}_{\theta_i,d}$ for each $i\leq
  \ell$. Let $\Phi := \set{\phi_1, \ldots, \phi_\ell}$.
  Then there is a constant $c$ depending only on $\sigma,d$,
  such that 
  \[
  \begin{split}
    \size{\Phi} &\leq c \cdot \size{\Psi} \cdot \absval{\types[\tilde \sigma, q, d-1]}^2 \ \in \ \nexp[(d-1)](q)
    \quad\text{and}
    \\
    \qad{\Phi} &\leq \qad{\Psi} + 2 \ \leq \ 3(d-1) +2.
  \end{split}
  \]

  Now consider a sentence $\phi\in\FO[\sigma^{\leq}]$.  Let
  $R:=R_\phi$ be given by Corollary~\ref{cor:tau-set}. We apply
  Lemma~\ref{lem:counting-formulae} with $t:=2^{q}+1$ to obtain a
  sentence $\psi_{\phi,d}:=\psi^{\Phi}_{R}$.  To see that
  $\psi_{\phi,d}$ is defined correctly, consider some $\AS\in
  \fin_{\sigma,d}$.  Observe that for each $i\in [\ell]$ and each
  component $\KS$ of $\AS$, we have $\KS \models \phi_{i}$ iff
  $\tp^\leqsym_q(\KS)=\tau_i$, and thus $\bar n_{\Phi}(\AS)=\bar
  n_{\ctypes[\sigma,q,d]}(\AS)$.  Then
  \begin{align*}
  \AS \models \psi_{\phi,d}
  &
  \text{ iff }
  \redmax{\bar n_{\ctypes[\sigma,q,d]}(\AS)}{t}\in R
  &&
  \text{(by Lemma~\ref{lem:counting-formulae} and previous observation)}
  \\
  &
  \text{ iff }
  \tp^\leqsym_q(\AS)\models \phi.
  &&
  \text{(by Corollary~\ref{cor:tau-set})}
  \end{align*}
  
  By Lemma~\ref{lem:counting-formulae}, for some constant $c$ depending only on $\sigma,d$, we
  have
  \[
  \begin{split}
    \size{\psi_{\phi,d}} &\leq c \cdot \absval{\Phi} \cdot \absval{R} \cdot t^2 \cdot \size{\Phi}
    \quad\text{and}
    \\
    \qad{\psi_{\phi,d}} &\leq \ \qad{\Phi} + 1 \ \leq \ 3d\, .
  \end{split}
  \]
  Observe that $\absval{\Phi} = \ell = \absval{\ctypes} \in
  \nexp[(d-1)](q)$ by Corollary~\ref{cor:num-connected-types} and that
  $\absval{R} \leq t^{\ell} \in \nexp[d](q)$. Hence,
  $\size{\psi_{\phi,d}} \in \nexp[d](q)$.
  By Corollary~\ref{cor:tau-set}, we also obtain
  $\absval{\types}\in \nexp[d](q)$.
\end{proof}

\section{Order-invariant monadic second-order logic}
\label{sec:oimso}

\cite[Thm.~4.1]{Courcelle1996} proved that classes of graphs
definable by order-invariant MSO sentences are recognisable. Recognisable sets
of graphs of \emph{bounded tree-width} are conjectured in~\cite[Conjecture
1]{Courcelle1991} to be definable in \MSO{} with modulo-counting (\CMSO{}),
which would imply that $\oiMSO{}$ is equivalent to $\CMSO$ on these graphs. Note
that it is well-known and easy to see that, regardless of the considered class
of structures, for each sentence of modulo-counting \MSO{} there is an
equivalent \oiMSO{}-sentence. Hence, the difficult part is the construction of
an \CMSO{}-sentence for a given \oiMSO{}-sentence.

While the equivalence of recognisability and definability in \CMSO{} for graphs
of bounded tree-width is still widely considered to be open
(cf.~\cite[p.~574]{CourcelleE2012}), we show that in the further restricted case
of structures of bounded tree-depth, $\oiMSO$ collapses even to
\emph{first-order} logic with modulo counting (\FOmod{}):

\begin{theorem}
  \label{thm:oimso-eq-fomod}
  For every $d\in \Npos$ and every $\oiMSO{}$-sentence $\phi$ there is an
  $\FOmod{}$-sentence $\psi$ with $\qad{\psi} \leq 3d$
  which is equivalent to $\phi$ on $\fin_{\sigma,d}$.
\end{theorem}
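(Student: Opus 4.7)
The plan is to follow the pattern of the proof of Theorem~\ref{thm:oifo-eq-fo}, working throughout with $(\MSO,q)$-orders in place of $(\FO,q)$-orders. Definition~\ref{def:q-order}, Lemma~\ref{lem:all-q-orders-are-equivalent}, and Lemma~\ref{lem:connected-lift} are already stated in the required generality (the last one explicitly for the pair $(\logl_1,\logl_2)=(\MSO,\FOmod)$), so the induction step lifting $\FOmod$-definability of $(\MSO,q)$-types from $\fin_{\tilde\sigma,d-1}$ to $\FOmod$-definability of connected $(\MSO,q)$-types on $\fin^\conn_{\sigma,d}$ is already in place, together with the alternation-depth bookkeeping $\qad{\Phi}\leq\qad{\Psi}+2$.

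The first new ingredient, replacing Lemma~\ref{lem:cut-determines-type}, is a \emph{pumping lemma} for $\oiMSO$: there exist $t,p\in\Npos$, depending only on $q$ and $d$, such that whenever $n_\tau(\AS)\maxeq{t}n_\tau(\BS)$ and $n_\tau(\AS)\modeq{p}n_\tau(\BS)$ for every $\tau\in\ctypes[\MSO,\sigma,q,d]$, the two structures agree on $\tp^{\leqsym,\MSO}_q$. I would prove this by fixing, for each connected $\tau$, a representative $(\KS_\tau,\preceq^{\KS_\tau})$ and observing, by the Composition Lemma (Lemma~\ref{lem:ordered-comp-lemma}), that the $(\MSO,q)$-type of a $q$-ordered sum of $k$ copies of $(\KS_\tau,\preceq^{\KS_\tau})$ depends only on $k$. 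Since $\types[\MSO,\sigma,q,d]$ is finite, this sequence of types is eventually periodic in $k$ with some threshold $t_\tau$ and period $p_\tau$; setting $t:=\max_\tau t_\tau$ and $p$ to be any common multiple of the $p_\tau$ yields uniform values, and repeated applications of the Composition Lemma extend the pumping both across different connected types and across the remaining components.

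The principal new technical obstacle is the $\FOmod$-counterpart of Lemma~\ref{lem:counting-formulae}: an $\FOmod$-sentence that is true on $\AS\in\fin_{\sigma,d}$ iff, for prescribed sentences $\Phi$ and a set $R\subseteq([0,t]\times[0,p-1])^{|\Phi|}$, the vector of pairs $(\min(n_{\phi_i}(\AS),t),\ n_{\phi_i}(\AS)\bmod p)_i$ lies in $R$. The thresholded piece is provided verbatim by Lemma~\ref{lem:counting-formulae}. For the modulo-counting piece the difficulty is that a component may have unbounded size, so a na\"ive $\exists^{r\pmod*{p}}$-quantifier over a component-indicator formula counts elements, not components. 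I would remedy this by refining $\ctypes[\MSO,\sigma,q,d]$ by the number $j\in[1,f(d)]$ of tree-depth roots of a component, which is bounded by Lemma~\ref{lem:roots}: for each refined type $(\tau,j)$ the formula $\tdroot_d(x)\wedge\tilde\phi_{(\tau,j)}(x)$ picks out exactly $j\cdot n_{(\tau,j)}(\AS)$ elements, so by choosing the modulus $p$ in the pumping lemma to be coprime to $\operatorname{lcm}(1,\ldots,f(d))$ (which is always possible, since any multiple of a valid period is still a valid period) an $\exists^{r\pmod*{p}}$-quantifier applied to $\tdroot_d(x)\wedge\tilde\phi_{(\tau,j)}(x)$ recovers $n_{(\tau,j)}(\AS)\bmod p$.

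Assembling these ingredients in the same way as in Section~\ref{sec:oifo-to-fo} produces, for each type $\tau\in\types[\MSO,\sigma,q,d]$, an $\FOmod$-definition of $\tau$ on $\fin_{\sigma,d}$; order-invariance of the input sentence $\phi$ then yields $\AS\models_\leqsym\phi$ iff $\tp^{\leqsym,\MSO}_q(\AS)\models\phi$, so a disjunction over $\tau\models\phi$ of the definitions of $\tau$ is the desired $\FOmod$-sentence $\psi$, with $\qad{\psi}\leq 3d$ by the same accounting as in the $\oiFO$ proof.
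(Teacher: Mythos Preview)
Your overall strategy matches the paper's closely: induct on $d$, use Lemma~\ref{lem:connected-lift} for the pair $(\MSO,\FOmod)$ to pass from $\fin_{\tilde\sigma,d-1}$ to $\fin^{\conn}_{\sigma,d}$, establish a pumping lemma (the paper's Lemmas~\ref{lem:oimso-cut} and~\ref{lem:pumping-lemma}) showing that the $(\MSO,q)$-order type of $\AS$ is determined by the threshold-and-modulo counts $(\redmax{\bar n}{p},\redmod{\bar n}{p})$ of its component types, and then build an $\FOmod$-analogue of Lemma~\ref{lem:counting-formulae} (the paper's Lemma~\ref{lem:mod-counting-formulae}) that expresses these counts, using Lemma~\ref{lem:roots} to reduce component-counting to element-counting over tree-depth roots. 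Your pumping argument via eventual periodicity is a harmless variant of the paper's semigroup-idempotent argument.

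There is, however, a genuine gap in your modulo-counting step. You claim the pumping modulus $p$ can be chosen coprime to $L:=\operatorname{lcm}(1,\ldots,f(d))$, ``since any multiple of a valid period is still a valid period''. But closure under multiples does not help you here: if the minimal valid period $p_0$ shares a prime factor with $L$ (for instance whenever $p_0$ is even and $f(d)\geq 2$), then \emph{every} multiple of $p_0$ shares that factor, so no valid period is coprime to $L$. Hence you cannot in general invert $j$ modulo $p$, and your recovery of $n_{(\tau,j)}\bmod p$ from $j\cdot n_{(\tau,j)}\bmod p$ breaks down. The repair is easy and leaves the rest of your argument intact: keep the pumping modulus $p$ as given, but apply the modulo quantifier with the larger modulus $P:=p\cdot L$. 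Since $j\mid L\mid P$, the value $j\cdot n_{(\tau,j)}\bmod P$ determines $n_{(\tau,j)}\bmod(P/j)$, and as $p\mid P/j$ it in particular determines $n_{(\tau,j)}\bmod p$; summing over $j$ then yields $n_\tau(\AS)\bmod p$ as required, and the alternation-depth bookkeeping is unaffected.
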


In contrast to the previous sections, we do not analyse the formula size,
because it is known from \cite{GroheSchweikardt05} that (plain) \MSO{} can
define the length of orders non-elementarily more succinct than \FO{}.

For the proof of Theorem~\ref{thm:oimso-eq-fomod}, we proceed similarly to the
last section. Again we need to understand $\oiMSO{}$'s capabilities to count the
number of components of a given $q$-type in $q$-ordered structures. However,
this time we need to count not only up to some threshold, but also modulo some
fixed divisor.

For $n\in \N$ and $p\in \Npos$, we let $\redmod{n}{p}$ denote the remainder of
the division of $n$ by $p$, and $\bar n := (n_1, \ldots, n_\ell)\in \N^\ell$, we
let $\redmod{\bar n}{p} := (\redmod{n_1}{p},\ldots, \redmod{n_\ell}{p})$.
Similarly, we set $m \modeq{p} n$ if $p$ divides $m-n$, and extend this notion
to tuples $\bar m$ and $\bar n$ component-wise.

Below, we prove the following Lemma which shows that $\MSO{}$ inherits its
component counting capabilities in $q$-ordered structures from its
capabilities to distinguish orders of different lengths.

\begin{lemma}
  \label{lem:oimso-cut}
  For each $q\in \Npos$, there is a $p\in \Npos$ such that for all
  $q$-ordered structures $(\AS,\preceq^\AS)$ and $(\BS,\preceq^\BS)$,
  \[ 
  \left(
    {\bar n}_{\types[\sigma,q]}(\AS) \modeq{p} {\bar n}_{\types[\sigma,q]}(\BS)
    \text{ and }
    {\bar n}_{\types[\sigma,q]}(\AS) \maxeq{p} {\bar n}_{\types[\sigma,q]}(\BS)
  \right)
  \implies \ (\AS,\preceq^\AS) \msoeleq[q] (\BS,\preceq^\BS) . \]
\end{lemma}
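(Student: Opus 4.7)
The plan is to imitate the structure of Lemma~\ref{lem:cut-determines-type}, replacing its \FO{} threshold input with an \MSO{} analogue that supplies both a threshold \emph{and} a period. The decisive fact will be that the $(\MSO,q)$-type of the ordered sum of $n$ disjoint copies of a fixed structure is eventually periodic in $n$; this is what upgrades the pure threshold statement of Lemma~\ref{lem:cut-determines-type} to the combined modular/threshold statement we need.

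To produce $p$, let $\ell := |\types[\sigma,q]|$ and, for each $i \in [\ell]$, fix a canonical $q$-ordered representative $(\KS_i,\preceq_i)$ of $\tau_i$; its $\msoeleq[q]$-class is well-defined by Lemma~\ref{lem:all-q-orders-are-equivalent}. Write $\KS_i^{[n]}$ for the ordered sum of $n$ disjoint copies of $(\KS_i,\preceq_i)$. The sequence $(\tp_{\MSO,q}(\KS_i^{[n]}))_{n \geq 1}$ lies in a finite set, so by pigeonhole there exist $a_i < b_i$ with $\tp_{\MSO,q}(\KS_i^{[a_i]}) = \tp_{\MSO,q}(\KS_i^{[b_i]})$. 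Applying the Composition Lemma (Lemma~\ref{lem:ordered-comp-lemma}) to $\KS_i^{[c]}$ on the right then yields $\tp_{\MSO,q}(\KS_i^{[a_i+c]}) = \tp_{\MSO,q}(\KS_i^{[b_i+c]})$ for every $c \geq 0$, so the sequence is eventually periodic with period $p_i := b_i - a_i$ and threshold $t_i := a_i$. Set $p := \max(t_1,\ldots,t_\ell) \cdot \operatorname{lcm}(p_1,\ldots,p_\ell)$; then $p \geq t_i$ and $p_i \mid p$ for every $i$.

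Given this $p$, I lift via decomposition. Assume the hypotheses of the lemma. Part~\ref{def:q-order-disconnected} of Definition~\ref{def:q-order} forces $(\AS,\preceq^\AS)$ to be the ordered sum, type-block by type-block, of its $q$-ordered components, and analogously for $(\BS,\preceq^\BS)$. By Lemma~\ref{lem:all-q-orders-are-equivalent} every component of type $\tau_i$ is $\msoeleq[q]$-equivalent to $(\KS_i,\preceq_i)$, so iterating the Composition Lemma block by block yields
\[
(\AS,\preceq^\AS)\ \msoeleq[q]\ \KS_1^{[n_{\tau_1}(\AS)]} + \cdots + \KS_\ell^{[n_{\tau_\ell}(\AS)]},
\]
and likewise for $\BS$. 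For each $i$, the hypothesis forces either $n_{\tau_i}(\AS) = n_{\tau_i}(\BS)$ (so the $i$-th summands coincide up to isomorphism) or both counts $\geq p \geq t_i$ and congruent modulo $p_i$, in which case the periodicity established above delivers $\KS_i^{[n_{\tau_i}(\AS)]} \msoeleq[q] \KS_i^{[n_{\tau_i}(\BS)]}$. A final block-by-block invocation of Lemma~\ref{lem:ordered-comp-lemma} concatenates these equivalences into $(\AS,\preceq^\AS) \msoeleq[q] (\BS,\preceq^\BS)$.

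The main obstacle is qualitatively different from the \FO{} case. In Lemma~\ref{lem:cut-determines-type} one could run an \EF-game directly on the contracted words because Spoiler's element moves map cleanly onto word positions; for \MSO{} this approach breaks down, since Spoiler's set moves can split across components in essentially unconstrained ways. Routing the entire argument through the Composition Lemma sidesteps this: we never reason at the game level, but instead reduce the full statement to the single question of how $\tp_{\MSO,q}(\KS_i^{[n]})$ depends on $n$, which a bare pigeonhole step on a finite-valued sequence answers.
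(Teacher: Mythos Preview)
Your proof is correct and follows essentially the same route as the paper's. The only difference is cosmetic: the paper isolates the periodicity step as a separate Pumping Lemma (Lemma~\ref{lem:pumping-lemma}), proved by observing that the $(\MSO,q)$-types of component-ordered structures form a finite semigroup under ordered sum and invoking the existence of idempotent powers, which yields a single uniform $p$; you instead run pigeonhole separately on each sequence $(\tp_{\MSO,q}(\KS_i^{[n]}))_n$ and then combine the resulting thresholds and periods by hand. Both arguments then feed into exactly the same block-by-block decomposition of $(\AS,\preceq^\AS)$ and $(\BS,\preceq^\BS)$ and conclude via repeated application of the Composition Lemma.
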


In the following, we say that an ordered structure $(\AS,\preceq)$ is
\emph{component ordered}, if the order $\preceq$ is a sum of the
orders on the components of $\AS$, i.e.  for some enumeration
$\KS_{1}, \ldots, \KS_{n}$ of the components of $\AS$, we have
$\preceq = \relao{\preceq}{K_{1}} + \relao{\preceq}{K_{2}} + \ \dotsb
\ + \relao{\preceq}{K_{n}}$.  Observe that $q$-ordered structures are
also component ordered.  It will be convenient to have some notation
that allows us to treat component ordered structures similarly to
words. Given two ordered structures $(\AS,\preceq^\AS)$ and
$(\BS,\preceq^\BS)$, we let $(\AS,\preceq^\AS) \disunion
(\BS,\preceq^\BS):=(\AS \disunion \BS, \preceq^\AS + \preceq^\BS)$,
where $\AS \disunion \BS$ denotes the disjoint union of $\AS$ and
$\BS$ and we consider $\preceq^\AS, \preceq^\BS$ as orders on the
components of the disjoint union (via the inclusion mappings for $\AS,
\BS$).  Instead of $(\AS,\preceq^\AS) \disunion (\BS,\preceq^\BS)$, we
also write $(\AS,\preceq^\AS)(\BS,\preceq^\BS)$.  Like in the
following definition, we often omit the order to make this notation
less cluttered. For each component ordered structure $\AS$, we define
its \emph{$i$-th power} $\AS^{i}$ by $\AS^{1}:=\AS$ and
$\AS^{i}:=\AS^{i-1}\AS$ if $i >1$.

The proof of Lemma~\ref{lem:oimso-cut} rests on the following Lemma.

\begin{lemma}[Pumping Lemma]
  \label{lem:pumping-lemma}
  For each $q\in \Npos$, there is a number $p\in \Npos$ such that for
  all component ordered structures $\AS$ and all $r\in \N$, $i,j\in \Npos$, 
  \[ \AS^{r+ip} \msoeleq[q] \AS^{r+jp}. \]
\end{lemma}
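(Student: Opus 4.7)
The plan is to recast the statement as eventual periodicity of powers in a finite semigroup, by combining the Composition Lemma with a pigeonhole argument.

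First, I would observe that by the Composition Lemma (Lemma~\ref{lem:ordered-comp-lemma}), the concatenation operation $(\AS,\BS) \mapsto \AS\BS$ on component-ordered $\sigma$-structures respects $\msoeleq[q]$-equivalence in each argument. Consequently, it descends to a well-defined binary operation on the set $\mathcal{T}$ of $(\MSO,q)$-types of component-ordered $\sigma$-structures, and this operation is associative (since concatenation is associative up to isomorphism, and isomorphic structures share the same type). Since there are only finitely many $(\MSO,q)$-types over $\sigma$, $\mathcal{T}$ is a finite semigroup; let $N := \absval{\mathcal{T}}$.

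Next, I would invoke the standard pigeonhole argument that in any finite semigroup of size $N$, each cyclic sub-semigroup $\set{\tau,\tau^{2},\ldots}$ is eventually periodic with pre-period at most $N$ and period at most $N$ (apply pigeonhole to the first $N+1$ powers of $\tau$). Hence setting $p := N!$ yields $p \geq N$, and $p$ is divisible by every integer in $[1,N]$, in particular by the period of every element of $\mathcal{T}$.

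With this $p$ fixed, the verification is immediate. For any component-ordered $\AS$ with $(\MSO,q)$-type $\tau$, an easy induction using the Composition Lemma shows that $\AS^{n}$ has type $\tau^{n}$ for all $n\in\Npos$. For $r\in\N$ and $i,j\in\Npos$, the exponents $r+ip$ and $r+jp$ are both at least $p \geq N$, hence past the pre-period of $\tau$, and differ by $(j-i)p$, a multiple of the period of $\tau$; thus $\tau^{r+ip}=\tau^{r+jp}$, which gives $\AS^{r+ip} \msoeleq[q] \AS^{r+jp}$. The main point to get right is ensuring that a single $p$ works \emph{uniformly} for all $\AS$ and for all $r\in\N$, including $r=0$; this is exactly what the choice $p := N!$ buys, since $r+ip \geq p$ already absorbs any possible pre-period, so no separate case analysis for small $r$ is required.
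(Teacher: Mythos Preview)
Your proposal is correct and follows essentially the same approach as the paper: both pass via the Composition Lemma to the finite semigroup of $(\MSO,q)$-types of component-ordered structures and then exploit eventual periodicity of powers. The only cosmetic difference is that the paper invokes the standard fact (citing Howie) that some uniform $p$ makes every $\tau^{p}$ idempotent, whereas you spell out the pigeonhole argument and take $p=N!$; the resulting verification is the same.
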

\begin{proof}
  Let $\types[]$ denote the (finite) set of $q$-types which are
  realised by component ordered $\sigma$-structures.  We lift the
  disjoint union of ordered structures to $\types[]$ by defining
  $\tp_q(\AS) \disunion \tp_q(\BS) := \tp_q(\AS \disunion \BS)$.  The
  Composition Lemma (Lemma~\ref{lem:ordered-comp-lemma}) shows that
  this operation is well-defined.  It is also associative, so that
  $(\types[],\disunion)$ is a finite semigroup.  Hence, there is a
  number $p$ such that for each $\tau\in \types[]$, $\tau^p$ is
  idempotent (cf. e.g. \cite{Howie1976}), i.e. $\tau^p = \tau^{ip}$
  for each $i\in \Npos$.  Then, for all $\AS, r,i,p$ as in the
  statement of the lemma, $\tp_q(\AS)^{r + ip} = \tp_q(\AS)^{r+jp}$,
  i.e.  $\AS^{r+ip} \msoeleq[q] \BS^{r+jp}$.
\end{proof}

\begin{proof}[Proof of Lemma~\ref{lem:oimso-cut}]
  Let $\types[\sigma,q] = \set{\tau_{1}, \ldots, \tau_{\ell}}$ with
  $\tau_{i} \prec_{q} \tau_{j}$ iff $i < j$.  For each $i\in [\ell]$,
  fix a connected $q$-ordered structure $\KS_{i}$ whose type is
  $\tp_{q}(\KS_{i}) = \tau_{i}$.  By repeated application of the
  Composition Lemma, we can assume without loss of generality that
  $\KS\iso \KS_{i}$ for each $q$-ordered component $\KS$ of $\AS$ or
  $\BS$ with $\tp_{q}(\KS)=\tau_{i}$.  Let $n_{i}:=n_{\tau_{i}}(\AS)$
  and let $m_{i}:=n_{\tau_{i}}(\BS)$ for each $i\in [\ell]$. By
  part~\ref{def:q-order-disconnected} of Definition~\ref{def:q-order},
  we obtain
  \[
  \AS \iso \KS_{1}^{n_{1}} \KS_{2}^{n_{2}} \dotsb
  \KS_{\ell}^{n_{\ell}} 
  \quad \text{ and } \quad
  \BS \iso
  \KS_{1}^{m_{1}} \KS_{2}^{m_{2}} \dotsb \KS_{\ell}^{m_{\ell}}.
  \]
  For each $i\in [\ell]$, we have $n_{\tau_i}(\AS) \modeq{p}
  n_{\tau_i}(\BS)$, i.e. there are $r_{i}\in [0,p-1]$ and
  $a_{i}, b_{i}\in \N$ such that $n_{i}=r_{i}+a_{i}p$ and
  $m_{i}=r_{i}+b_{i}p$. Furthermore, as
  $n_{\tau_i}(\AS) \maxeq{p} n_{\tau_i}(\BS)$, we have $a_{i}>0$ iff
  $b_{i}>0$.  By repeated application of the Pumping Lemma, we obtain
  \[ \KS_{1}^{n_{1}} \KS_{2}^{n_{2}} \dotsb \KS_{\ell}^{n_{\ell}} \
  \msoeleq[q] \ \KS_{1}^{r_{1}+b_1p} \KS_{2}^{r_{2}+b_2p} \dotsb \KS_{\ell}^{r_{\ell}+b_\ell p}
  \ = \ \KS_{1}^{m_{1}} \KS_{2}^{m_{2}} \dotsb
  \KS_{\ell}^{m_{\ell}}. \] Hence, $\AS \msoeleq[q] \BS$.
\end{proof}

The next lemma is a modulo-counting analogue of
Lemma~\ref{lem:counting-formulae}.

\begin{lemma}
  \label{lem:mod-counting-formulae}
  For all $d,p\in \Npos$, each set of $\FOmod[\sigma]$-sentences
  $\Phi$, and each
  set $R \subseteq [0,p]^\ell \times [0,p-1]^\ell$, there is an
  $\FOmod[\sigma]$-sentence $\chi^{\Phi}_{R}$ such that for each $\AS\in \fin_{\sigma,d}$,
  \[
  \AS \models \chi^\Phi_R
  \quad\text{iff}\quad
  \left(
    \redmax{\bar{n}_{\Phi}(\AS)}{p},
    \redmod{\bar{n}_{\Phi}(\AS)}{p}
  \right) \in R.
  \]
  Furthermore,
  $
  \qad{\chi^\Phi_R}\leq \max\set{\qad{\Phi}+2, 2(d-1)+1}
  $.
\end{lemma}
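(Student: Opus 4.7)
I will mimic the construction in the proof of Lemma~\ref{lem:counting-formulae}, combining a threshold subformula with modulo-counting quantifiers from $\FOmod$. Write $\Phi=\{\phi_1,\ldots,\phi_\ell\}$ and set $\tilde\phi_i(x) := \rela{\phi_i}{\reach_d(x,z)}$, so that $\tilde\phi_i(x)$ holds iff the connected component of $x$ satisfies $\phi_i$. The sentence $\chi^\Phi_R$ will be a disjunction over $(\bar n,\bar r)\in R$ of conjunctions $\xi_{\bar n}\wedge\eta_{\bar r}$, where $\xi_{\bar n}$ pins down $\redmax{\bar n_\Phi(\AS)}{p}=\bar n$ and $\eta_{\bar r}$ pins down $\redmod{\bar n_\Phi(\AS)}{p}=\bar r$.

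The threshold half $\xi_{\bar n}$ is immediate from the previous lemma with $t:=p$: I take $\xi_{\bar n}:=\bigwedge_{i\in[\ell]}\exists\bar x_i\,\psi^{n_i,p}_i(\bar x_i)$, where $\psi^{n,t}_i$ is as constructed in the proof of Lemma~\ref{lem:counting-formulae}. This subformula alone contributes the $\qad{\Phi}+2$ summand in the stated bound.

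For $\eta_{\bar r}$ I need, for each $i$, an $\FOmod$-sentence expressing that $n_{\phi_i}(\AS)\equiv r_i\pmod p$. Since the quantifier $\exists^{r\pmod*{p}}x$ counts elements and not components, my plan is to bridge the gap by selecting, uniformly across components of the same isomorphism behaviour, a fixed number of representative elements whose per-component count is coprime to $p$. To organise this, I partition connected structures of tree-depth at most $d$ by their $\FOmod$-$q^{*}$-type for $q^{*}$ chosen at least as large as every quantifier rank appearing in $\Phi$ and in the threshold formulas. There are only finitely many such types $\tau^{*}$, every $\phi_i$ has a fixed truth value on each type, and any two components of the same type agree on the count of every $\FO$-definable subset modulo $p$; so for each $\tau^{*}$ I may attach a fixed selector $\mu_{\tau^{*}}(x)$ whose per-component count $c_{\tau^{*}}$ is coprime to $p$, and then a disjunction over tuples $(k_{\tau^{*}})_{\tau^{*}}$ of compatible type-wise residues, of sentences of the shape $\bigwedge_{\tau^{*}}\exists^{c_{\tau^{*}}k_{\tau^{*}}\pmod*{p}} x\bigl(\mu_{\tau^{*}}(x)\wedge\tilde\phi_i(x)\bigr)$, pins down all residues $n_{\phi_i}(\AS)\pmod p$ simultaneously.

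The main obstacle I anticipate is exhibiting a coprime selector $\mu_{\tau^{*}}$ when a component of type $\tau^{*}$ is highly symmetric: if its automorphism group acts transitively, every non-empty $\FO$-definable subset has size equal to the full component, which may share a factor with $p$. I plan to resolve this by a structural recursion on the tree-depth. Using $\tdroot_d(x)$ to pick a root $r$ and the interpretation $\I$ of Lemma~\ref{lem:ASrtoAS} to pass to $\AS^{[r]}$, each connected component of tree-depth $d$ becomes, after removal of $r$, a disjoint union of components of tree-depth at most $d-1$, on which the construction is applied recursively; the base case $d=1$ is trivial, as components are singletons and $\mu_{\tau^{*}}(x):=\alpha(x)$ has count $1$ in each. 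This recursion, combined with the quantifier-alternation cost of $\tdroot_d$---whose definition uses the universal sentences $\td_{\leq c}$ and their existential negations $\td_{>c}$ for each $c<d$---is what produces the $2(d-1)+1$ summand in the final $\qad$ bound.
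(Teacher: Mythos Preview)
Your threshold half $\xi_{\bar n}$ matches the paper exactly; the divergence is entirely in how you handle the residues $n_{\phi_i}(\AS)\bmod p$, and there your plan has a genuine gap. The strategy of exhibiting, for each component type $\tau^{*}$, a definable selector $\mu_{\tau^{*}}$ whose per-component count is coprime to $p$ already fails at tree-depth~$2$: take $p=2$ and let the component be a single edge. Its automorphism group is transitive, so every parameter-free definable subset has size $0$ or $2$, neither odd. Your recursive repair does not escape this. You write ``using $\tdroot_d(x)$ to pick a root $r$'', but $\tdroot_d$ does not pick a \emph{single} root---on the edge both vertices satisfy it---so there is no canonical $r$ to remove, and you cannot form $\AS^{[r]}$ inside a formula without a parameter. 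More fundamentally, you never say what statement the inner induction is supposed to prove: the statement ``a coprime selector exists for every type'' is false at $d=2$, and the alternative ``the residue of $n_{\phi_i}$ is expressible'' does not obviously transfer from $\KS^{[r]}$ back to $\KS$, since removing a root changes the set of components whose count you care about.

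The paper avoids coprime selectors altogether and takes a shorter route via Lemma~\ref{lem:roots}, the bound $\absval{\tdroot(\KS)}\le f(d)$ for connected $\KS$. It stratifies the $\phi_i$-satisfying components by the exact number $k\in[0,b]$ of tree-depth roots they possess, where $b:=f(d)$, and for each $k$ writes a single formula $\phi^{=k}(x)$ saying that the component of $x$ satisfies $\phi_i$, that $x$ is one of its roots, and that it has exactly $k$ roots. A modulo quantifier applied to $\phi^{=k}$ then reads off $k\cdot\absval{H_k}\pmod p$ where $H_k$ is the set of such components, and the sentence $\chi^{\bar r}_i$ is a finite disjunction over tuples $(a_0,\ldots,a_b)$ of conjunctions of such quantified formulas. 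The formula $\phi^{=k}$ uses the relativised root predicate $\tilde\tdroot_d$, and its internal alternations between the universal $\td_{\le c}$ and existential $\td_{>c}$ are what produce the $2(d-1)+1$ term in the $\qad$ bound---so your instinct about the \emph{source} of that term was right, but in the paper it arises from a single invocation of $\tdroot_d$, not from an external depth-$d$ recursion.
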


In contrast to Lemma~\ref{lem:counting-formulae}, the proof of
Lemma~\ref{lem:mod-counting-formulae} is not straightforward, because it is not
obvious how modulo-counting quantifiers can be used to count the number of
components satisfying a given $\FOmod{}$-sentence.  A remedy to this problem is
provided by the following Lemma~\ref{lem:roots}, which shows that the number of
tree-depth roots of each component of a graph (and hence of a structure) can be
bounded in terms of its tree-depth only.

\begin{proof}[Proof of Lemma~\ref{lem:mod-counting-formulae}]  
  Let $\Phi = \set{\phi_1, \ldots, \phi_\ell}$. For each $\bar n\in
  [0,p]^\ell$, let $\phi^{\Phi}_{\set{\bar n}}$ be given by
  Lemma~\ref{lem:counting-formulae} for $t:=p$, i.e. for each $\AS\in
  \fin_{\sigma,d}$, we have $\AS \models \phi^{\Phi}_{\set{\bar n}}$
  iff $\redmax{\bar n_\Phi(\AS)}{p}=\bar n$.
  Furthermore,
  $\qad{\phi^{\Phi}_{\set{\bar n}}} \leq \qad{\Phi} + 2$.  Below, for
  each $\bar r:=(r_1, \ldots, r_\ell)\in [0,p-1]^\ell$, $i\in [\ell]$,
  we construct a sentence $\chi^{\bar r}_i$ such that $\AS \models
  \chi^{\bar r}_i$ iff $n_{\phi_i}(\AS) \modeq{p} r_i$.  Furthermore,
  $\qad{\chi^{\bar r}_i} \leq \max\set{\qad{\Phi} + 1, 2(d-1)+2}$.  We
  can then define
  $\chi^{\Phi}_{R} \ :=\ \smashoperator{\biglor_{(\bar n, \bar r)\in R}} \big(\phi^{\Phi}_{\set{\bar n}} \land  \bigland_{i\in [\ell]} \chi^{\bar r}_{i}\big)$.
  Obviously, $\qad{\chi^{\Phi}_{R}} \leq \max\set{\qad{\Phi} + 2, 2(d-1)+2}$.
  
  Consider some $\bar r:=(r_1, \ldots, r_\ell)\in [0,p-1]^{\ell}$,
  $i\in [\ell]$, and let $\phi:=\phi_i$ and $r:=r_i$.  We define a
  formula $\phi^{=k}(x)$, such that $\AS\models \phi^{=k}(a)$, for
  $\AS\in \fin_{\sigma,d}$ and $a\in A$, iff $a$ belongs to a
  component $\KS$ of $\AS$ such that $\KS\models \phi$, $a\in
  \tdroot(\KS)$, and $\absval{\tdroot(\KS)}=k$.  Let $\tilde \phi(x)
  := \rela{\phi}{\reach_d(x,z)}$, let $\tilde
  \tdroot_d(x):=\rela{\tdroot_d(x)}{\reach_d(x,z)}(x)$, and let
  \begin{align*}
    \phi^{=k}(x):=\
    &\tilde \phi(x) \ \land\  \tilde{\tdroot}_d(x)\\[1ex]
    \land \ \exists &x_1 \ldots \exists x_k \ \Big( \smashoperator{\bigland_{j\in [k]}} \big(\tilde \tdroot_d(x_j) \land \reach_d(x_j,x) \ \land \ \smashoperator{\bigland_{j,j'\in [k],\, j\neq j'}} x_j \neq x_{j'}\big) \\[1ex]
    \land \ \forall &y\ \big(\tilde \tdroot_d(y) \land
    \smashoperator{\bigland_{j\in [k]}} y \neq x_j\big) \limplies
    \smashoperator{\bigland_{j\in [k]}} \lnot\reach_d(y,x) \Big).
  \end{align*}
  Observe that
  \begin{align*}
    \qad{\phi^{=k}} \ \leq \ &\max\set{\qad{\tilde \phi}, \qad{\tilde \tdroot_d} + 1, \qad{\reach_d} + 1}\\
    \leq \ &\max\set{\qad{\phi}, 2(d-1)+1}.
  \end{align*}
  Let the function $f$ be defined as in Lemma~\ref{lem:roots} and let
  $b:=f(d)$. Let $M\subseteq [0,p-1]^{b+1}$ be such that
  \begin{equation*}\label{eq:def-m}
      (a_0, \ldots, a_b)\in M \quad \text{ iff }\quad \sum_{k\in [0,b]}
  k \cdot a_k \modeq{p} r.
  \end{equation*}
  Now we define our formula
  $\chi^{\bar n}_i$ as 
  \[ \chi^{\bar n}_i  \ := \ \smashoperator{\biglor_{(a_1, \ldots, a_b)\in
      M}} \qquad \bigland_{k \in [0,b]}\exists^{k \cdot
    a_k\pmod*{p}}\, x \ \phi^{=k}(x)\, . \] 
  Obviously, $\qad{\chi^{\bar n}_i} \ \leq \ \max\set{\qad{\phi}, 2(d-1)+1} + 1$.

  We show that the formula is defined correctly. Let
  $\AS\in\fin_{\sigma,d}$. Recall that, according to
  Lemma~\ref{lem:roots}, $|\tdroot(\KS)| \leq b$ for each component
  $\KS$ of $\AS$.  We partition the set $H$ of components of $\AS$
  into pairwise disjoint sets $H_0, \ldots, H_b$ such that $\KS\in
  H_k$ iff $\absval{\tdroot(\KS)}=k$, for each $\KS\in H$. By
  definition of $\phi^{=k}(x)$, the number of elements $a\in A$ such
  that $\AS\models \phi^{=k}(a)$ equals $k \cdot |H_k|$. Hence, $\AS
  \models \chi^{\bar r}_{i}$ iff for some $(a_0, \ldots, a_b)\in M$,
  we have $k \cdot |H_k| \equiv k \cdot a_k \pmod*{p}$ for each $k\in
  [0,b]$. This is true iff $n_{\phi}(\AS) \equiv r \pmod*{p}$, since
  \[
  n_{\phi}(\AS) \ = \ \sum_{k\in [0,b]} k\cdot |H_k| \ \modeq{p} \
  \sum_{k\in [0,b]} k\cdot a_k  \modeq{p} r,
  \]
  for $a_0, \ldots, a_b\in [0,p-1]$ such that $|H_k| \modeq{p} a_k$
  for each $k\in [0,b]$.
\end{proof}

With these preparations, the proof of Theorem~\ref{thm:oimso-eq-fomod} is very
similar to the proof of Theorem~\ref{thm:oifo-eq-fo}.

\begin{proof}[Proof of Theorem~\ref{thm:oimso-eq-fomod}]
  The proof proceeds by induction on the tree-depth $d$.  We show that for each
  $\MSO[\sigma,\leqsym]$-sentence $\phi$ with $\qr{\phi} = q$, there is an
  $\FOmod[\sigma]$-sentence $\psi_{\phi,d}$ such that for each $\AS\in
  \fin_{\sigma,d}$, we have $\AS\models \psi_{\phi,d}$ iff
  $\tp^\leqsym_q(\AS)\models \phi$.  In particular, if $\phi$ is order-invariant,
  we let $\psi := \psi_{\phi,d}$, and we obtain $\AS \models_{\leqsym} \phi$ iff
  $\AS \models \psi := \psi_{\phi,d}$.
  
  Let $\ctypes[\sigma, q, d]=\set{\theta_{1}, \ldots, \theta_{\ell}}$.  We
  construct a sentence $\phi_i$ that defines $\theta_i$ on
  $\fin^\conn_{\sigma,d}$, for each $i\in [\ell]$. If $d=1$, the type of a
  connected structure of type $\theta_i$ is determined by the atomic $\sigma$-type
  $\alpha$ of its single element. We let $\phi^{\conn}_{\tau,1} \ := \ \exists x\,
  \alpha(x)$.  If $d>1$, for each $q$-type $\theta\in\types[\tilde \sigma, q,
  d-1]$, we obtain an $\FOmod$-sentence $\psi_{\theta,d-1}$ with
  $\qad{\psi_{\theta,d-1}} \leq 3(d-1)$.
  
  We construct $\phi_i$ according to Lemma~\ref{lem:connected-lift},
  i.e. we let $\phi_i:=\psi^{\conn}_{\theta_i,d}$ for each $i\leq
  \ell$. Let $\Phi := \set{\phi_1, \ldots, \phi_\ell}$.
  Note that $\qad{\Phi} \leq 3(d-1)+2$. 

  Now consider a sentence $\phi\in\MSO[\sigma,\leqsym]$.
  Let
  \[ R:=\lrsetc{\left(\redmax{\bar{n}_{\types[\sigma,q]}(\BS)}{p},
    \redmod{\bar{n}_{\types[\sigma,q]}(\BS)}{p}\right)}{\BS \in \fin_{\sigma,d},\tp^\leqsym_q(\BS) \models\phi} \] 
  where $p$ is given by the Pumping Lemma for $q$.
  We construct $\psi_{\phi,d}:=\psi^{\Phi}_{R}$ according to Lemma~\ref{lem:mod-counting-formulae}.
  In particular, $\qad{\psi_{\phi,d}} \ \leq \ \qad{\Phi} + 1 \ \leq 3d$.
  Consider some $\AS\in \fin_{\sigma,d}$.
  Observe that, for each component $\KS$ of $\AS$, we have $\KS \models \phi_{i}$ iff $\tp^\leqsym_q(\KS)=\tau_i$. Hence,
  $(\redmax{\bar{n}_{\Phi}(\AS)}{p},\redmod{\bar{n}_{\Phi}(\AS)}{p}) = (\redmax{\bar{n}_{\types[\sigma,q]}(\AS)}{p},\redmod{\bar{n}_{\types[\sigma,q]}(\AS)}{p})$.
  Thus
  \[ \AS \models \psi_{\phi,d} \ \iff \ (\redmax{\bar{n}_{\types[\sigma,q]}(\AS)}{p},\redmod{\bar{n}_{\types[\sigma,q]}(\AS)}{p}) = (\redmax{\bar{n}_{\types[\sigma,q]}(\BS)}{p},\redmod{\bar{n}_{\types[\sigma,q]}(\BS)}{p}) \] for some structure $\BS\in \fin_{\sigma,d}$ with $\tp^\leqsym_q(\BS)\models \phi.$
  As a consequence of Lemma~\ref{lem:oimso-cut}, this holds iff
  $\tp^\leqsym_q(\AS)\models \phi$.
\end{proof}

\section{Monadic second-order logic}
\label{sec:mso}

In \cite{ElberfeldGT12} it was proved that each $\MSO{}$-definable class of
finite graphs of bounded tree-depth is also $\FO$-definable.  Our approach
towards the results of the previous section can be adapted to obtain another
proof of this result which allows us to give an elementary upper bound on the
size of the $\FO$-sentence in terms of the quantifier-rank of the
$\MSO{}$-sentence.  Throughout this section, we assume in all notation whose
definition refers to a logic $\logl$ that $\logl=\MSO{}$.  We let
$\types:=\setc{\tp_q(\AS)}{\AS\in\fin_{\sigma,d}}$ and let
$\ctypes:=\setc{\tp_q(\AS)}{\AS\in\fin^\conn_{\sigma,d}}$.

\begin{theorem}
  \label{thm:mso-eq-fo}
  Let $d\in \Npos$ and let $\sigma$ be a signature. For each
  $\MSO[\sigma]$-sentence $\phi$ there is an $\FO[\sigma]$-sen\-tence $\psi$ with
  $\size{\psi}\in \nexp[d](\qr{\phi})$ and $\qad{\psi} \leq 2d$ that is equivalent
  to $\phi$ on $\fin_{\sigma,d}$.
\end{theorem}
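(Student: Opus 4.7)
My plan is to parallel the proof of Theorem~\ref{thm:oifo-eq-fo}, but simplified by the absence of order invariance. I will show by induction on $d$ that every $(\MSO, q)$-type $\tau \in \types$ is definable on $\fin_{\sigma, d}$ by an $\FO[\sigma]$-sentence $\psi_{\tau, d}$ of size $\nexp[d](q)$ and quantifier-alternation depth at most $2d$, so that $\psi := \biglor\{\psi_{\tau, d} : \tau \models \phi\}$ is the desired translation. The base case $d = 1$ is handled directly since connected tree-depth-$1$ structures are singletons; Lemma~\ref{lem:counting-formulae} then combines their atomic types with threshold $t = 2^q + 1$.

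For the inductive step with $d > 1$, I first construct, for each connected MSO $q$-type $\tau$, a sentence $\phi^{\conn}_{\tau, d}$ defining $\tau$ on $\fin^{\conn}_{\sigma, d}$. In contrast to Lemma~\ref{lem:connected-lift}, where a canonical root was chosen using a $q$-order, I here existentially quantify over tree-depth roots, since Lemma~\ref{lem:IndASrtoAS} guarantees that the $(\MSO, q)$-type of a connected $\AS$ is determined by the pair $(\alpha(r), \tp_q(\AS^{[r]}))$ for any $r \in \tdroot(\AS)$. Collecting the realizable pairs into
\[
R_\tau := \setc{(\alpha, \theta)}{\exists\,\AS \in \fin^{\conn}_{\sigma, d},\, r \in \tdroot(\AS),\ \alpha(\AS, r) = \alpha,\, \tp_q(\AS^{[r]}) = \theta,\, \tp_q(\AS) = \tau},
\]
and invoking the IH together with Lemma~\ref{lem:ASrtoAS}, I set
\[
\phi^{\conn}_{\tau, d}\ :=\ \exists x \Big(\tdroot_d(x) \,\wedge\, \smashoperator{\biglor_{(\alpha, \theta) \in R_\tau}}\big(\alpha(x) \wedge \I(\psi_{\theta, d-1})(x)\big)\Big).
\]
For possibly disconnected structures I will then apply Lemma~\ref{lem:counting-formulae} to the family $\Phi := \setc{\phi^{\conn}_{\tau, d}}{\tau \in \ctypes}$ with threshold $t = t(q)$.

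For this last step I need the plain-MSO analogue of Lemma~\ref{lem:cut-determines-type}: there exists $t = t(q)$ such that on $\fin_{\sigma, d}$ the $t$-threshold-bounded counts of components of each connected MSO $q$-type determine the MSO $q$-type of the whole structure. This follows from the Feferman--Vaught composition theorem for disjoint unions together with an EF-style argument. Crucially --- and this is the conceptual point that distinguishes the present proof from Theorem~\ref{thm:oimso-eq-fomod} --- disjoint union, being commutative and unordered, admits no modular counting, so threshold-counting in plain \FO suffices and the target logic need not be \FOmod.

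The main obstacle will be justifying the MSO threshold-counting claim above and verifying the $\qad{\psi_{\tau, d}} \leq 2d$ bound. For the latter, the point is that $\qad{\tdroot_d}$ is a small constant, and the $\exists x$-root quantifier in $\phi^{\conn}_{\tau, d}$ contributes no alternation along the main recursive path because $\I(\psi_{\theta, d-1})$ inductively begins with an existential block --- inherited from the outermost $\exists \bar x$ produced by Lemma~\ref{lem:counting-formulae}; combined with the $+2$ added by that lemma at the disconnected step, this yields the clean $+2$ per level. The size bound $\nexp[d](q)$ then follows from the standard double-exponential growth in the type counts $\absval{\ctypes[\MSO, \sigma, q, d]}$, as in the proof of Theorem~\ref{thm:oifo-eq-fo}.
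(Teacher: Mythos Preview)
Your architecture matches the paper's exactly: induction on $d$, connected structures handled by existentially quantifying a tree-depth root (your $\phi^{\conn}_{\tau,d}$ is essentially the paper's), and disconnected structures handled by threshold-counting component types via Lemma~\ref{lem:counting-formulae}. The genuine gap is the value of the threshold $t$. You write $t = t(q)$ and justify its existence by ``Feferman--Vaught plus an EF-style argument,'' but that line of reasoning yields at best a threshold bounded by the total number of $(\MSO,q)$-types over $\sigma$, which is non-elementary in $q$ --- and since $t$ enters the size bound both directly and through $|R| \le (t+1)^{|\ctypes|}$, this destroys the $\nexp[d](q)$ claim. The paper devotes Lemmas~\ref{lem:msocut}--\ref{lem:component-bound} to precisely this point, and the key ingredient you are missing is the small-model property of Lemma~\ref{lem:component-bound}: every connected structure in $\fin_{\sigma,d}$ is $\msoeleq[q]$-equivalent to a substructure of size $k \in \nexp[(d-1)](q)$. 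Only after replacing components by such bounded-size representatives does the explicit EF argument of Lemma~\ref{lem:mso-cut-bounded-size} give a usable threshold $t = 2^{kq} \in \nexp[d](q)$; the component-size bound is genuinely needed there, because a single set move can split the copies of a size-$k$ component into up to $2^k$ distinct sub-types, so Duplicator must have that many spare copies available per round. Your base-case choice $t = 2^q + 1$ is exactly the $k=1$ instance of this, but the inductive step needs both lemmas.

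(Your $\qad \le 2d$ analysis --- exploiting that the recursive $\psi_{\theta,d-1}$ produced by Lemma~\ref{lem:counting-formulae} begins with an existential block, so the outer $\exists x$ in $\phi^{\conn}_{\tau,d}$ adds no alternation along the main path --- is in fact sharper than the paper's own proof, which only establishes~$3d$.)
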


We also prove the following theorem in Section~\ref{sec:lower-bounds} below
which shows that the upper bound of Theorem~\ref{thm:mso-eq-fo} is essentially
optimal.

\begin{theorem}
  \label{thm:mso-lower-bound}
  There is a signature $\sigma$ such that for each $d\in \Npos$ there is an
  $\MSO[\sigma]$-sentence $\phi_d$ such that each $\FO[\sigma]$-sentence $\psi_d$
  that is $\fin_{\sigma,d}$-equivalent to $\phi_d$ has size $\size{\psi_d} \geq
  \nexp[\size{\phi_d}](0)$.
\end{theorem}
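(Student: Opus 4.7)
The plan is to establish a non-elementary succinctness gap by adapting the classical arguments of \cite{GroheSchweikardt05}, who proved analogous gaps between \MSO{} and \FO{} on strings. Since strings of length $n$ have tree-depth $\Theta(\log n)$, one cannot directly plug long strings into $\fin_{\sigma,d}$; instead, I would work with balanced tree-like structures of depth $d$, whose tree-depth equals $d$ but which still carry enough ``numerical'' information (namely $\Omega(\EXP[d](1))$ many distinguishable positions) to replay the Grohe--Schweikardt argument.

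First I would fix a signature $\sigma$ containing a binary relation symbol together with a handful of unary markers, sufficient to encode rooted labelled trees of depth~$d$. Second, I would construct the MSO sentences $\phi_d$. Following the template of \cite{GroheSchweikardt05}, $\phi_d$ uses set quantifiers to express, in constant size per level, an alignment or comparison between two ``copies'' hidden inside the structure; iterating the template over the $d$ levels of the tree gives an \MSO{}-sentence of size $\size{\phi_d}=O(d)$. The property it defines should be such that any separating \FO{}-sentence must implicitly count very large sets.

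Third, I would prove the \FO{} lower bound by an Ehrenfeucht--Fra\"iss\'e argument. For each target size~$s$ I would inductively construct a pair of structures $(\AS_d^s,\BS_d^s)\in \fin_{\sigma,d}$ such that $\AS_d^s\models\phi_d$ while $\BS_d^s\not\models\phi_d$, yet Duplicator wins the $m$-round \FO{} EF-game on $(\AS_d^s,\BS_d^s)$ for $m$ as large as $\log_2 s$. The inductive step is the usual observation that \FO{} counts only up to threshold~$2^m$ using $m$ quantifiers: passing from tree-depth $d-1$ to $d$ corresponds to attaching many indistinguishable copies of lower-depth gadgets at a common root, which allows Duplicator to absorb one extra exponential per level. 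Iterating $d$ times yields a tower of height~$d$, so any separating \FO{}-sentence must have size at least $\EXP[\Omega(d)](0)=\EXP[\Omega(\size{\phi_d})](0)$, which gives the claimed bound $\nexp[\size{\phi_d}](0)$ after adjusting constants in the choice of $\phi_d$.

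The main obstacle will be the simultaneous control of three requirements in the construction of the hard pairs: (i) the structures must genuinely lie in $\fin_{\sigma,d}$ (not just have bounded tree-width), which forces the gadgets to be ``bushy'' rather than long paths; (ii) the small sentence $\phi_d$ must distinguish the members of each pair, which requires a carefully chosen invariant that is preserved level-by-level; and (iii) Duplicator's strategy in the EF-game must combine level-wise strategies without collapsing the one-exponential gain per level. Matching the upper bound of Theorem~\ref{thm:mso-eq-fo} essentially amounts to showing that every tree-depth layer in our construction contributes exactly one exponential to both bounds, so a tight synchronisation between the inductive definition of the hard structures and the inductive translation from $\MSO{}$ to $\FO{}$ is what makes the lower bound ``essentially optimal''.
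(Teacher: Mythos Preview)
Your overall plan---shallow tree gadgets, an $\MSO$-sentence of size $O(d)$, and an EF-game lower bound yielding a tower of height $d$---is sound and would succeed, but the paper takes a more direct route that avoids the inductive EF-game machinery you describe. Instead of citing \cite{GroheSchweikardt05}, the paper imports the tree encoding $\enc(n)$ of natural numbers from \cite[Chapter~10.3]{FlumGrohe2004}: there is an $\FO$-formula $\text{eq}_d(x,y)$ of size $O(d)$ which tests whether two depth-$d$ subtrees encode the same number. The hard structures $\FS_d^n$ consist of a fixed red forest containing $\enc(0),\ldots,\enc(\tower(d)-1)$ together with $n$ blue copies of a single tree $\TS_d$ that contains every $\enc(i)$ as a subtree. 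The $\MSO$-sentence $\phi_d$ uses \emph{one} set quantifier $M$ (to carve a subtree out of each blue copy) on top of $\text{eq}_d$, and expresses that every red tree is matched by some blue subtree; thus $\FS_d^n\models\phi_d$ iff $n\ge\tower(d)$. The $\FO$ lower bound is then a one-line observation: the structures $\FS_d^k$ and $\FS_d^{k+1}$ differ only in the multiplicity of a single isomorphism type of component, so $\FO$ of quantifier rank $q$ cannot separate them once $k\ge q$.

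The contrast with your plan is that in the paper the tower arises entirely on the \emph{$\MSO$ side}, via the succinct $\text{eq}_d$ formula, while the $\FO$ side is just threshold counting of isomorphic components at the top level---no level-by-level Duplicator strategy is needed. Your approach of gaining one exponential per level through EF-composition would also work, but it front-loads the difficulty into the game analysis and leaves the $\MSO$ sentence less concrete; the paper's construction trades that for an off-the-shelf encoding lemma and a trivial lower-bound argument.
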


\subsection{From MSO to FO}

Much of the proof of Theorem~\ref{thm:mso-eq-fo} follows the proof of
Theorem~\ref{thm:oifo-eq-fo}, but we are spared of the complications that arose
in connection with the ordering of structures.  Overall, this makes the proof of
Theorem~\ref{thm:mso-eq-fo} simpler.  On the other hand, the proof of an
analogue to Lemma~\ref{lem:cut-determines-type} becomes somewhat more
complicated.

\paragraph{Counting components}

In Lemma~\ref{lem:cut-determines-type}, we did not use the fact that we consider
only structures of bounded tree-depth. Here naively ignoring the bounded
tree-depth would cause the component counting threshold for $\MSO$-sentences of
quantifier-rank $q$ to depend non-elementarily on $q$. We use the following
lemma to avoid this.

\begin{lemma}
  \label{lem:msocut}
  Let $d,q\in \Npos$. There is a $t:=t(d,q)\in \nexp[d](q)$ such that
  for all structures $\AS, \BS\in \fin_{\sigma,d}$,
  \[
  \bar n_{\types[\sigma,q]}(\AS) \maxeq{t} \bar n_{\types[\sigma,q]}(\BS)
  \quad\implies\quad
  \AS \msoeleq[q] \BS.
  \]
\end{lemma}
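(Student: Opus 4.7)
The plan is to prove the lemma by an Ehrenfeucht–Fraïssé game argument directly on the unordered structures, together with an induction on $q$. The crucial distinction from the ordered setting of Lemma~\ref{lem:oimso-cut} is that on a disjoint union, without any order linking the components, \MSO{} cannot count components modulo, so a pure threshold suffices without an additional modular period.

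First, by the \MSO{} Composition Lemma for disjoint unions (the unordered analogue of Lemma~\ref{lem:ordered-comp-lemma}), one reduces to a pumping statement: for every $\KS \in \fin^\conn_{\sigma,d}$ and all $n,m \in \Npos$ with $n = m$ or $n,m \geq t$, we have $\KS^n \msoeleq[q] \KS^m$, where $\KS^n$ denotes the disjoint union of $n$ copies of $\KS$. Indeed, grouping the components of $\AS$ and $\BS$ by MSO-$q$-type and applying the pumping statement to each group, Composition then assembles the equivalences into $\AS \msoeleq[q] \BS$.

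I would then prove the pumping statement by induction on $q$, analysing Spoiler's moves in the EF-game on $\KS^n$ vs.\ $\KS^m$. For a first-order move on an element of some copy, Duplicator answers with the isomorphic element in a previously-unused copy on the other side, which is possible whenever $n,m$ exceed the number of remaining rounds. For an \MSO{} set move $X \subseteq \KS^n$, let $T_q$ be the (finite) set of MSO-$(q-1)$-types over $\sigma \cup \{P\}$ (with $P$ a fresh unary predicate interpreting the played set) that are realised by coloured structures of the form $(\KS, S)$ for $S \subseteq \KS$, and let $n_\theta$ be the number of copies of $\KS$ in $(\KS^n, X)$ whose coloured type is $\theta \in T_q$. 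Duplicator then picks $Y \subseteq \KS^m$ such that the counts $m_\theta$ in $(\KS^m, Y)$ satisfy $n_\theta \maxeq{t(d,q-1)} m_\theta$ for every $\theta \in T_q$; combining the inductive hypothesis applied to each coloured type with Composition, this gives $(\KS^n, X) \msoeleq[q-1] (\KS^m, Y)$, as required for the remaining game.

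The combinatorial core of the argument is that such $Y$ can indeed be constructed: for any partition $(n_\theta)_{\theta \in T_q}$ of $n$, if $m \geq |T_q| \cdot (t(d,q-1)+1)$, then some partition $(m_\theta)_{\theta \in T_q}$ of $m$ (with each $m_\theta$ realisable as a count of appropriately coloured copies of $\KS$) satisfies $n_\theta \maxeq{t(d,q-1)} m_\theta$, by a direct pigeonhole count. This yields a recursion of the form $t(d,q) \leq |T_q| \cdot (t(d,q-1)+1) + q$, and using the bound $|T_q| \leq |\ctypes[\sigma \cup \{P\}, q-1, d]|$, which is elementary in $q$ with tower height $d$ (established in parallel, in the style of Corollary~\ref{cor:num-connected-types}), unrolling gives $t(d,q) \in \nexp[d](q)$. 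The main obstacle is the set-move case: one must verify that the unordered disjoint-union structure genuinely prevents \MSO{} from extracting modular information about the number of each coloured component-type, so that Duplicator's purely threshold-based matching suffices, in contrast to the modular correction that Lemma~\ref{lem:oimso-cut} requires in the ordered setting.
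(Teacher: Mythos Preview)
Your EF-game argument is sound and is in fact structurally the same as the paper's Lemma~\ref{lem:mso-cut-bounded-size}; the difference lies in how the crucial quantity $|T_q|$ is controlled, and this is where your proposal has a gap. You write $|T_q| \leq |\ctypes[\sigma \cup \{P\}, q-1, d]|$ and claim this is in $\nexp[(d-1)](q)$ ``in the style of Corollary~\ref{cor:num-connected-types}''. But Corollary~\ref{cor:num-connected-types} only bounds $|\ctypes[\sigma,q,d]|$ by $|\types[\tilde\sigma,q,d-1]|$, and a bound on $|\types|$ in turn rests on a threshold statement of exactly the kind you are proving. So ``established in parallel'' hides a genuine mutual induction on $d$ in which you simultaneously bound $t(d,\cdot)$, $|\ctypes[\cdot,\cdot,d]|$, and $|\types[\cdot,\cdot,d]|$; this can be made to work, but it is not spelled out, and the proposal as it stands is circular at that point. (Your final sentence is also slightly off: that threshold matching suffices in the unordered setting \emph{is} the content of the EF-game argument, not a separate obstacle.)

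The paper avoids this recursion by inserting one extra step. It first proves a small-model property (Lemma~\ref{lem:component-bound}): every connected structure in $\fin_{\sigma,d}$ is $\msoeleq[q]$-equivalent to one of size $k \in \nexp[(d-1)](q)$, by induction on $d$. After replacing each component by a small one, your EF-game argument goes through with the trivial bound $|T_q| \leq 2^k$ (there are at most $2^k$ colourings of a $k$-element component), so the threshold recursion unrolls immediately to $t = 2^{kq} \in \nexp[d](q)$ (Lemma~\ref{lem:mso-cut-bounded-size}). The extra lemma buys a clean decoupling of the type-count bound from the threshold bound; your route is more direct but forces you to carry the two bounds together through a simultaneous induction.
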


Lemma~\ref{lem:msocut} is an easy consequence of the following two lemmas.

\begin{lemma}
  \label{lem:mso-cut-bounded-size}
  Let $k\in \Npos$, $q\in \N$, and $t:=2^{kq}$. Let $\sigma$ be a
  signature. For all structures $\AS,\BS\in\fin_\sigma$ whose
  components each contain at most $k$ elements,
  \[
  \bar n_{\types[\sigma,q]}(\AS) \maxeq{t} \bar n_{\types[\sigma,q]}(\BS) 
  \quad\implies\quad
  \AS \msoeleq[q] \BS.
  \]
\end{lemma}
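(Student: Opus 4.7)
The plan is to prove the lemma by an Ehrenfeucht-Fra\"iss\'e game argument in which Duplicator maintains a component-wise matching invariant. At each round $r \in [0,q]$, I would assign to every component $\KS$ of $\AS$ or $\BS$ its \emph{refined state} $\rho_r(\KS)$, defined as the $(\MSO,q-r)$-type of $\KS$ in the signature expanded by one predicate per subset move and one constant per element move played so far, each interpreted as the restriction of the corresponding play to $\KS$. Because $|\KS|\leq k$, within a single component a set move selects one of at most $2^k$ possible restrictions and an element move selects one of at most $k+1$ possible positions; taking the worst case, I would set $t_r := 2^{k(q-r)}$, so $t_0 = 2^{kq}$ matches the hypothesis and $t_q = 1$.

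The heart of the argument is that Duplicator can preserve the following invariant: for every refined state $s$ at round $r$, the counts $n_s^\AS$ and $n_s^\BS$ of $s$-components in $\AS$ and $\BS$ satisfy $n_s^\AS = n_s^\BS$, or both are at least $t_r$. For the inductive step, suppose Spoiler plays a subset $X \subseteq A$ in round $r+1$. For every state $s$ previously matched exactly, Duplicator pairs $\AS$- and $\BS$-components and copies Spoiler's restrictions via the isomorphism witnessed by $\rho_r$. For every state $s$ with both counts $\geq t_r$, Spoiler's move partitions the $n_s^\AS$ state-$s$ components into at most $m \leq 2^k$ sub-states with counts $a_1,\ldots,a_m$; Duplicator must choose her $X \subseteq B$ so as to distribute her $n_s^\BS$ components into the same sub-states with counts $b_1,\ldots,b_m$ satisfying either $b_i = a_i$ or $a_i,b_i \geq t_{r+1}$. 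A short counting check, using $t_r = m \cdot t_{r+1}$ and $\sum_{a_i < t_{r+1}} a_i < (m - |I|)\cdot t_{r+1}$ where $I$ is the set of ``large'' sub-states, shows that at least $|I|\cdot t_{r+1}$ of the $\BS$-components remain to populate the large slots, so the distribution is feasible. Element moves are analogous with branching factor $\leq k+1 \leq 2^k$. After $q$ rounds, $t_q = 1$ forces exact count matches of every refined state; since a refined state at round $q$ records the full isomorphism type of the played substructure inside the component, assembling the componentwise bijections yields a partial isomorphism between the played substructures of $\AS$ and $\BS$, so Duplicator wins and $\AS \msoeleq[q] \BS$.

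The main obstacle will be setting up the refined state carefully enough that two components sharing a refined state remain interchangeable for the remaining rounds: it is important to store the full $(\MSO,q-r)$-type of the enriched component rather than just its isomorphism type, so Duplicator can invoke an inductive MSO-game winning strategy to continue playing inside each paired component, and to select matching subset-restrictions within $\BS$-components of the same state. The arithmetic check underlying Duplicator's distribution for large-count states is elementary once the right quantities are identified.
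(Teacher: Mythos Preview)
Your proposal is correct and is essentially the paper's proof: both maintain the invariant that counts of components per type agree up to a threshold $t_r = 2^{k(q-r)}$, the factor $2^k$ absorbing the at most $2^k$ possible restrictions of a set move to a $k$-element component. The paper organises this as an induction on $q$ and first invokes the Composition Lemma to reduce to the case where all components are isomorphic copies of a single structure $\KS$, which slightly simplifies Duplicator's response (she selects a literal copy rather than appealing to the $(q-r)$-round winning strategy your refined states provide); one phrasing slip---``the isomorphism witnessed by $\rho_r$'' should read ``a winning strategy witnessed by $\rho_r$''---is already corrected in your final paragraph.
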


\begin{lemma}
  \label{lem:component-bound}
  Let $d,q\in \Npos$ and let $\sigma$ be a signature. Each structure
  $\AS\in\fin_{\sigma,d}$ contains an induced substructure $\BS$ with
  $|B|\in \nexp[d](q)$ and $\AS \msoeleq[q] \BS$. If $\AS$ is
  connected, there is such a structure $\BS$ with $|B|\in
  \nexp[(d-1)](q)$.
\end{lemma}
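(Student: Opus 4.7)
\medskip

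\noindent\textbf{Proof Proposal.}
My plan is to prove both bounds simultaneously by induction on $d$, with the connected case at depth $d$ driving the disconnected case at depth $d$, and the disconnected case at depth $d-1$ driving the connected case at depth $d$. The base case $d=1$ is short: a connected structure consists of a single element, so $\BS := \AS$ gives $|B|=1\in\nexp[0](q)$; for a general $\AS\in\fin_{\sigma,1}$, every component is a singleton determined up to isomorphism by one of the $2^{|\sigma|}$ atomic $\sigma$-types, and Lemma~\ref{lem:mso-cut-bounded-size} with $k:=1$, $t:=2^q$ shows that retaining at most $t$ components of each atomic type yields an induced substructure $\BS\subseteq\AS$ with $\BS\msoeleq[q]\AS$ and $|B|\leq 2^{|\sigma|}\cdot 2^q\in\nexp[1](q)$.

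For the inductive step of the connected case with $d>1$, I would pick a root $r\in\tdroot(\AS)$. Since $\GG(\AS^{[r]})=\GG(\AS\setminus r)$, the $\tilde\sigma$-structure $\AS^{[r]}$ has tree-depth at most $d-1$, so the inductive hypothesis for the disconnected case at depth $d-1$ (applied over $\tilde\sigma$) furnishes an induced substructure $\BS'\subseteq\AS^{[r]}$ with $|B'|\in\nexp[(d-1)](q)$ and $\BS'\msoeleq[q]\AS^{[r]}$. Let $\BS$ be the induced substructure of $\AS$ on $B'\cup\{r\}$. A direct unfolding of the definition of the $[r]$-operation shows that $\BS^{[r]}=\BS'$ and trivially $\alpha(\AS,r)=\alpha(\BS,r)$, so Lemma~\ref{lem:IndASrtoAS} yields $\BS\msoeleq[q]\AS$ with $|B|=|B'|+1\in\nexp[(d-1)](q)$.

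For the disconnected case with $d>1$, apply the just-established connected case to each component $\KS_j$ of $\AS$, obtaining induced substructures $\KS'_j\subseteq\KS_j$ of size at most some $k\in\nexp[(d-1)](q)$ with $\KS'_j\msoeleq[q]\KS_j$. The induced substructure $\AS':=\disunion_j\KS'_j$ of $\AS$ satisfies $\AS'\msoeleq[q]\AS$ by standard composition for disjoint unions (derived from Lemma~\ref{lem:ordered-comp-lemma} by choosing arbitrary orders). Since every component of $\AS'$ has size at most $k$, Lemma~\ref{lem:mso-cut-bounded-size} with threshold $t:=2^{kq}$ lets me discard further components, keeping exactly $\min(t,n_\tau(\AS'))$ components of each MSO$_q$-type $\tau$, so that the resulting induced substructure $\BS\subseteq\AS'\subseteq\AS$ still satisfies $\BS\msoeleq[q]\AS$. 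The main obstacle is purely arithmetic: bounding the number of MSO$_q$-types of $\sigma$-structures of size at most $k$ by the number of isomorphism types $N\leq 2^{O(k^A)}$ (with $A$ the maximal arity in $\sigma$), one obtains $|B|\leq t\cdot N\cdot k$, and a routine computation using $k\in\nexp[(d-1)](q)$ and $t=2^{kq}$ places this product in $\nexp[d](q)$, completing the induction.
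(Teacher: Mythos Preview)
Your proof is correct and follows essentially the same route as the paper: induction on $d$, with the connected step handled by removing a tree-depth root $r$, applying the inductive hypothesis to $\AS^{[r]}$, and reassembling via Lemma~\ref{lem:IndASrtoAS}, and the disconnected step handled by first shrinking each component via the connected bound and then thresholding with Lemma~\ref{lem:mso-cut-bounded-size}. Your treatment is in fact slightly more careful than the paper's in two places---you spell out the disconnected base case $d=1$ explicitly, and your bound $N\leq 2^{O(k^A)}$ on the number of isomorphism types of size-$\leq k$ structures is correct for arbitrary arity, whereas the paper's ``$2^k$'' is really only adequate for unary signatures (though both estimates land safely in $\nexp[d](q)$).
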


Before we prove Lemma~\ref{lem:mso-cut-bounded-size} and
Lemma~\ref{lem:component-bound}, we show how to prove Lemma~\ref{lem:msocut}
with their help. The proof will also use the following variant of a standard
composition lemma, which we take for granted (we use a variant for signatures
with constants, where the constant symbols will be used in the proof of
Lemma~\ref{lem:mso-cut-bounded-size}).

The definition of the disjoint union $\AS \disunion \BS$ of structures $\AS$ and
$\BS$ can be extended to signatures with constant symbols, if the constant
symbols of $\AS$ and $\BS$ are disjoint.

\begin{lemma}[Composition Lemma]
  Let $q\in \N$. Let $\sigma_1, \sigma_2$ be signatures which may
  contain constant symbols, where the constants in $\sigma_1$ and
  $\sigma_2$ are disjoint. If $\AS_1,\BS_1$ are $\sigma_1$-structures
  and $\AS_2,\BS_2$ are $\sigma_2$-structures such that
  $\AS_1\msoeleq[q] \BS_1$ and $\AS_2\msoeleq[q] \BS_2$, then
  \[ \AS_1 \disunion \AS_2 \msoeleq[q] \BS_1 \disunion \BS_2. \]
\end{lemma}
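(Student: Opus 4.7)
The plan is a direct Ehrenfeucht--Fra\"iss\'e game argument, which is the standard route to Feferman--Vaught style composition results. Recall that $\AS\msoeleq[q]\BS$ iff Duplicator has a winning strategy in the $q$-round \MSO-EF game on $\AS$ and $\BS$, where each round Spoiler may either play a point move (pick an element in $\AS$ or $\BS$) or a set move (pick a subset). Assuming winning strategies $\mathcal{S}_1$ and $\mathcal{S}_2$ for Duplicator on $(\AS_1,\BS_1)$ and $(\AS_2,\BS_2)$, I will construct a winning strategy $\mathcal{S}$ for Duplicator on $(\AS_1\disunion\AS_2,\BS_1\disunion\BS_2)$ by simulating the two component games in parallel.

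For a point move by Spoiler, the chosen element lies in exactly one of the four universes, say in $A_1$; Duplicator forwards it to $\mathcal{S}_1$ and plays the returned element of $B_1$, viewed inside $B_1\disunion B_2$. For a set move $X\subseteq A_1\disunion A_2$, Duplicator decomposes $X=X_1\disunion X_2$ with $X_i:=X\cap A_i$, feeds each $X_i$ to $\mathcal{S}_i$ to obtain $Y_i\subseteq B_i$, and plays $Y_1\disunion Y_2$. Moves originating in $\BS_1\disunion\BS_2$ are handled symmetrically. After $q$ rounds the resulting position $(\bar a,\bar X;\bar b,\bar Y)$ splits into two positions $(\bar a_i,\bar X_i;\bar b_i,\bar Y_i)$, each of which, by construction, is a Duplicator-winning position in the respective component game and hence induces a partial isomorphism between the relevant expansions of $\AS_i$ and $\BS_i$.

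It then remains to verify that the induced bijection on the played elements of $\AS_1\disunion\AS_2$ and $\BS_1\disunion\BS_2$ is a partial isomorphism for the combined signature $\sigma_1\cup\sigma_2$. The key observation is that every atomic formula over $\sigma_1\cup\sigma_2$ involves symbols from only one of the two signatures: constants by the disjointness hypothesis (this is precisely where that hypothesis is used), and relations because in a disjoint union no relation tuple can span the two sides. Thus the truth of such a formula in the disjoint union agrees with its truth in the relevant component structure, and atomic preservation on the composite position reduces to atomic preservation on the two component positions, which holds by the winning property of $\mathcal{S}_1$ and $\mathcal{S}_2$.

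I expect no serious obstacle, since this is a textbook composition theorem for \MSO{} on disjoint unions. The only technical point requiring care is the role of the disjointness assumption on constants: without it the disjoint union $\AS_1\disunion\AS_2$ would not be a well-defined $(\sigma_1\cup\sigma_2)$-structure, and atomic formulas containing a shared constant could tie the two components together in a way the componentwise strategies cannot control. With disjointness the argument above goes through essentially unchanged from the constant-free case.
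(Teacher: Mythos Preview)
The paper does not prove this lemma at all: it introduces it with the phrase ``which we take for granted'' and simply cites it as a standard composition result (with a reference to Makowsky for the ordered version earlier). So there is no paper proof to compare against; your EF-game argument is precisely the standard proof one would supply, and it is correct.

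One small imprecision worth tightening: your sentence ``every atomic formula over $\sigma_1\cup\sigma_2$ involves symbols from only one of the two signatures'' is not literally true if $\sigma_1$ and $\sigma_2$ share relation symbols, and in any case an atomic formula like $R(x,c)$ with $R\in\sigma_1$ and a constant $c\in\sigma_2$ mixes symbols from both. The point you actually need (and which you state correctly just after) is that in the disjoint union no relation tuple spans both sides, and each constant is interpreted on a fixed side; hence the truth value of any atomic formula under a given assignment of variables is determined entirely by the restriction of that assignment to one component. That is enough for the verification step.
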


\begin{proof}[Proof of Lemma~\ref{lem:msocut}]
  With the help of Lemma~\ref{lem:component-bound} and the Composition
  Lemma, we can assume without loss of generality that $\AS$ and $\BS$
  contain only components of size at most $k\in \nexp[(d-1)](q)$. Let
  $t:=2^{kq}$ as in Lemma~\ref{lem:mso-cut-bounded-size}. Then $t\in
  \nexp[d](q)$ and hence the claim follows from
  Lemma~\ref{lem:mso-cut-bounded-size}.
\end{proof}

\begin{proof}[Proof of Lemma~\ref{lem:mso-cut-bounded-size}]
  For the proof, we consider signatures $\sigma$ which contain
  constant symbols. In this case, the components of a
  $\sigma$-structure are not necessarily $\sigma$-structures, because
  they might not contain all constants. Let $T_{\sigma,q}$ denote the
  union of the sets of $(\MSO,\sigma',q)$-types over all signatures $\sigma' \subseteq \sigma$. For $\sigma$-structures $\AS,\BS$
  and $q,t\in \Npos$, we write $\AS \approx_{q,t} \BS$ if $\bar
  n_{T_{\sigma,q}} \maxeq{t} \bar n_{T_{\sigma,q}}$.

  By induction on $q$, we prove the stronger claim that for each
  signature $\sigma$ \emph{which may contain constant symbols} and all
  $\sigma$-structures $\AS$ and $\BS$ whose components each contain at
  most $k$ elements,
  \[ \AS \approx_{q,t} \BS \ \implies \ \AS \msoeleq[q] \BS. \] Let
  $q=0$. Since $\AS \approx_{q,1} \BS$, there exists a bijection $f$
  between the sets $M_\AS, M_\BS$ of components of $\AS,\BS$ which
  contain constants. Furthermore, this bijection preserves the
  $0$-type of components, i.e. for each component $\KS\in M_\AS$ there
  exists a partial isomorphism $g_\KS$ whose domain and codomain are,
  respectively, the set of constants of $\KS$ and $f(\KS)$. These
  partial isomorphisms can be extended to a partial isomorphism $g :=
  \bigunion_{\KS\in M_\AS} g_\KS$ of $\AS$ and $\BS$ whose domain and
  codomain are, respectively, the set of constants of $\AS$ and
  $\BS$. Hence $\AS \msoeleq[0] \BS$.

  For each $q\in \N$, let $t(q):=2^{kq}$. Now let $q>0$. We consider
  the case where $\AS$ and $\BS$ contain only components of a single
  $q$-type $\tau$ over some signature $\sigma'\subseteq \sigma$. The
  general case follows by an application of the Composition Lemma. By
  a further application of the Composition Lemma, we can assume that
  all components of $\AS$ and $\BS$ are isomorphic to a single
  structure $\KS$ of type $\tau$. Now if $n_\tau(\AS)=n_\tau(\BS)$,
  then $\AS$ and $\BS$ are isomorphic, so we are done. Assume that
  $n_\tau(\AS), n_\tau(\BS) > t(q)$. We show that Duplicator wins the
  $q$-round EF-game on $\AS$ and $\BS$.
  
  Consider the first round of the game. Suppose that Spoiler plays a
  point move, i.e. he chooses an element, say, $a\in A$. Duplicator
  chooses an element $b$ corresponding to $a$ in a copy of $\KS$ in
  $\BS$. This introduces exactly one component of a new
  isomorphism-type $\tau'$ in each of $(\AS,a)$ and $(\BS,b)$. The
  remaining components of $(\AS,a)$, $(\BS,b)$ all remain their
  isomorphism-type and there are more than $t(q)-1 \geq t(q-1)$ such
  components. Hence $(\AS,a) \approx_{q-1,t(q-1)} (\BS,b)$. By
  induction, $(\AS,a) \msoeleq[q-1] (\BS,b)$. So Duplicator wins, if
  she replies by $b$.

  Suppose now that Duplicator plays a set move, say, $M\subseteq A$.
  Since $\KS$ contains at most $k$ elements, the components of the
  structure $(\AS,M)$ belong to at most $2^k$ different
  isomorphism-types. Thus the number of $q$-types cannot be greater
  either. For each $q$-type $\theta$ occurring in $(\AS,M)$, let
  $C_\theta$ denote the set of components of $\AS$ whose $q$-type is
  $\theta$. Duplicator chooses a set $C'_\theta$ of components of
  $\BS$ and a set of elements $M'_\theta\subseteq \bigunion_{\CS\in
    C'_\theta} C$ such that $\min\set{\absval{C_{\theta}},
    t(q-1)}=\min\set{\absval{C'_{\theta}}, t(q-1)}$, and $\tp_q(\CS,M'_\theta
  \intersect C)=\theta$ for each $\CS \in C'_\theta$. Since there are $t(q) > 2^k \cdot t(q-1)$
  copies of $\KS$ in $\BS$, this is possible. Let $M' :=
  \bigunion_{\theta} M'_\theta$. We have $(\AS,M) \approx_{q-1,t(q-1)}
  (\BS,M')$. So, by induction, $(\AS,M) \msoeleq[q-1]
  (\BS,M')$. Replying by $M'$, Duplicator wins.
\end{proof}

Lemma~\ref{lem:component-bound} is an adaptation of
\cite[Thm. 6.7]{NesetrilMendez2012} from \FO{} to \MSO{}.  Its proof uses the
previous lemma and the following analogue to Lemma~\ref{lem:IndASrtoAS}, which
can be proved like Lemma~\ref{lem:IndASrtoAS}.

\begin{lemma}
  \label{lem:mso-IndASrtoAS}
  Let $q\in \Npos$. Let
  $\AS,\BS\in \fin_\sigma$ be connected structures with $\td(\AS), \td(\BS) > 1$
  and let $r_\AS\in \tdroot(\AS)$,$r_\BS\in \tdroot(\BS)$
  with $\alpha(\AS,r_\AS)=\alpha(\BS,r_\BS)$.
  Then \[ \AS^{[r_\AS]} \msoeleq[q] \BS^{[r_\BS]} \ \implies \ \AS \msoeleq[q] \BS. \]
\end{lemma}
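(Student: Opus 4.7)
The plan is to adapt the Ehrenfeucht-Fra\"iss\'e game argument of Lemma~\ref{lem:IndASrtoAS} to the \MSO{} case; the only real change is how Duplicator handles set moves. Note that the additional hypotheses (connectedness, tree-depth greater than $1$, and that $r_\AS, r_\BS$ are tree-depth roots) play no role in the logical argument itself --- they just fit the context in which the lemma is applied --- so the statement is essentially the \MSO{}-instance of Lemma~\ref{lem:IndASrtoAS}.

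First, I would fix a winning strategy $\mathcal S$ for Duplicator in the $q$-round \MSO{}-EF game on $\AS^{[r_\AS]}$ and $\BS^{[r_\BS]}$ (existing by hypothesis). Since $\AS^{[r_\AS]}$ is an expansion of $\AS \setminus r_\AS$ and similarly for $\BS$, $\mathcal S$ is also a winning strategy for the same game played on the reducts $\AS \setminus r_\AS$ and $\BS \setminus r_\BS$. I would then describe Duplicator's strategy on $\AS$ and $\BS$ as follows. On a point move where Spoiler chooses $r_\AS$ (resp. $r_\BS$), Duplicator answers with $r_\BS$ (resp. $r_\AS$); on any other point move Duplicator plays according to $\mathcal S$ on $\AS \setminus r_\AS$ and $\BS \setminus r_\BS$. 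On a set move where Spoiler plays $M \subseteq A$, Duplicator feeds the restriction $M \setminus \{r_\AS\}$ into $\mathcal S$ to obtain a response set $M' \subseteq B \setminus \{r_\BS\}$, and then answers with $M' \cup \{r_\BS\}$ if $r_\AS \in M$, or with $M'$ otherwise. Set moves by Spoiler in $\BS$ are treated symmetrically.

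The verification that this preserves the partial isomorphism after every round splits into three cases exactly as in the proof of Lemma~\ref{lem:IndASrtoAS}. Relations among the played elements that involve neither $r_\AS$ nor $r_\BS$ are preserved because $\mathcal S$ is a winning strategy on $\AS^{[r_\AS]}, \BS^{[r_\BS]}$, and set memberships of elements other than $r_\AS, r_\BS$ are determined by the same $\mathcal S$. Relations that mention only $r_\AS$ (respectively only $r_\BS$) are preserved because $\alpha(\AS, r_\AS) = \alpha(\BS, r_\BS)$ by assumption, and membership of $r_\AS$ in each played set is matched by membership of $r_\BS$ by construction. Finally, mixed relations involving $r_\AS$ together with some other played elements are preserved because they are exactly encoded, via the tuple-restriction construction of $\tilde\sigma$, by the $R_I$-relations of $\AS^{[r_\AS]}$ and $\BS^{[r_\BS]}$, which $\mathcal S$ preserves.

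The only non-obvious step is the handling of set moves, which is the one place the \MSO{} argument genuinely extends the \FO{} argument: one must be careful to split each set played by Spoiler into its $r_\AS$-part and its remaining part, play the latter inside the \MSO{}-game on $\AS^{[r_\AS]}$ and $\BS^{[r_\BS]}$ using $\mathcal S$, and independently match the singleton part across. Once this decomposition is in place, the three verification cases above go through unchanged, so Duplicator wins the $q$-round \MSO{}-EF game on $\AS$ and $\BS$, yielding $\AS \msoeleq[q] \BS$.
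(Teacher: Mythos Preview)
Your proposal is correct and follows exactly the approach the paper intends: the paper merely says the lemma ``can be proved like Lemma~\ref{lem:IndASrtoAS}'' (whose proof already claims to cover $\logl=\MSO{}$), and you have correctly spelled out the one extra ingredient that proof leaves implicit, namely how Duplicator handles set moves by splitting off the root and routing the rest through $\mathcal S$. Your observation that the extra hypotheses (connectedness, $\td>1$, roots) are irrelevant to the argument is also right; indeed the lemma is literally the $\logl=\MSO$ instance of Lemma~\ref{lem:IndASrtoAS}.
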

\begin{proof}[Proof of Lemma~\ref{lem:component-bound}]
  The proof is by induction on the tree-depth $d$.  First, we consider the claim
  about connected structures.  If $d=1$, then each connected structure with
  $\td(\AS)=1$ has size $1\in \nexp[0](q)$, i.e. we can set $\BS:=\AS$.  Suppose
  now that $d>1$. Choose a tree-depth root $r\in \tdroot(\AS)$.  By induction,
  since $\td(\AS^{[r]}) \leq d-1$, we obtain an induced substructure $\BS'$ of
  $\AS^{[r]}$ such that $|B'|\in \nexp[(d-1)](q)$ and $\BS' \msoeleq[q]
  \AS^{[r]}$.  Let $\BS$ be the substructure of $\AS$ induced by $B' \union
  \set{r}$, i.e. $\BS^{[r]}=\BS'$. Since $\AS^{[r]} \msoeleq[q] \BS^{[r]}$, we
  obtain that $\AS \msoeleq[q] \BS$ in the same way as in
  Lemma~\ref{lem:IndASrtoAS}. Observe that $|B|\in \nexp[(d-1)](q)$.

  Consider the case that $\AS$ is not connected. By the construction
  above, we can replace each component $\KS$ of $\AS$ by an induced
  substructure of $\KS$ on $\nexp[(d-1)](q)$ vertices that has the
  same $q$-type as $\KS$. By the Composition Lemma, this preserves the
  $q$-type of $\AS$. Let $k\in \nexp[(d-1)](q)$ denote the maximum
  number of vertices in a component of $\AS$ after this
  replacement. By Lemma~\ref{lem:mso-cut-bounded-size}, we know that
  $\BS \msoeleq[q] \AS$ for each induced substructure $\BS$ of $\AS$
  such that $n_{\tau}(\BS) \maxeq{t} n_{\tau}(\AS)$ for each $q$-type
  $\tau$, where $t:=2^{kq}$. Since there are at most $2^k$
  non-isomorphic components in $\AS$ and we have to keep at most $t$
  copies of each such component, there is such a structure $\BS$ with
  $|B|\in \nexp[d](q)$.
\end{proof}

\paragraph{Finishing the proof}

With the preparations above, the proof of Theorem~\ref{thm:mso-eq-fo} is now
very similar to the proof of Theorem~\ref{thm:oifo-eq-fo}. 

\begin{proof}[Proof of Theorem~\ref{thm:mso-eq-fo}]
  The proof proceeds by induction on the tree-depth $d$, where we also show that
  $\absval{\types} \in \nexp[d](q)$ and $\absval{\ctypes} \in \nexp[(d-1)](q)$. 

  \paragraph{Defining types of connected structures}

  As a first step, we prove that each $q$-type $\tau\in \ctypes$ is
  $\fin_{\sigma,d}^{\conn}$-equivalent to an $\FO[\sigma]$-sentence
  $\phi^{\conn}_{\tau,d}$ such that $\size{\phi^{\conn}_{\tau,d}}\in
  \nexp[(d-1)](q)$ and $\qad{\phi^{\conn}_{\tau,d}} \leq 3(d-1)+1$.  For $d=1$,
  each structure $\AS\in\fin_{\sigma,d}^{\conn}$ of type $\tau$ consists of a
  single element of some atomic $\sigma$-type $\alpha$. The $\FO{}$-sentence
  $\phi^{\conn}_{\tau,1} := \exists x\, \alpha(x)$ then defines $\tau$. Hence
  $\size{\phi^{\conn}_{\tau,1}}$ does not depend on $q$,
  $\qad{\phi^{\conn}_{\tau,1}} = 0$, and $\absval{\ctypes} \leq \nexp[0](q)$.

  Now suppose that $d>1$ and let $\tau\in \ctypes[\sigma,q,d]$.  Let $R\subseteq
  \types[\tilde \sigma,q,d-1] \times 2^\sigma$ be a set that contains
  $(\theta,\alpha)$ iff there is a structure $\BS\in \fin_{\tilde
    \sigma,d}^{\conn}$ with $\tp_{q}(\BS)=\tau$ which contains a tree-depth root
  $r\in \tdroot(\BS)$ such that $\alpha(\BS,r)=\alpha$ and
  $\tp_{q}(\BS^{[r]})=\theta$.  Observe that, as a consequence of Lemma
  \ref{lem:mso-IndASrtoAS}, for each $\AS\in \fin^\conn_{\sigma,d}$, we have
  $\tp_q(\AS)=\tau$ iff $(\tp_q(\AS^{[r]}), \alpha(\AS,r))\in R$ for some $r\in
  \tdroot(\AS)$.  Now consider a $q$-type $\theta\in \ctypes[\tilde \sigma,q,d-1]$
  and let $\phi_{\theta,d-1}$ be the $\FO{}[\tilde \sigma]$-sentence, given by
  induction, which is equivalent to $\theta$ on $\fin_{\tilde
    \sigma,d-1}^{\conn}$.  As a consequence of Lemma~\ref{lem:ASrtoAS}, we obtain
  that for all structures $\AS\in\fin^\conn_{\sigma,d}$ with $\td(\AS) > 1$ and
  all tree-depth roots $r\in \tdroot(\AS)$, we have
  $\AS\models\I(\phi_{\theta,d-1})(r)$ iff $\tp_q(\AS^{[r]})=\theta$.

  Altogether, we obtain that the following $\FO[\sigma]$-sentence is equivalent to $\tau$ on $\fin^\conn_{\sigma,d}$:
  \[ \phi^{\conn}_{\tau,d}\ :=\ (\td\leq 1 \land \phi_{\tau,d-1})
  \ \lor \ \smashoperator{\biglor_{(\theta,\alpha) \in
      R}}\exists x\ \big(\tdroot_d(x) \ \land \ \alpha(x)\
  \land\ \I(\phi_{\theta,d-1})(x)\big)\, .\]
  
  Recall that, by induction, $\size{\I(\phi_{\theta,d-1})}\in \nexp[(d-1)](q)$
  and $\absval{\types[\tilde \sigma, q,d-1]}\in \nexp[(d-1)](q)$.  Hence,
  $\absval{R} \in \nexp[(d-1)](q)$.  Altogether, we obtain that
  $\size{\phi^{\conn}_{\tau,d}} \in \nexp[(d-1)](q)$.  Using
  Lemma~\ref{lem:mso-IndASrtoAS}, we conclude that
  $\absval{\ctypes[\sigma,q,d]}\leq 2^\sigma \cdot \absval{\types[\tilde
    \sigma,q,d-1]} \in \nexp[(d-1)](q)$. By induction, $\qad{\I(\phi_{\theta,d-1})}
  \leq 3(d-1)$. Hence, $\qad{\phi^{\conn}_{\tau,d}} \leq 3(d-1)+1$.

  \medskip

  \paragraph{Structures with multiple components}
  Consider an $\MSO[\sigma]$-sentence $\phi$.  Let
  $\ctypes:=\set{\tau_{1}, \ldots, \tau_{\ell}}$, where
  $\ell:=|\ctypes|$. Let $t:=t(d,q)\in \nexp[d](q)$ be given by
  Lemma~\ref{lem:msocut}.  Let $\Phi$ be the set that contains the
  formulae $\phi_{i}:=\phi^{\conn}_{d,\tau_{i}}$ for each
  $i\in[\ell]$. Hence, $\bar{n}_{\Phi}(\AS) \maxeq{t}
  \bar{n}_{\ctypes}(\AS)=\bar{n}_{\types[\sigma,q]}(\AS)$ for each $\AS\in \fin_{\sigma,d}$.  Let $R\subseteq
  [0,t]^{\ell}$ be a set such that $\bar{n}\in R$ iff there exists a
  model $\AS\in\fin_{\sigma,d}$ of $\phi$ with
  $\redmax{\bar{n}_{\Phi}(\AS)}{t}=\bar{n}$.  Using
  Lemma~\ref{lem:msocut}, we obtain that $\AS\models \phi$ iff
  $\redmax{\bar n_{\Phi}(\AS)}{t}\in R$, for each
  $\AS\in\fin_{\sigma,d}$. Hence, the $\FO[\sigma]$-sentence
  $\psi:=\psi^\Phi_R$ of Lemma~\ref{lem:counting-formulae} is
  equivalent to $\psi$ on $\fin_{\sigma,d}$.

  Regarding the size of $\psi$, note that Lemma~\ref{lem:msocut} implies that
  $\absval{R} \ \leq \ \absval{\ctypes} \ \leq \ [0,t]^{\ell}$.
  Since
  \begin{align*}
    t^\ell \ \in \ (\nexp[d](q))^{\nexp[(d-1)](q)} \ = \ &(2^{\nexp[(d-1)](q)})^{\nexp[(d-1)](q)}
    \\ = \ &2^{\nexp[(d-1)](q) \cdot \nexp[(d-1)](q)}\\
    \subseteq\ &2^{\nexp[(d-1)](q)} \ = \ \nexp[d](q)
  \end{align*}
  we obtain that, by the construction of $\psi$ according to Lemma~\ref{lem:counting-formulae},
  \begin{align*}
    \size{\psi} \ \leq \ &c \cdot \absval{\Phi} \cdot \size{\Phi} \cdot \absval{R} \cdot t^2.\\
    \in\ &\nexp[(d-1)](q) \cdot \nexp[d](q) \cdot \nexp[d](q)^2 \cdot \nexp[(d-1)](q)\\
    \subseteq\ &\nexp[d](q),
  \end{align*}
  and $\qad{\psi} \leq \qad{\Phi} + 2 \leq 3d$.
\end{proof}

\subsection{A lower bound}
\label{sec:lower-bounds}

The proof of Theorem~\ref{thm:mso-lower-bound} uses an encoding of large natural
numbers $n$ by shallow trees $\enc(n)$ from \cite[chapter 10.3]{FlumGrohe2004}.
Here, by \emph{trees}, we mean directed trees which are rooted, i.e. trees which
contain a root vertex from which all edges point away.  The encoding is defined
inductively as follows:
\begin{itemize}
\item $\enc(0)$ is the one-node tree.
\item For $n\geq 1$, the tree $\enc(n)$ is obtained by creating a new
  root and attaching to it all trees $\enc(i)$ such that the $i$-th
  bit in the binary representation of $n$ is $1$.
\end{itemize}
Note that a tree encodes a number with respect to this encoding iff there are no
two distinct isomorphic subtrees whose roots are children of the same
vertex. But we would like to assign a natural number to each tree. To this end,
we reduce each tree $\TS$ in a bottom-up way to a tree $\num(\TS)$ that encodes
a number:
\begin{itemize}
\item $\num(\TS):=\TS$ if $\height(\TS)=1$, i.e. $\TS\iso \enc(0)$.
\item If $\height(\TS) > 1$, select one tree $\TS_1, \ldots, \TS_k$ of each
  isomorphism type that occurs among the immediate subtrees of the root of
  $\TS$. Define $\num(\TS)$ to be a tree whose root has children whose rooted
  subtrees are $\num(\TS_1), \ldots, \num(\TS_k)$.  
\end{itemize}
Throughout the following section, we let $\sigma:=\set{E,R,B}$, where $E$ is a
binary and $R,B$ are unary relation symbols.  We consider a tree as a
$\set{E}$-structure $\TS$ where $E^\TS$ is the edge relation of the tree.  A
\emph{coloured tree} is a finite $\sigma$-structure $(\TS,R^\TS,B^\TS) $, where
$\TS$ is a tree and $R^\TS,B^\TS$ (the \emph{red} and the \emph{blue} vertices
of $\TS$) form a partition of the vertex set of the tree.  Structures whose
components are (coloured) trees are called (coloured) \emph{forests}.  The
\emph{height} $\height(\TS)$ of a (coloured) tree $\TS$ is the maximum number of
vertices on a path from the root of $\TS$ to a leave of $\TS$.  The
\emph{height} $\height(\FS)$ of a (coloured) forest $\FS$ is the maximum height
of its components.

From the proof of \cite[Lemma~10.21]{FlumGrohe2004}\footnote{\cite[Lemma~10.21]{FlumGrohe2004}
  makes the assumption that $\TS_{1},\TS_{2}$ are encodings of
  numbers $n,m$ to conclude that $\FS\models \text{eq}_d(u_1,u_2) \
  \iff \ n=m$, i.e. $\TS_{1} \iso \TS_{2}$. If we drop this assumption, we
  obtain our variant of the lemma using exactly the same formula.}, we
obtain the following lemma.

\begin{lemma}
  \label{lem:eq-formula}
  For each $d\in\Npos$, there is an $\FO[E]$-formula $\text{eq}_d(x,y)$
  of size $\size{\text{eq}_d}\in \bigo(d)$ such that for all forests
  $\FS$ with $\height(\FS)\leq d$ and all trees $\TS_1, \TS_2$ of $\FS$ with roots
  $u_1,u_2$, respectively, we have:
  \[ \FS\models \text{eq}_d(u_1,u_2) \ \iff \ \num(\TS_1)=\num(\TS_2). \]
\end{lemma}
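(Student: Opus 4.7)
The plan is to define $\text{eq}_d(x, y)$ by induction on $d$ so that it expresses ``every immediate subtree of the tree rooted at $x$ has the same $\num$-value as some immediate subtree of the tree rooted at $y$, and vice versa''. The main obstacle is the size bound: the naive inductive definition duplicates the recursive call (once for ``every child of $x$ matches some child of $y$'' and once for the reverse), giving $\size{\text{eq}_d} = 2^{\Omega(d)}$ rather than $\bigo(d)$. To get linear size I would combine the two matching directions into a single $\forall z_1\, \exists z_2$ block that invokes $\text{eq}_{d-1}$ exactly once.

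Concretely, set $\text{eq}_1(x, y) := \top$, which is correct because every forest of height $1$ consists of single-vertex trees, all with $\num = \enc(0)$. For $d > 1$, let
\[
\text{match}(x, y, z_1, z_2) \ := \ \bigl(E(x, z_1) \limplies E(y, z_2)\bigr) \ \land \ \bigl(E(y, z_1) \limplies E(x, z_2)\bigr)
\]
and define
\[
\text{eq}_d(x, y) \ := \ \forall z_1\, \exists z_2\, \bigl( \text{match}(x, y, z_1, z_2) \ \land \ \text{eq}_{d-1}(z_1, z_2) \bigr).
\]
Since $\text{eq}_{d-1}$ occurs once and the overhead is constant, $\size{\text{eq}_d} = \size{\text{eq}_{d-1}} + \bigo(1)$, yielding $\size{\text{eq}_d} \in \bigo(d)$.

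To verify correctness I would first establish a syntactic reflexivity lemma: $\text{eq}_d(x, x)$ is a tautology on every structure, since choosing $z_2 := z_1$ always satisfies $\text{match}(x, x, z_1, z_1)$ and, inductively, $\text{eq}_{d-1}(z_1, z_1)$. Then I would prove the main equivalence by induction on $d$, assuming $\FS$ has height at most $d$. For completeness, suppose $\num(\TS_1) = \num(\TS_2)$; given $z_1$, I choose $z_2$ to be a child of $u_2$ with matching $\num$-value when $z_1$ is a child of $u_1$, a symmetric choice when $z_1$ is a child of $u_2$, and $z_2 := z_1$ otherwise (using the syntactic reflexivity of $\text{eq}_{d-1}$ to absorb nodes lying outside both subtrees). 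For soundness, take an arbitrary child $z_1$ of $u_1$ and extract the promised $z_2$; the match condition forces $E(u_2, z_2)$, so $z_2$ is a child of $u_2$, and the inductive hypothesis applied to the subtrees rooted at $z_1, z_2$ (which have height at most $d - 1$) yields $\num$-equality. The symmetric argument covers children of $u_2$, and the recursive definition of $\num$ then gives $\num(\TS_1) = \num(\TS_2)$. One subtlety worth checking is that $\text{eq}_{d-1}$ is evaluated in the full forest rather than inside isolated subtrees; this is harmless because children of $z_1$ in the subtree and in $\FS$ coincide (since $\FS$ is a disjoint union of rooted trees), and the reflexivity clause handles all irrelevant pairs outside the two designated subtrees.
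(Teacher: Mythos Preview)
Your construction is correct and is essentially the standard Flum--Grohe formula (their Lemma~10.21), which the paper cites rather than re-proving. The only point worth making fully explicit is that the induction hypothesis should be formulated for arbitrary nodes $u_1,u_2$ in an arbitrary forest whose rooted subtrees have height at most $d$ (not merely for roots of trees in a height-$d$ forest), so that it applies directly to the recursive call $\text{eq}_{d-1}(z_1,z_2)$ evaluated in $\FS$; you already flag exactly this under ``one subtlety worth checking''.
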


Note that $\height(\enc(n))\leq d$ provided that $n<\tower(d)$, where $\tower(d) := \nexp[d](0)$.
For each $d\geq 1$, let $\FS_d$ denote a coloured forest that contains exactly the trees $\enc(0), \ldots,
\enc(\tower(d)-1)$ whose vertices all are coloured red, let $\TS_d$ denote a
coloured tree with $\height(\TS_d) \leq d$ that contains each of the trees
$\enc(0), \ldots, \enc(\tower(d-1)-1)$? as subtrees (e.g. a full $\tower(d-1)$-ary
tree) and where all vertices are blue, and let $\FS_d^n$ denote the disjoint
union of $\FS_d$ and $n$ disjoint copies of $\TS_d$, for each $n\geq 0$.

\begin{lemma}
  \label{lem:mso-lower-bound-main-lemma}
  For each $d\in \Npos$, there exists an $\MSO[\sigma]$-sentence
  $\phi_d$ of size $\bigo(d)$ such that $\FS_d^n \models \phi_d$ iff $n
  \geq \tower(d)$.
\end{lemma}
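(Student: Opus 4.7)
My plan is to construct $\phi_d$ as an $\MSO$-sentence that asserts the existence of an injection from the $\tower(d)$ red components of $\FS_d^n$ to its blue components. Because the red components are the pairwise non-isomorphic trees $\enc(0),\dots,\enc(\tower(d)-1)$ and the blue components are all isomorphic copies of $\TS_d$, such an injection exists if and only if $n\geq \tower(d)$, which is exactly what $\phi_d$ should express.

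To express ``$r_R$ is matched to $r_B$'' in $\FO$ of size $\bigo(d)$, I will introduce auxiliary $\FO$-formulas: $\mathrm{root}(x):=\lnot\exists y\, E(y,x)$, a reachability formula $\reach_{d+1}(x,y)$ obtained by iterating $E$ at most $d+1$ times, and $\text{eq}(x,y):=\text{eq}_{d+1}(x,y)$ from Lemma~\ref{lem:eq-formula}; each has size $\bigo(d)$. The second-order variable $M$ plays the role of a label: for each blue component $C$, the intersection $M\cap C$ picks out a set of vertices of $C$, and the \emph{set of $\num$-values} of these vertices is the label of $C$ under $M$. Since $\TS_d$ contains every $\enc(j)$ with $j<\tower(d-1)$ as a subtree, any subset of $\{0,\dots,\tower(d-1)-1\}$ can be realised as such a label, and hence any number $k<\tower(d)=2^{\tower(d-1)}$ can be encoded by a suitable choice of $M\cap C$. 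The sentence is then
\[
\phi_d\ :=\ \exists M\, \forall r_R\,\Big(\big(R(r_R)\wedge \mathrm{root}(r_R)\big) \limplies \exists r_B\,\big(B(r_B)\wedge \mathrm{root}(r_B)\wedge \mathrm{match}(r_R,r_B,M)\big)\Big),
\]
where $\mathrm{match}(r_R,r_B,M)$ states ``the set of $\num$-values of $M$-marked vertices in the component of $r_B$ coincides with the set of $\num$-values of the children of $r_R$''. This is expressed by two $\forall\exists$-statements that each use one copy of $\text{eq}$ and of $\reach_{d+1}$, so $\size{\phi_d}\in\bigo(d)$.

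For the backward direction ($n\geq \tower(d)\implies \FS_d^n\models\phi_d$), I will fix any injection from red roots to blue roots and, for each matched pair with $r_R=\enc(k)$, place into $M$ one representative root of the subtree $\enc(j)$ inside the assigned blue component for every bit $j$ in the binary representation of $k$; by construction $\mathrm{match}$ then holds. For the forward direction, the pairwise distinctness of the encodings makes the matching automatically injective: two red roots matched to the same $r_B$ would be forced to have the same set of children-$\num$-values, hence the same $\num$-value, contradicting $\enc(k)\neq \enc(k')$ for $k\neq k'$. The main obstacle I expect is the careful height bookkeeping: the forest $\FS_d^n$ has height up to $d+1$ because of the red trees, so I must apply Lemma~\ref{lem:eq-formula} at parameter $d+1$ (not $d$) and verify that $\text{eq}_{d+1}$ correctly compares the relevant pairs---children of red roots and $M$-marked vertices in blue components---while still yielding size $\bigo(d)$.
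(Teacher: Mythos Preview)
Your argument is correct and yields $\size{\phi_d}\in O(d)$. It shares the paper's overall architecture---existentially quantify a set $M$ that assigns a number to each blue component and then use the $\text{eq}_d$ formula to witness an injection from the $\tower(d)$ red trees into the blue ones---but implements the encoding one level lower. The paper lets $M$ carve out a \emph{connected} subtree of each blue copy of $\TS_d$ and compares the whole red tree to that subtree via $\text{eq}_d$ \emph{relativised to $M$}; this also forces an auxiliary $\MSO$-formula $\conn(M)$ into the sentence. You instead let $M$ mark arbitrary vertices of a blue component and compare the set of $\num$-values of the red root's children with the set of $\num$-values of the $M$-marked vertices, using the unrelativised $\text{eq}$. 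Your variant buys you freedom from both $\conn(M)$ and the relativisation of $\text{eq}$, at the price of a slightly longer $\mathrm{match}$ predicate (two $\forall\exists$-clauses rather than a single $\text{eq}$ call); the injectivity argument via uniqueness of binary representations is clean and goes through as you describe. One small correction: the forest $\FS_d^n$ has height at most $d$, not $d+1$---both $\enc(k)$ for $k<\tower(d)$ and $\TS_d$ have height $\leq d$ by definition---so $\text{eq}_d$ already suffices; using $\text{eq}_{d+1}$ is harmless for the size bound, but the stated reason for the bump is inaccurate.
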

\begin{proof}
  Let $d\in \Npos$ and let $\text{eq}_d(x,y,M)$ be the
  relativisation of the $\FO[E]$-formula of Lemma~\ref{lem:eq-formula}
  to a set variable $M$. Let $\text{conn}(M)$ be an $\MSO[E]$-formula
  which states in a forest $\FS$ that for each tree $\TS$ of $\FS$,
  the structure induced by $M$ in $\TS$ is connected, i.e. a tree.
  Let $\text{root}(x,M)$ state that $x$ is a root in the
  subforest induced by $M$.
  We can assume that the size of $\text{conn}(M)$ and $\text{root}(x,M)$ is independent
  of $d$.
  Now let $\phi_d$ be the following sentence:
  \begin{align*}
    \exists M\ \Big(\conn(M) \ \land \ \forall x\ \big(&R(x) \land \text{root}(x,M)\big)  \limplies\\
    &\exists y\ \big(\text{root}(y,M) \ \land \ B(y) \ \land \
    \text{eq}_h(x,y,M)\big)\Big).
  \end{align*}
  First we argue that $n \geq \tower(h)$ implies $\FS^n_d\models \phi_d$.
  By definition, the red trees contained in $\FS_d^n$ are $\enc(0),
  \ldots, \enc(\tower(d)-1)$. Since $n\geq \tower(h)$, we can choose
  $\tower(h)$ pairwise distinct copies $\HS_0, \ldots, \HS_{\tower(h)-1}$ of $\TS_d$
  in $\FS_d^n$. Since all trees $\enc(0), \ldots, \enc(\tower(d)-1)$
  occur as subtrees of $\TS_d$, for each $i\in [0,\tower(d)-1]$
  there is a set $M_i\subseteq H_i$ with $\rela{(\HS_n[M_n])}{E}\iso
  \enc(i)$.  The set $M:=M_1 \union \dotsb \union M_n$ witnesses that $\FS_d^n\models
  \phi_d$.
  
  Now we show that $\FS_d^n\models \phi_d$ implies $n \geq \tower(h)$.
  Let $M\subseteq F_d^n$ witness that
  $\FS_d^n\models \phi_d$. The forest $\FS_d^n$ contains trees
  $\enc(0),\ldots,\enc(\tower(d)-1)$ whose vertices are all red. Hence, and according to the
  choice of $M$ and the choice of $\text{eq}_h(x,y,M)$, for
  each $i\in [0,\tower(d)-1]$ there is a
  blue copy $\TS$ of $\TS_d$ in $\FS_d^n$ such that
  $\num(\TS[M])=\num(\enc(i))=i$. Hence $\FS_d^n$ must contain at least
  $\tower(h)$ copies of $\TS_d$, because $M$ induces at most one tree in
  each copy of $T_h$.
\end{proof}

Using Lemma~\ref{lem:mso-lower-bound-main-lemma}, we can easily finish the proof
of Theorem~\ref{thm:mso-lower-bound}.

\begin{proof}[Proof of Theorem~\ref{thm:mso-lower-bound}]  
  $\FO$-sentences of quantifier-rank $q$ cannot distinguish $\FS_d^k$
  from $\FS_d^{k+1}$ for each $k\geq q$. Hence an $\FO$-sentence
  $\psi_d$ that is equivalent to the $\MSO$-sentence $\phi_d$ of
  Lemma~\ref{lem:mso-lower-bound-main-lemma} must have quantifier-rank
  $\qr{\psi_d}\geq \tower(d)$ and in particular $\size{\psi_d} \geq
  \tower(d)$.
\end{proof}

\section{Defining Bounded-Depth Tree-Decompositions in FO}
\label{sec:canondecomp}

For every finite relational signature $\sigma$ and every $k \in \N$
there is a set $\Sigma(\sigma,k)$ of labels such that information
about a $\sigma$-structure $\AS$ of tree-width at most $k$ may be
encoded into a $\Sigma(\sigma,k)$-labelled tree $T_\AS$. This encoding
may be chosen so that the original structure $\AS$ can be interpreted in
$T_\AS$ by an $\MSO$-interpretation. One such encoding is presented in
details in~\cite[Section~11.4]{FlumGrohe2004}.\footnote{That $\AS$ can
  be $\MSO$-interpreted in $T_\AS$ is not proved there but easy to see.}

The question of whether there is an interpretation in the converse
direction, i.e. whether some tree $T_\AS$ representing a width-$k$
tree-decomposition of $\AS$ can be $\MSO{}$-interpreted in $\AS$, is still
open. In particular, interpretability of such a decomposition would
imply that recognisability equals $\CMSO$-definability for graphs of
bounded tree-width.

In this section we show that for graphs of bounded tree-\emph{depth},
there is even an $\FO{}$-interpretation of a bounded-depth
tree-decomposition. Since the interpretation we give here is not
parameterised we obtain a canonical tree-decomposition, though not one
of optimal depth or width. The $\FO{}$-interpretation is given by
formulae $\epsilon_d(x,y)$ and $\alpha_d(x,y)$ for every $d \geq 1$
such that if $\AS$ is a $\sigma$-structure of tree-depth at most $d$
then
\begin{itemize}
\item $\epsilon_d$ defines an equivalence relation $\sim_\AS := \{ (u,v) \st \AS
  \models \epsilon_d[u,v] \}$ on $A$, 
\item the equivalence classes of $\sim_\AS$ have size bounded by a
  function of $d$,
\item the relation defined by $\alpha_d$ is invariant under $\sim_\AS$,
  i.e. if $u \sim_\AS u'$ and $v \sim_\AS v'$, then
  \[
  \AS \models \alpha_d(u,v) \iff \AS \models
  \alpha_d(u',v'),\text{ and}
  \]
\item $\alpha_d$ defines a rooted tree structure on the quotient structure $\AS/\!\!\!\sim_{\AS}$,
  in which $[u]_{\sim_\AS}$ is an ancestor of $[v]_{\sim_\AS}$ or vice
  versa whenever $u,v \in A$ are adjacent in the Gaifman graph of
  $\AS$.
\end{itemize}
This can be turned into a bounded-depth tree-decomposition in the
usual sense by taking the tree structure on $\AS/\!\!\sim_\AS$ as the tree
and setting 
$\{ v \st [v]_{\sim_\AS}\text{ is an ancestor of }[u]_{\sim_\AS} \}$
as the bag of the node $[u]_{\sim_\AS}$.

The key insight we use is Lemma~\ref{lem:roots} which says that for any fixed
$d$ there are at most $f(d)$ many candidates which may be placed at the root of
a tree-decomposition of $\AS$ of minimum height. We have already seen at the end
of Section~\ref{sec:background} that there is an $\FO{}$-formula
$\tdroot_d(x)$ such that $\AS \models \tdroot_d[r]$ iff $r$ is such a
candidate. We recursively build a tree-decomposition $\mathcal{T}_\AS$ of $\AS$ of
height at most $d$ by placing, in each step, \emph{all} candidate roots into the
root-bag of our tree-decomposition and then recursing on the components of the
remaining graph. Note that even if $\td(\AS) = d$, not all components of $\AS \setminus R$,
where $R$ is the set of at most $f(d)$ root nodes, necessarily have tree-depth
$d-1$, so we must be a bit careful which elements we place into the root of the
next level.

We fix a tree-depth $d$ and recursively define \FO-formulae
$\varphi_i$ for $i = 0,\ldots,d$ with the intended meaning that,
in a structure $\AS$ of tree-depth $d$ with $a \in A$,
$\AS \models \varphi_i[a]$ iff $a$ is on the $i$-th level of the
tree-decomposition, which we denote by $L_{i}$: 
\[\begin{split}
\varphi_0(x) &:= \bot
\\
\varphi_i(x) &:= 
\bigvee_{j=1}^{d-i} \left(\rela{\td_{=j+1}}{\neg\varphi_{<i}} \wedge
 \rela{\td_{=j}}{\neg(\varphi_{<i} \vee z \dot = x)} \right)
\end{split}
\]
Here, $x$ is the free variable of $\varphi_i$ and $z$ is the free
variable of the formulae used in the restrictions. With the
abbreviations
\[
\varphi_{< i}(x) := \bigvee_{j < i}
\varphi_j(x)\quad\text{and}\quad
\varphi_{\leq i}(x) := \bigvee_{j \leq i}
\varphi_j(x)
\]
we define
\[\begin{split}
\psi_0(x,y) &:= \top
\\
\psi_{i+1}(x,y) &:=
\rela{\reach_{d-i+1}}{\neg \varphi_{\leq i}},
\end{split}
\]
i.e. $\psi_i(u,v)$ holds iff $u$ and $v$ are in the same connected
component of $\AS \setminus \bigcup_{j \leq i} L_j$. We can now define an
equivalence relation on $\AS$ as follows:
\[
\epsilon_d(x,y) :=
\bigvee_{1 \leq i \leq d}
(\varphi_i(x) \wedge \varphi_i(y) \wedge \psi_i(x,y)),
\]
i.e. two elements are equivalent iff they appear on the same level of
our tree-decomposition and are in the same connected component of $\AS$
after removing the levels above $x$ and $y$. This is equivalent to
saying that $x$ and $y$ appear in the same node of our
tree-decomposition.

Let $\gamma(x,y)$ be a formula which expresses
that to elements are adjacent in the Gaifman graph of a structure.
Finally, We define tree edges (directed towards the root) by
\[
\alpha_d(x,y) :=
\bigvee_{1 \leq i < d}
(\varphi_i(x) \wedge \varphi_{i+1}(y) \wedge
\exists u\exists v\,(\gamma(u,v) \wedge \epsilon(x,u) \wedge \psi_{i+1}(y,v))).
\]

\begin{figure}[htb]
\begin{center}
  \resizebox{0.9\textwidth}{!}{\input{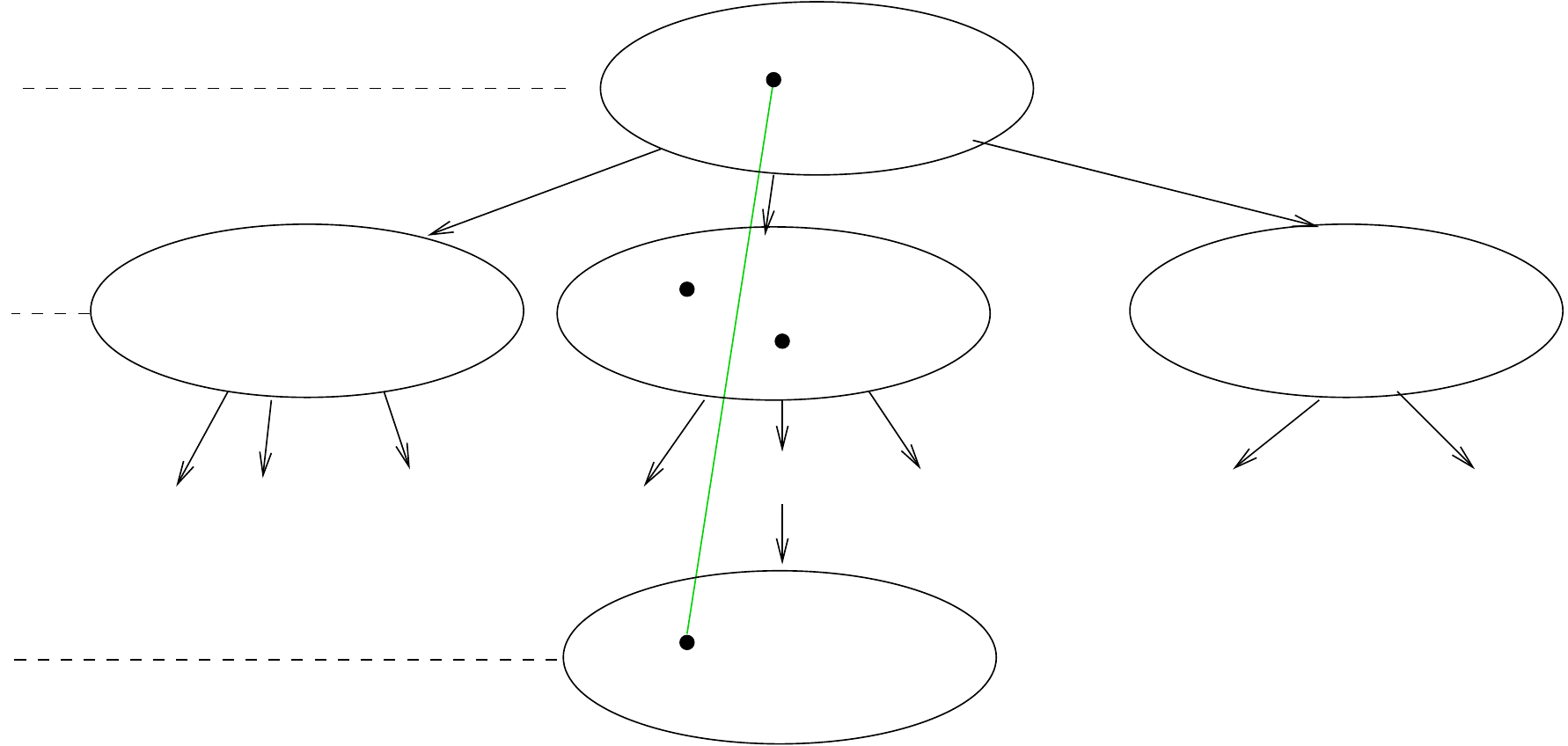_t}}
\end{center}
\caption{The canonical tree-decomposition defined in \FO.}
\label{fig:canondecomp}
\end{figure}

\section{Conclusion}
\label{sec:conclusion}

We have investigated the expressive power and the relative
succinctness of different classes of logical formulae on structures of
bounded tree-depth $d$.  In particular, we have shown that, if a class
$\class C$ of such structures is $\MSO{}$-definable or
order-invariantly $\FO{}$-definable, then it is also
$\FO{}$-definable.  For $\MSO{}$-definable classes, this was already
known. But, in both cases, our approach also shows that the size of
the $\FO{}$-sentence which defines $\class C$ is at most $d$-fold
exponential in the quantifier-rank of a given order-invariant $\FO{}$-
or $\MSO{}$-sentence which defines $\class C$.  For $\MSO{}$-formulae,
we have proved that this upper bound on the size of the
$\FO{}$-sentence is essentially optimal.  It would be interesting to
know if there is a corresponding lower bound for the result about
order-invariantly $\FO{}$-definable classes.

One motivation to consider bounded tree-depth graphs was the role of
these graphs in the theory of sparse graphs which has been outlined in
the book \cite{NesetrilMendez2012}. This link has been exploited in
several results about the algorithmic behaviour of logics on sparse
structures. Can our results on order-invariant $\FO{}$-sentences on
bounded tree-depth structures be used to obtain results about such
sentences on more general classes of sparse structures?

An interesting extension of order-invariance is
\emph{addition-invariance} where sentences are not only allowed to use
some linear order but also the graph of the addition operation that is
induced by the embedding of a structure into the natural numbers that
comes with the linear order.  The paper \cite{Schweikardt2010}
obtained a characterisation of the classes of structures which are
addition-invariantly $\FO{}$-definable over unary signatures, i.e. on
structures of tree-depth $1$.  Each such class of structures is
definable in $\FO{}_{\textup{card}}$, i.e. the extension of $\FO{}$
with nullary predicates $C_{m}$, for all positive integers $m$, which
state that the cardinality of a structure is divisible by $m$.  Our
proofs hinge on the composition method and there is no obvious way how
these methods could be extended to addition-invariant formulae.  Does
addition-invariant $\FO{}$ have the same expressive power as
$\FO{}_{\textup{card}}$ on bounded tree-depth structures?

\
\paragraph{Acknowledgements} 

We want to thank Isolde Adler for bringing the first two authors together with
the third author, and Nicole Schweikardt for her helpful suggestions.

\bibliographystyle{plain}
\bibliography{paper}

\end{document}